\documentclass[12pt]{article}
\usepackage{bm}
\usepackage{amsmath}
\usepackage{amssymb, amscd, amsthm,amsfonts}
\usepackage[dvips]{graphicx}
\usepackage{verbatim}

\newtheorem{theorem}{Theorem}
\newtheorem{lemma}{Lemma}

\newtheorem{corollary}{Corollary}

\newtheorem{rmk}{Remark}
\usepackage{pict2e}

\begin{document}

\title{On the degradations of Binary-Input Discrete Memoryless Channels\footnote{This work was supported by the National Natural Science Foundation of China (No. 61977056).}}
\author{Yadong Jiao, Xiaoyan Cheng, Yuansheng Tang\footnote{Corresponding author.}\\
{\it\small School of Mathematical Sciences, Yangzhou University, Jiangsu, China}\\
and Ming Xu\footnote{Email addresses: dx120210046@stu.yzu.edu.cn(Y. Jiao), xycheng@yzu.edu.cn(X. Cheng), ystang@yzu.edu.cn(Y. Tang), mxu@szcu.edu.cn(M. Xu)}
\\
{\it \small Suzhou City University, Jiangsu, China}}
\date{}
\maketitle

\begin{abstract}
For the polar codes introduced by Arikan in 2009, the first code family achieving the capacity of binary-input discrete memoryless channels (BIDMCs) with low-complexity encoding and decoding, it is crucial to evaluate the reliability of the synthetic channels resulted in the code construction. Since the synthetic channels have an output alphabet that grows exponentially with the code length, an effective method for faithfully evaluating their reliability is to replace them with degradations of manageable alphabet size. The main aim of this paper is to find the optimal degradations of symmetric BIDMCs. We determine all the degradations with the minimum probability of error decoding and give a necessary condition for the degradations with the maximum symmetric capacity. Finally, based on this necessary condition, we propose an efficient algorithm for finding degradation schemes that maximize the symmetric capacity.
\end{abstract}
\vspace{1ex}
{\noindent\small{\bf Keywords:}
    	Polar codes; BIDMCs; degradation; probability of error decoding; symmetric capacity}

\section{Introduction}
In various digital signal processing scenarios, quantization techniques are commonly employed to map large alphabets to smaller ones, serving as a critical means to achieve efficient signal representation and processing. In communication systems, quantizers are core components of receiver design, effectively controlling algorithmic complexity and hardware resource consumption, with their performance directly determining the achievable transmission rate and overall reliability of the system. Thus, designing high-performance channel quantizers has long been an important research topic in the field of communications. Notably, the output of a quantizer shares inherent similarities with the decision results of decision trees, and tree-based classification methods (including classification trees, regression trees, and decision trees) are widely applied in machine learning and pattern recognition \cite{Chou91}. A typical example is handwritten character recognition, where features of handwritten characters are extracted to estimate their corresponding letters and digits. Theoretically, the discrete channel quantization problem is also closely related to classification problems in statistical learning theory \cite{Bishop06}.

Another recent important application of quantization techniques is closely tied to polar codes \cite{Arikan09}. The construction of polar codes is equivalent to evaluating the decoding error probability of each synthetic channel in a set of synthesized channels, but direct evaluation is infeasible due to the extremely large size of the output alphabets of these synthetic channels. One effective approach to address this issue is to degrade or upgrade the synthetic channels to be evaluated into channels with controllable output alphabet sizes \cite{Vardy13}. To the best of the authors' knowledge, this method is currently the only tunable and deterministic solution. In particular, in wiretap channels, degrading and upgrading techniques play a crucial role in ensuring the security and reliability of communication \cite{Shamai10, Skoglund10, Vardy11, Gamal12, Vardy2013}.

The degradation algorithm proposed in \cite{Vardy13} is a greedy-based suboptimal quantization algorithm for binary-input discrete memoryless channels (BIDMCs). When the size of the output alphabet exceeds a predefined threshold, two output symbols are selected and merged (rendering them indistinguishable), with the same operation performed on their symmetric conjugate symbols. The selection of symbols to merge is based on minimizing the loss of mutual information between the channel input and output under the assumption of a uniform input distribution. This merging process reduces the output alphabet size by two, and the procedure is repeated until the alphabet size reaches the predefined threshold. Reference \cite{Pedarsani11} provides an error analysis and approximation guarantees for the channel degradation method described in \cite{Vardy13}. Recently, \cite{Kartowsky19} further proved the optimality of this greedy algorithm in the sense of power-law. Reference \cite{Yagi14} proposes an optimal degradation algorithm for BIDMCs using dynamic programming, which can find a degraded channel that meets the target output alphabet size requirement and maximizes the mutual information between the input and output of this channel. Furthermore, \cite{Ozawa14} applies the SMAWK algorithm \cite{Wilber87} to the dynamic programming algorithm proposed in \cite{Yagi14} by proving the Monge property of mutual information, significantly reducing the computational complexity of this optimal degradation algorithm.

In the aforementioned studies, mutual information is chosen as the objective function for BIDMC degradation, primarily because mutual information represents the theoretical upper bound on the achievable coding rate of a channel \cite{Shannon48}. For different objective functions (such as error exponents, $\alpha$-mutual information, etc.), \cite{Kurkoski12} and \cite{Iwata17} also propose corresponding optimal degradation algorithms for BIDMCs based on dynamic programming. Related studies on the degradation of general discrete memoryless channels \cite{Sharov12, Tal17, Barg18, Kartowsky19, Ordentlich21} can be regarded as generalizations and extensions of the channel degradation methods in \cite{Vardy13} and \cite{Yagi14}.

In this paper, we conduct the following research work. First, we provide a complete characterization of all symmetric degradations of an arbitrary symmetric BIDMC. Second, we identify all P-degradations with the minimum decoding error probability in an arbitrary symmetric BIDMC. Third, we prove that all P-degradations can be viewed as degradations of P$^{\ast}$-degradations composed of segments of the original channel. Finally, we prove that any C-optimal degradation with the maximum symmetric capacity must be a special form of P$^{\ast}$-degradations, namely P$^{+}$-degradations, and derive a necessary condition for this property. Using this condition, we develop an efficient algorithm for maximum symmetric capacity degradations that is superior to that in \cite{Ozawa14}.

The structure of the paper is outlined as follows: In Section~\ref{sec02}, we present key definitions and fundamental properties of BIDMCs, including the likelihood ratio profile, random switching operations, channel equivalence, symmetry, and the concept of channel degradation. In Section~\ref{sec03}, we present a full characterization of all symmetric degradations for an arbitrary symmetric BIDMC, and prove that this degradation relationship remains invariant when an identical channel component is either removed or added. In Section~\ref{sec04}, we identify all P-degradations (i.e., degradations that minimize decoding error probability) for any symmetric BIDMC, and derive a concise criterion for recognizing (4-4)-P-degradations. In Section~\ref{sec05}, we prove that any P-degradation is a degradation of a P$^{\ast}$-degradation, which is composed of components formed from segments of the original channel. We further show that at least three splitting patterns must be adjusted to obtain an upgraded P$^{\ast}$-degradation. In Section~\ref{sec06}, we demonstrate that any C-optimal degradation (i.e., degradations that maximize symmetric capacity) must be a P$^{+}$-degradation that is a P$^{\ast}$-degradation of a special form, and provide a necessary condition for this property. In Section~\ref{sec07}, we propose
an algorithm to generate all candidate C-optimal degradations. Simulation results confirm that the method introduced in Section~\ref{sec06} efficiently yields high-quality estimated channels for Arikan transformations. Finally, we develop an efficient algorithm to find degradation schemes that achieve the maximum symmetric capacity.

\section{Preliminaries}
\label{sec02}
In this paper, for any random variable $v$ the notation $v$ may also express, a concrete value in its sample space, or the probabilistic event that $v$ takes a concrete value, if there is no confusion. For any real number $\varepsilon\in[0,1]$, let $\overline{\varepsilon}$ denote $1-\varepsilon$. For any positive integer $n$, let $[n]$ denote the set $\{1,2,\ldots,n\}$ of integers.

Let $W:x\in\mathcal{X}\mapsto y\in\mathcal{Y}$ be a \emph{binary-input discrete memoryless channel} (BIDMC), where the input $x$ is always supposed to be a random variable {\it uniformly distributed} in $\mathcal{X}=\mathbb{F}_2=\{0,1\}$, the finite field of two elements, and the output alphabet $\mathcal{Y}$ is a discrete set. The \emph{maximum likelihood decoding} (MLD) of the BIDMC $W$ decodes $\hat{y}\in\mathcal{Y}$ into 0 if $\mathcal{L}_W(\hat{y})\geq1$ and 1 otherwise, where $\mathcal{L}_W(\hat{y})=\Pr(y=\hat{y}|x=0)/\Pr(y=\hat{y}|x=1)$ is the \emph{likelihood ratio} (LR) of $\hat{y}$. The {\it decoding error probability} for MLD of $W$ is given by
\begin{equation}\label{01}
	P_{\epsilon}(W)=\frac{1}{2}\sum_{y\in\mathcal{Y}}\min\{\Pr(y|x=0),\Pr(y|x=1)\}.
\end{equation}
The {\it symmetric capacity} of $W$ is defined as
\begin{equation}\label{02} I(W)=\sum_{y\in\mathcal{Y}}\sum_{x\in\mathcal{X}}\frac{1}{2}\Pr(y|x)
\log_2\frac{\Pr(y|x)}{\frac{1}{2}\Pr(y|x=0)+\frac{1}{2}\Pr(y|x=1)},
\end{equation}
which is equal to the {\it mutual information}
$I(x;y)=H(x)+H(y)-H(x,y)$
between the output $y$ and the input $x$.

For $\varepsilon\in[0,1]$, let $P_W(\varepsilon)$ denote the probability $\Pr(y\in L_{W}(\varepsilon))$, where
\begin{align}\label{03}
L_W(\varepsilon)=\{\hat{y}\in\mathcal{Y}: \mathcal{L}_W(\hat{y})=\varepsilon/\overline{\varepsilon}\}.
\end{align}
Since $\{L_W(\varepsilon)\}_{\varepsilon\in[0,1]}$ forms a partition of the output alphabet $\mathcal{Y}$ that is distinguished by LRs, the function $P_W(\varepsilon)$ defined on $[0,1]$ is referred to as the {\it LR-profile} of $W$. Notably, analogous to the Blackwell measure \cite{GR20} and the L/D/G-densities \cite{Urbanke08}, the LR-profile can characterize BIDMCs as LR constitutes a sufficient statistic with respect to decoding \cite{Urbanke08}. Two BIDMCs $W$ and $W'$ are said to be \emph{equivalent}, and denoted $W\cong W'$, if their LR-profiles coincide, i.e.,
$P_{W}(\varepsilon)=P_{W'}(\varepsilon)$ for all $\varepsilon\in[0,1]$.

For any BIDMC $W:x\in \mathcal{X}\mapsto y\in\mathcal{Y}$, another
BIDMC $W':x'\in \mathcal{X}\mapsto y'\in\mathcal{Y}'$ is referred to as a \emph{degradation channel} of $W$, and written as $W'\preccurlyeq W$, if there is a channel $Q:\mathcal{Y}\rightarrow \mathcal{Y}'$ such that
\begin{align}\label{04}
\Pr(y'|x'=a)=\sum_{y\in\mathcal{Y}}\Pr(y|x=a)Q(y'|y),
\end{align}
where $Q(b'|b)$ is the probability of that $Q$ transmits $b\in\mathcal{Y}$ into $b'\in\mathcal{Y}'$. $Q$ is referred to as the \emph{intermediate channel} of the degradation, and $W$ is also referred to as an \emph{upgradation channel} of $W'$. Clearly, the degradation of BIDMCs has {\it transitivity}. Furthermore, BIDMCs can also be distinguished by degradation (c.f. \cite{Blackwell1953,GR20,JCT24}, that is,
\begin{align}\label{05}
W\preccurlyeq W'\preccurlyeq W \text{\ if and only if\ } W\cong W'.
\end{align}

Assume that $\{\mathcal{Y}_1,\ldots,\mathcal{Y}_n\}$ is a partition of the output alphabet $\mathcal{Y}$ of the BIDMC $W$ such that, for each $j\in[n]$, the probability $\Pr(y\in\mathcal{Y}_j| x=a)=q_j$ is independent of $a\in\mathcal{X}$. For $j\in[n]$, let $W_j:x_{j}\in\mathcal{X}\mapsto y_j\in\mathcal{Y}_j$ denote the synthetic BIDMC with transition probabilities
\begin{equation*}
	\Pr(y_j=b| x_{j}=a)=\Pr(y=b| x=a)/q_{j}, \text{ for any }b\in\mathcal{Y}_j\text{ and }a\in\mathcal{X}.
\end{equation*}
Since the input $x$ of $W$ can be regarded as being randomly transmitted over the sub-channels $W_1,\ldots,W_n$, with transmission over $W_j$ occurring with probability $q_j$ for each $j\in[n]$, $W$ is termed a \emph{random switching channel} (RSC) of $W_1,\ldots,W_n$ (c.f. \cite{Tang21}), and is formally expressed as
\begin{align}\label{06}
W=\sum_{j\in[n]}q_jW_j.
\end{align}
Clearly, we have
\begin{align}\label{07}
P_W(\varepsilon)=\sum_{j\in[n]}q_jP_{W_j}(\varepsilon),\ \varepsilon\in[0,1].
\end{align}
Furthermore, if $W'=\sum_{j\in[n]}q'_jW'_j$ is a BIDMC independent of $W$, where $W'_j$ is equivalent to $W_{j}$ for each $j\in[n]$, then for any $p\in[0,1]$ we have (c.f. \cite{JCT24})
\begin{align}\label{08}
pW+\overline{p}W'\cong\sum_{j\in[n]}pq_jW_j+\sum_{j\in[n]}\overline{p}q'_jW'_j
\cong\sum_{j\in[n]}(pq_j+\overline{p}q'_j)W_j,
\end{align}
which indicates that the notation (\ref{06}) admits some natural operations.

For any function $f:[0,1]\rightarrow \mathbb{R}$, let $I_{f}$ be the functional on the collection of BIDMCs defined by
\begin{align}\label{09}
I_{f}(W)=\mathbf{E}(f(\mathcal{L}_{W}(y)/(1+\mathcal{L}_{W}(y))))=\sum_{\varepsilon\in[0,1]}f(\varepsilon)P_{W}(\varepsilon),
\end{align}
where $y$ is the output of $W$. Clearly, for any RSC $W=\sum_{j\in[n]}q_jW_j$ we have
\begin{align}\label{10}
I_{f}(W)=\sum_{j\in[n]}q_jI_{f}(W_{j}).
\end{align}
Therefore, for any BIDMC $W$ and functions $f_{c}(\varepsilon)=1-\hbar(\varepsilon)$ and $f_{e}(\varepsilon)=\min\{\varepsilon, \overline{\varepsilon}\}$, where $\hbar(\varepsilon)=-\varepsilon\log_{2}\varepsilon-\overline{\varepsilon}\log_{2}\overline{\varepsilon}$ is the binary entropy function, we have (c.f. \cite{Urbanke08,GR20})
\begin{align}\label{11}
I(W)=I_{f_{c}}(W),\ P_{\epsilon}(W)=I_{f_{e}}(W).
\end{align}
Therefore, according to (\ref{05}) and the convexity of $f_{c}$, $f_{e}$ one can deduce easily (c.f. \cite{JCT24}) the following well-known results
\begin{align}\label{12}
I(W_{0})\leq I(W_{1}),\ P_{\epsilon}(W_{0})\geq P_{\epsilon}(W_{1}), \text{ if } W_{0}\preccurlyeq W_{1}.
\end{align}

A BIDMC $W$ is said \emph{symmetric} (c.f. \cite{JCT24}) if its LR-profile is symmetric with respect to $1/2$, i.e., $P_W(\varepsilon)=P_W(\overline{\varepsilon}),\ \text{for }\varepsilon\in[0,1]$. For $\varepsilon\in[0,1]$, let $\mathrm{B}(\varepsilon)$ denote the BSC with crossover probability $ \varepsilon $. It is clear that $\overline{p}\mathrm{B}(0)+p\mathrm{B}(1/2)$ is equivalent to the BEC $\mathrm{E}(p)$ with erasure probability $p$, where $\mathrm{B}(0)$ and $\mathrm{B}(1/2)$ are the noiseless channel and the completely noisy channel respectively. Clearly, the BIDMC $W$ is symmetric if and only if it is equivalent to an RSC of some BSCs. For $n\geq 1$, let $\mathbb{B}_n$ denote the set of symmetric BIDMCs that are equivalents of RSCs of $n$ BSCs.

For $n>1$, let $\mathbb{B}^*_n=\mathbb{B}_n\setminus\mathbb{B}_{n-1}$.
Clearly, BIDMC $W$ belongs to $\mathbb{B}^*_n$ if and only if
$W\cong\sum_{i\in[n]}p_i\mathrm{B}(\varepsilon_i)$ for some positive numbers $p_1,\ldots,p_{n}$ with sum 1 and $0\leq \varepsilon_1<\cdots<\varepsilon_{n}\leq 1/2$, especially its LR-profile is given by
\begin{align}\label{13}
 P_W(\varepsilon)=\left\{
 \begin{array}{ll}
 p_{i}/2, & \text{if }\varepsilon=\varepsilon_{i}\neq 1/2, \\
 p_{n}, & \text{if }\varepsilon=\varepsilon_{n}=1/2, \\
 0,&\text{otherwise},
 \end{array}
\right.
\end{align}
and therefore we have
\begin{align}\label{14}
I(W)=\sum_{i\in[n]}p_{i}f_{c}(\varepsilon_{i})=1-\sum_{i\in[n]}p_{i}\hbar(\varepsilon_{i}),
\end{align}
\begin{align}\label{15}
P_{\epsilon}(W)=\sum_{i\in[n]}p_{i}\varepsilon_{i}.
\end{align}

\section{Degradations of Symmetric BIDMCs}
\label{sec03}
In this section, we mainly deal with the symmetric degradations of symmetric BIDMCs. For distributions $q_{1},\ldots,q_{m}$ and $p_{1},\ldots,p_{n}$, a matrix $(k_{i,j})_{m\times n}$ is referred to as a \emph{1-matrix} of \emph{pattern} $(q_{1},\ldots,q_{m};p_{1},\ldots,p_{n})$ if $k_{i,j}\geq 0$ and
\begin{align}
\sum_{j\in[n]}k_{i,j}=q_{i}, i\in[m],\label{16}\\
\sum_{i\in[m]}k_{i,j}=p_{j}, j\in[n].\label{17}
\end{align}
\begin{lemma}\label{lem01}
If $Q_{1},\ldots,Q_{n}$ are symmetric BIDMCs with
\begin{align*}
\sum_{j\in[n]}p_{j}Q_{j}\cong\sum_{i\in[m]}q_{i}\mathrm{B}(\sigma_{i})\in \mathbb{B}^{\ast}_{m},
\end{align*}
where $p_{1},\ldots,p_{n}$ are positive numbers with sum 1, then there is a 1-matrix $(k_{i,j})_{m\times n}$ of pattern $(q_{1},\ldots,q_{m};p_{1},\ldots,p_{n})$ such that
\begin{align}\label{18}
Q_{j}\cong\sum_{i\in[m]}\frac{k_{i,j}}{p_{j}}\mathrm{B}(\sigma_{i}),\ j\in[n].
\end{align}
\end{lemma}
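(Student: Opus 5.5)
The plan is to compare LR-profiles. Since each $Q_j$ is symmetric, it is equivalent to an RSC of BSCs. So I would write $Q_j\cong\sum_{l}r_{j,l}\mathrm{B}(\tau_{j,l})$ with distinct crossover values $\tau_{j,l}\in[0,1/2]$ and positive weights $r_{j,l}$ summing to $1$. By the linearity of LR-profiles under random switching, (\ref{07}) (equivalently (\ref{08})), we get
\begin{align*}
P_{\sum_j p_jQ_j}(\varepsilon)=\sum_{j\in[n]}p_jP_{Q_j}(\varepsilon),\quad \varepsilon\in[0,1].
\end{align*}
The hypothesis says this coincides with the profile (\ref{13}) of $\sum_i q_i\mathrm{B}(\sigma_i)$, which is supported exactly on $\{\sigma_i,\overline{\sigma_i}\}_{i\in[m]}$.

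The key step is to show that every $\tau_{j,l}$ lies in $\{\sigma_1,\ldots,\sigma_m\}$. All terms $p_jP_{Q_j}(\varepsilon)$ are nonnegative, so if $P_{Q_j}(\varepsilon)>0$ for some $j$, then the sum is positive at $\varepsilon$. Hence the support of each $P_{Q_j}$ is contained in the support of $P_W$. Then I would read off, for $\varepsilon=\sigma_i\in[0,1/2]$ and using (\ref{13}) for each $Q_j$, the numbers
\begin{align*}
k_{i,j}=p_j\cdot\begin{cases}2P_{Q_j}(\sigma_i), & \sigma_i\neq 1/2,\\ P_{Q_j}(\sigma_i), & \sigma_i=1/2.\end{cases}
\end{align*}
This is exactly $p_j$ times the weight of $\mathrm{B}(\sigma_i)$ in $Q_j$, with $k_{i,j}=0$ if $\sigma_i$ does not occur in $Q_j$. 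The only care needed is the factor-of-two convention at $\varepsilon=1/2$, handled by the same case distinction on both sides.

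Next I would verify that these $k_{i,j}$ form a 1-matrix of the required pattern. Nonnegativity is immediate. For the column sums (\ref{17}): since the weights of $Q_j$ over its crossover values sum to $1$ and all those values lie among the $\sigma_i$, we get $\sum_i k_{i,j}=p_j$. For the row sums (\ref{16}): evaluating the profile identity at $\varepsilon=\sigma_i$ and applying the same normalization gives $\sum_j k_{i,j}=q_i$. Finally, (\ref{18}) holds because $Q_j$ and $\sum_i (k_{i,j}/p_j)\mathrm{B}(\sigma_i)$ have the same LR-profile by construction, and that is the definition of equivalence.

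I expect no real obstacle. The only delicate point is the support-containment argument, which relies on nonnegativity of the profiles and $p_j>0$, together with the bookkeeping at $\sigma_m=1/2$.
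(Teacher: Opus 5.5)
Your proposal is correct and takes the same route as the paper, whose proof just says the lemma follows by investigating the LR-profiles of $Q_1,\ldots,Q_n$. Your argument (support containment from nonnegativity and $p_j>0$, reading off $k_{i,j}$ via (\ref{13}) with the factor-of-two case at $\sigma_i=1/2$, then checking row and column sums) supplies the details the paper leaves out; the profile match at $\overline{\sigma_i}$ uses the symmetry of each $Q_j$, built into your BSC representation, and the distinctness of the $\sigma_i$, guaranteed by membership in $\mathbb{B}^{\ast}_{m}$.
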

\begin{proof}
This lemma can be proved simply by investigating the LR-profiles of the BIDMCs $Q_{1},\ldots,Q_{n}$.
\end{proof}

For $W=\sum_{j\in[n]}p_{j}\mathrm{B}(\varepsilon_{j})\in\mathbb{B}_{n}, i\in[n]$ and $0\leq p'_{i}\leq p_{i}, p'_{i}\mathrm{B}(\varepsilon_{i})$ is referred to as a \emph{particle} of $W$ of \emph{weight} $p'_{i}$. The particle $p\mathrm{B}(\varepsilon)$ is referred to as a \emph{degradation} of particles $k_{1}\mathrm{B}(\sigma_{1}),\ldots,k_{m}\mathrm{B}(\sigma_{m})$ if $p=\sum_{i\in[m]}k_{i}\leq1$ and $\mathrm{B}(\varepsilon)$ is a degradation of the BIDMC $\sum_{i\in[m]}\frac{k_{i}}{p}\mathrm{B}(\sigma_{i})$ provided $p>0$. The following theorem gives a sufficient and necessary condition for symmetric degradations of symmetric BIDMCs.

\begin{theorem}\label{the01}
Let $W=\sum_{j\in[n]}p_{j}\mathrm{B}(\varepsilon_{j})\in\mathbb{B}_{n}$ and $Q=\sum_{i\in[m]}q_{i}\mathrm{B}(\sigma_{i})\in\mathbb{B}_{m}$ be BIDMCs with $\varepsilon_{1},\ldots,\varepsilon_{n},\sigma_{1},\ldots,\sigma_{m}\in[0,1/2]$. Then $W$ is a degradation of $Q$ if and only if there is a 1-matrix $(k_{i,j})_{m\times n}$ of pattern $(q_{1},\ldots,q_{m};p_{1},\ldots,p_{n})$ such that
\begin{align}\label{19}
p_{j}\varepsilon_{j}\geq\sum_{i\in[m]}k_{i,j}\sigma_{i},\ j\in[n],
\end{align}
that is, for each $i\in[m]$ the particle $q_{i}\mathrm{B}(\sigma_{i})$ can be split into particles $k_{i,1}\mathrm{B}(\sigma_{i}),\ldots,k_{i,n}\mathrm{B}(\sigma_{i})$ such that for each $j\in[n]$ the particle $p_{j}\mathrm{B}(\varepsilon_{j})$ is a degradation of the particles $k_{1,j}\mathrm{B}(\sigma_{1}),\ldots,k_{m,j}\mathrm{B}(\sigma_{m})$.
\end{theorem}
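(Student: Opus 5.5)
We prove the two directions separately. Both rest on one elementary fact about a single BSC: for $\sigma,\varepsilon\in[0,1/2]$, $\mathrm{B}(\varepsilon)\preccurlyeq\mathrm{B}(\sigma)$ if and only if $\sigma\leq\varepsilon$. Sufficiency is given by cascading $\mathrm{B}(\sigma)$ with $\mathrm{B}(\delta)$, where $\delta=(\varepsilon-\sigma)/(1-2\sigma)$. Necessity follows from (\ref{12}) applied with $P_\epsilon$.

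\emph{Sufficiency.} Assume a 1-matrix $(k_{i,j})$ satisfying (\ref{19}) is given.
\begin{enumerate}
\item By (\ref{16}) and (\ref{08}), split $Q$ as
$$Q\cong\sum_{j\in[n]}p_jQ_j,\qquad Q_j=\sum_{i\in[m]}\frac{k_{i,j}}{p_j}\mathrm{B}(\sigma_i).$$
We may drop any index $j$ with $p_j=0$.
\item Show $\mathrm{B}(\varepsilon_j)\preccurlyeq Q_j$ for each $j$. By (\ref{15}), $P_\epsilon(Q_j)=\bar\sigma_j:=\sum_i k_{i,j}\sigma_i/p_j$, and (\ref{19}) says $\bar\sigma_j\leq\varepsilon_j$.
\item Realize the degradation on each $Q_j$ explicitly. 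Since $Q_j$ is a switch among BSCs, map its output to the MLD decision. The resulting channel is symmetric with crossover probability exactly $\bar\sigma_j$, so it is $\mathrm{B}(\bar\sigma_j)$. Then pass this through $\mathrm{B}(\delta_j)$ with $\delta_j=(\varepsilon_j-\bar\sigma_j)/(1-2\bar\sigma_j)\in[0,1/2]$; the case $\bar\sigma_j=1/2$ forces $\varepsilon_j=1/2$ and is trivial. This gives $\mathrm{B}(\varepsilon_j)$.
\item Apply these intermediate channels componentwise, which is allowed because the switch index is part of the output. This gives $W=\sum_j p_j\mathrm{B}(\varepsilon_j)\preccurlyeq\sum_j p_jQ_j\cong Q$.
\end{enumerate}

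\emph{Necessity.} This is the main obstacle. Suppose $W\preccurlyeq Q$ through an intermediate channel $R$.
\begin{enumerate}
\item Realize $Q$ with outputs $(i,b)$, $b\in\{0,1\}$. Symmetrize $R$: replace it by the average of $R$ and its conjugate under the flips $b\mapsto1-b$ and $y\mapsto\bar y$. Here $\bar y$ is the conjugate output of $W$, which exists after grouping outputs of $W$ by LR class. The result is still an intermediate channel for $W\preccurlyeq Q$, because both channels are symmetric.
\item Partition the output of $W$ into the classes $L_W(\varepsilon_j)\cup L_W(\bar\varepsilon_j)$. Each class has probability $p_j$ independent of the input. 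Pulling this partition back through $R$ splits the output of $Q$ into pieces.
\item Define $k_{i,j}$ as the probability that $Q$ is in component $i$ and $R$ outputs into class $j$. By the symmetrization, this probability does not depend on the input or on $b$. Hence
$$\sum_j k_{i,j}=q_i,\qquad \sum_i k_{i,j}=p_j,$$
so $(k_{i,j})$ is a 1-matrix.
\item Conditioned on landing in class $j$, the map from $Q_j=\sum_i(k_{i,j}/p_j)\mathrm{B}(\sigma_i)$ to the output restricted to class $j$ is a degradation. That restricted output is exactly $\mathrm{B}(\varepsilon_j)$, using Lemma~\ref{lem01}-style LR-profile bookkeeping.
\item By (\ref{12}) and (\ref{15}), $\varepsilon_j\geq P_\epsilon(Q_j)=\sum_i k_{i,j}\sigma_i/p_j$, which is (\ref{19}).
\end{enumerate}

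The delicate points in necessity are two. First, justifying that the conditional probability of class $j$ given the $Q$-component $i$ is independent of the input bit; this is where symmetry is needed. Second, handling repeated values $\varepsilon_j$ or $\varepsilon_j=1/2$, where classes must be split arbitrarily but proportionally.
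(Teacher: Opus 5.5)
Your proposal is correct and is essentially the paper's proof. The paper's averaging of each $2\times2$ block of the intermediate channel, $p_{i,j}=(a_{i,j}+b_{i,j}+c_{i,j}+d_{i,j})/2$ and $2p_{i,j}e_{i,j}=b_{i,j}+c_{i,j}$, is exactly your symmetrization; $k_{i,j}=q_ip_{i,j}$ is your matrix; and your sufficiency construction is (\ref{44}) with the constant choice $e_{i,j}=\delta_j$. The one difference is that you get (\ref{19}) from the monotonicity (\ref{12}), whereas the paper reads it off (\ref{44}) via $\sigma_i\overline{e_{i,j}}+\overline{\sigma_i}e_{i,j}\geq\sigma_i$, a representation it reuses in the proof of Lemma~\ref{lem03}.
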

\begin{proof}
The proof is given in Appendix A.
\end{proof}

Notice that this theorem implies that, for $\varepsilon, \sigma_{1},\ldots,\sigma_{m}\in[0,1/2]$, the BSC $\mathrm{B}(\varepsilon)$ is a degradation of $\sum_{i\in[m]}q_{i}\mathrm{B}(\sigma_{i})\in\mathbb{B}_{m}$ if and only if $\varepsilon\geq\sum_{i\in[m]}q_{i}\sigma_{i}$.

According to Theorem \ref{the01}, one can derive the following corollary simply.

\begin{corollary}\label{cor01}
Assume that $W, W_{1},\ldots,W_{s}$ are symmetric BIDMCs with $W\cong\sum_{l\in[s]}r_{l}W_{l}$, where $r_{l}>0$ for each $l\in[s]$. Then, $W$ is a degradation of symmetric BIDMC $Q$ if and only if there are symmetric BIDMCs $Q_{1},\ldots,Q_{s}$ such that $Q=\sum_{l\in[s]}r_{l}Q_{l}$ and $W_{l}\preccurlyeq Q_{l}$ for each $l\in[s]$.
\end{corollary}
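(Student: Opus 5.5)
We prove the two directions separately, using Theorem~\ref{the01} as the main tool for the nontrivial one, and equation (\ref{08}) together with transitivity of degradation for the easy one.

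\emph{Sufficiency.} Suppose $Q=\sum_{l\in[s]}r_lQ_l$ with $W_l\preccurlyeq Q_l$ for every $l$. For each $l$, let $\Phi_l$ be an intermediate channel realizing $W_l\preccurlyeq Q_l$. We apply $\Phi_l$ to the output of $Q$ whenever that output lies in the part of the alphabet belonging to $Q_l$. This is well defined because the RSC structure splits the output alphabet of $Q$ into these parts. The resulting channel is the RSC $\sum_l r_lW_l$, so by (\ref{08}) it is equivalent to $W$. By (\ref{05}) and transitivity we conclude $W\preccurlyeq Q$.

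\emph{Necessity.} Suppose $W\preccurlyeq Q$ with both channels symmetric. The plan has four steps.

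\begin{enumerate}
\item Write $Q\cong\sum_{i\in[m]}q_i\mathrm{B}(\sigma_i)$ and $W\cong\sum_{j\in[n]}p_j\mathrm{B}(\varepsilon_j)$ in reduced form, with distinct parameters in $[0,1/2]$. Since each $W_l$ is symmetric, it is an RSC of BSCs. Comparing LR-profiles via (\ref{07}), exactly as in Lemma~\ref{lem01}, gives a 1-matrix $(t_{l,j})$ of pattern $(r_1,\ldots,r_s;p_1,\ldots,p_n)$ with $W_l\cong\sum_j\frac{t_{l,j}}{r_l}\mathrm{B}(\varepsilon_j)$.
\item Theorem~\ref{the01} gives a 1-matrix $(k_{i,j})$ of pattern $(q;p)$ satisfying (\ref{19}).
\item Split each column proportionally. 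The particle $p_j\mathrm{B}(\varepsilon_j)$ is divided among the $W_l$ in proportions $t_{l,j}/p_j$, so the sub-particles $k_{i,j}\mathrm{B}(\sigma_i)$ feeding it are divided the same way. Define
\[
k^{(l)}_{i,j}=\frac{k_{i,j}t_{l,j}}{p_j},\qquad Q_l=\sum_{i\in[m]}\Big(\sum_{j\in[n]}\frac{k^{(l)}_{i,j}}{r_l}\Big)\mathrm{B}(\sigma_i).
\]
\item Check the required properties. The weights of $Q_l$ are nonnegative. They sum to $\sum_j t_{l,j}/r_l=1$, so $Q_l$ is a symmetric BIDMC. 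Summing over $l$ gives $\sum_l r_lQ_l\cong\sum_i\sum_j k_{i,j}\mathrm{B}(\sigma_i)=\sum_iq_i\mathrm{B}(\sigma_i)\cong Q$, using (\ref{08}). Finally, the matrix $(k^{(l)}_{i,j}/r_l)_{i,j}$ is a 1-matrix of pattern $(\text{weights of }Q_l;\ t_{l,1}/r_l,\ldots,t_{l,n}/r_l)$. Condition (\ref{19}) for it reads
\[
\frac{t_{l,j}}{r_l}\varepsilon_j\ \ge\ \sum_i\frac{k_{i,j}t_{l,j}}{p_jr_l}\sigma_i,
\]
which is just (\ref{19}) for $W,Q$ multiplied by $t_{l,j}/(p_jr_l)$. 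Theorem~\ref{the01} then gives $W_l\preccurlyeq Q_l$.
\end{enumerate}

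\emph{Main obstacle.} The delicate points are bookkeeping, not estimates. First, columns with $t_{l,j}=0$, or zero weights, must be dropped or handled as degenerate; Theorem~\ref{the01} allows zero entries since it is stated for $\mathbb{B}_n$. Second, the statement asks for literal equality $Q=\sum_l r_lQ_l$, whereas the construction gives it only up to equivalence. To get equality, we realize the decomposition directly on $Q$'s alphabet. We split each output symbol of the BSC component $\mathrm{B}(\sigma_i)$ of $Q$ according to the fractions $k^{(l)}_{i,j}/q_i$ via an auxiliary randomization. Equivalently, we interpret the identity in the paper's RSC convention, where (\ref{08}) makes equivalence the natural notion.
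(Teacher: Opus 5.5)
Your proposal is correct and follows the paper's proof essentially verbatim. The paper likewise takes the 1-matrix $(k_{i,j})$ from Theorem~\ref{the01} and the 1-matrix $(t_{j,l})$ from Lemma~\ref{lem01}, and defines $Q_l$ by your proportional splitting $h^{(l)}_{i,j}=t_{j,l}k_{i,j}/(r_lp_j)$; it gets $\sum_l r_lQ_l\cong Q$ and $W_l\preccurlyeq Q_l$ from (\ref{19}) rescaled by $t_{j,l}/(r_lp_j)$. Your sufficiency argument just makes explicit the direction the paper leaves implicit.

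On your last remark, the paper also obtains only $Q\cong\sum_l r_lQ_l$. Literal equality on $Q$'s own alphabet is impossible in general: a single $\mathrm{B}(\sigma)$ with $\sigma<1/2$ admits no nontrivial RSC partition. Your randomized splitting refines the alphabet, so it too yields only an equivalent channel, and the equivalence reading is the right one.
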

\begin{proof}
Let $W=\sum_{j\in[n]}p_{j}\mathrm{B}(\varepsilon_{j})\in\mathbb{B}^{\ast}_{n}$ and $Q=\sum_{i\in[m]}q_{i}\mathrm{B}(\sigma_{i})\in\mathbb{B}^{\ast}_{m}$, where $\varepsilon_{1},\ldots,\varepsilon_{n},\sigma_{1},\ldots,\sigma_{m}\in[0,1/2]$. According to Theorem \ref{the01}, $W$ is a degradation of $Q$ if and only if there is a 1-matrix $(k_{i,j})_{m\times n}$ of pattern $(q_{1},\ldots,q_{m};p_{1},\ldots,p_{n})$ with (\ref{19}). From Lemma \ref{lem01} and $W\cong\sum_{l\in[s]}r_{l}W_{l}\cong\sum_{j\in[n]}p_{j}\mathrm{B}(\varepsilon_{j})\in\mathbb{B}^{\ast}_{n}$, there is a 1-matrix $(t_{j,l})_{n\times s}$ of pattern $(p_{1},\ldots,p_{n};r_{1},\ldots,r_{s})$ such that
\begin{align}\label{20}
W_{l}=\sum_{j=1}^{n}\frac{t_{j,l}}{r_{l}}\mathrm{B}(\varepsilon_{j}),\ l\in[s].
\end{align}
For $l\in[s]$, let
\begin{align}\label{21}
Q_{l}=\sum_{j=1}^{n}\sum_{i=1}^{m}h_{i,j}^{(l)}\mathrm{B}(\sigma_{i}),
\end{align}
where $h_{i,j}^{(l)}=(t_{j,l}k_{i,j})/(r_{l}p_{j}), i\in[m], j\in[n]$. Clearly, we have
\begin{align*}
\sum_{l=1}^{s}r_{l}Q_{l}\cong\sum_{l=1}^{s}\sum_{j=1}^{n}\sum_{i=1}^{m}\frac{t_{j,l}k_{i,j}}{p_{j}}\mathrm{B}(\sigma_{i})
\cong\sum_{j=1}^{n}\sum_{i=1}^{m}k_{i,j}\mathrm{B}(\sigma_{i})\cong\sum_{i=1}^{m}q_{i}\mathrm{B}(\sigma_{i})\cong Q.
\end{align*}
From (\ref{19}) and $\sum_{i=1}^{m}h_{i,j}^{(l)}=t_{j,l}/r_{l}$ we see
\begin{align*}
\mathrm{B}(\varepsilon_{j})\preccurlyeq\sum_{i=1}^{m}\frac{1}{p_{j}}k_{i,j}\mathrm{B}(\sigma_{i})=\sum_{i=1}^{m}\frac{1}{t_{j,l}/r_{l}}h_{i,j}^{(l)}\mathrm{B}(\sigma_{i})
\text{ if } t_{j,l}>0,
\end{align*}
and then from (\ref{20}) and (\ref{21}) we see $W_{l}\preccurlyeq Q_{l}, l\in[s]$.
\end{proof}

Furthermore, we have the following lemma.

\begin{lemma}\label{lem02}
Let $W, Q$ be symmetric BIDMCs such that $p\mathrm{B}(\varepsilon)+\overline{p}W\preccurlyeq p\mathrm{B}(\varepsilon)+\overline{p}Q$ for some $p\in(0,1)$ and $\varepsilon\in[0,1]$. Then, we must have $W\preccurlyeq Q$.
\end{lemma}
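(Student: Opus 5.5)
The plan is to convert the combinatorial criterion of Theorem \ref{the01} into a criterion that is \emph{linear} in the channel, i.e. expressed through functionals $I_f$ as in (\ref{09}). Linearity (\ref{10}) then lets the common component $p\mathrm{B}(\varepsilon)$ cancel. We may assume $\varepsilon\in[0,1/2]$, since $\mathrm{B}(\varepsilon)\cong\mathrm{B}(\overline{\varepsilon})$. We write $W=\sum_j p_j\mathrm{B}(\varepsilon_j)$ and $Q=\sum_i q_i\mathrm{B}(\sigma_i)$ with all crossover probabilities in $[0,1/2]$.

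For $t\in[0,1/2]$ let $g_t(\varepsilon)=\min\{\varepsilon,\overline{\varepsilon},t\}$. For a symmetric channel $\sum_j p_j\mathrm{B}(\varepsilon_j)$ we have $I_{g_t}=\sum_j p_j\min\{\varepsilon_j,t\}$. The claim I would establish is
\begin{align*}
W\preccurlyeq Q \iff I_{g_t}(W)\geq I_{g_t}(Q)\ \text{ for all } t\in[0,1/2].
\end{align*}
Thus degradation is exactly dominance of the crossover distribution of $W$ over that of $Q$ in the increasing concave order.

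The necessity direction follows from Theorem \ref{the01}. Given the 1-matrix $(k_{i,j})$ with (\ref{19}), we use that $\min\{\cdot,t\}$ is nondecreasing and concave. This gives $p_j\min\{\varepsilon_j,t\}\geq p_j\min\{\sum_i (k_{i,j}/p_j)\sigma_i,t\}\geq\sum_i k_{i,j}\min\{\sigma_i,t\}$. Summing over $j$ and using (\ref{16}) yields the inequality. (Alternatively, $g_t$ is concave on $[0,1]$, so this is (\ref{12})-type monotonicity with reversed sign.)

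The sufficiency direction is the main obstacle. Given the hinge inequalities for all $t$, I must construct a 1-matrix satisfying (\ref{19}). I would first try the monotone ("quantile") coupling. Sort both sides by crossover probability. Allot the mass of $Q$ to the particles of $W$ greedily, in order, so that $k_{i,j}$ is the overlap of the $i$-th $\sigma$-interval with the $j$-th $\varepsilon$-interval in the cumulative-mass picture on $[0,1]$. Then (\ref{19}) for the particle $j$ reads $\int_{J_j}F_W^{-1}\geq\int_{J_j}F_Q^{-1}$ over the mass interval $J_j$.

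This is not directly implied by the hinge conditions, which only give the cumulative version $\int_0^u F_W^{-1}\geq\int_0^u F_Q^{-1}$ for all $u$. So the coupling has to be adjusted: whenever a particle of $W$ receives too much low-noise mass, the excess slack must be pushed to earlier particles. This is a standard Strassen/majorization-type exchange argument, and I would carry it out by induction on $n$. If this becomes heavy, I would instead invoke the known Blackwell characterization of degradation of BIDMCs via concave functionals \cite{Blackwell1953,GR20,JCT24}, which yields the equivalence directly.

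With the criterion in hand, the lemma is immediate. By (\ref{10}), for every $t$,
\begin{align*}
I_{g_t}(p\mathrm{B}(\varepsilon)+\overline{p}W)=p\,g_t(\varepsilon)+\overline{p}\,I_{g_t}(W),
\end{align*}
and likewise for $Q$. The hypothesis $p\mathrm{B}(\varepsilon)+\overline{p}W\preccurlyeq p\mathrm{B}(\varepsilon)+\overline{p}Q$ gives the hinge inequality for the mixtures. Subtracting $p\,g_t(\varepsilon)$ and dividing by $\overline{p}>0$ gives $I_{g_t}(W)\geq I_{g_t}(Q)$ for all $t$, and the sufficiency direction then gives $W\preccurlyeq Q$.
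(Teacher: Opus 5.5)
Your proof is correct, but it takes a genuinely different route from the paper's, and its whole weight rests on the sufficiency half of your hinge criterion, which you do not actually prove. The criterion is true. For symmetric channels it is Blackwell's comparison theorem \cite{Blackwell1953}, specialized by symmetrizing a convex $f$ and reducing it to the hinge functions $-g_t$. A form of it for BMS channels is also in \cite{Urbanke08}. So the argument is complete only through that citation.

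Your self-contained sketch would not close this step as written. The per-particle inequalities genuinely fail for the monotone coupling. Take $Q=\frac{1}{2}\mathrm{B}(0)+\frac{1}{2}\mathrm{B}(0.4)$ and $W=\frac{1}{2}\mathrm{B}(0.25)+\frac{1}{2}\mathrm{B}(0.3)$. Then $W\preccurlyeq Q$, yet the coupling gives the second particle only $\sigma=0.4$ mass, and $0.15<0.2$. The repair moves low-noise mass from earlier particles, which carry slack, into later ones. That is a full Strassen-type construction, not a routine exchange.

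Once you cite Blackwell, the hinge detour is unnecessary. Applying (\ref{10}) to every convex $f$ cancels $p\mathrm{B}(\varepsilon)$ at once. This needs no symmetry and works for an arbitrary common component, so it gives Corollary \ref{cor02} directly without induction.

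The paper instead stays inside its own machinery. Corollary \ref{cor01} splits the upgraded mixture into two pieces. One is $\frac{a}{p}\mathrm{B}(\varepsilon)+\frac{p-a}{p}Q_0$, which dominates $\mathrm{B}(\varepsilon)$. The other dominates $W$ and still contains $\mathrm{B}(\varepsilon)$ with weight $(p-a)/\overline{p}$. Substituting the first piece into the second repeatedly drives the $\mathrm{B}(\varepsilon)$-weight down like $(a/p)^k$, so the upgrades of $W$ converge to a channel equivalent to $Q$.

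That route needs only Theorem \ref{the01} plus an implicit closedness of the degradation relation under limits. Yours trades the iteration for a standard but nontrivial duality theorem, and in exchange gains generality and a one-line cancellation.
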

\begin{proof}
According to Corollary \ref{cor01}, we see that there are a number $a$ with $0\leq a\leq p$ and symmetric BIDMCs $Q_{0}, Q_{1}$ such that
\begin{align*}
&\mathrm{B}(\varepsilon)\preccurlyeq \frac{a}{p}\mathrm{B}(\varepsilon)+\frac{p-a}{p}Q_{0},\\
W&\preccurlyeq \frac{p-a}{\overline{p}}\mathrm{B}(\varepsilon)+\frac{\overline{p}-p+a}{\overline{p}}Q_{1}
\end{align*}
and $Q\cong\frac{p-a}{\overline{p}}Q_{0}+\frac{\overline{p}-p+a}{\overline{p}}Q_{1}$. If $a=p$, then we have $W\preccurlyeq Q_{1}\cong Q$. Now we assume $a<p$. By induction one can show easily that, for any positive integer $k$,
\begin{align}\label{22}
W\preccurlyeq\frac{(p-a)a^{k}}{\overline{p}p^{k}}\mathrm{B}(\varepsilon)+\frac{\overline{p}-p+a}{\overline{p}}Q_{1}+\frac{(p-a)^{2}}{\overline{p}p}\sum_{i\in[k]}\frac{a^{i-1}}{p^{i-1}}Q_{0}.
\end{align}
As $k$ approaches to infinity, the limitation of RHS of (\ref{22}) is equivalent to $\frac{\overline{p}-p+a}{\overline{p}}Q_{1}+\frac{p-a}{\overline{p}}Q_{0}\cong Q$. Therefore, we also have $W\preccurlyeq Q$.
\end{proof}

According to this lemma, by induction one can deduce simply the following corollary, which implies that degradation of BIDMCs is preserved under the removal or addition of identical components.

\begin{corollary}\label{cor02}
Let $R, W, Q$ be symmetric BIDMCs such that $pR + \overline{p}W \preccurlyeq pR + \overline{p}Q$ for some $p\in(0,1)$. Then, we must have $W\preccurlyeq Q$.
\end{corollary}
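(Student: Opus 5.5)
The plan is to reduce the statement to Lemma \ref{lem02} by writing $R$ as a random switching channel of finitely many BSCs and peeling off one BSC component at a time, by induction on the number of components.

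Since $R$ is symmetric, $R\cong\sum_{l\in[s]}r_{l}\mathrm{B}(\varepsilon_{l})$ with $r_{l}>0$ summing to $1$. This holds for some $s\geq 1$, with each $\varepsilon_{l}\in[0,1/2]$. We induct on $s$.

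For $s=1$, the hypothesis reads $p\mathrm{B}(\varepsilon_{1})+\overline{p}W\preccurlyeq p\mathrm{B}(\varepsilon_{1})+\overline{p}Q$, and Lemma \ref{lem02} gives $W\preccurlyeq Q$ directly.

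For $s>1$, we isolate the first particle using the rearrangement rule (\ref{08}). Put $p'=pr_{1}\in(0,1)$ and let $R'=\sum_{l=2}^{s}\frac{r_{l}}{1-r_{1}}\mathrm{B}(\varepsilon_{l})$. Then, for $X\in\{W,Q\}$,
\begin{align*}
pR+\overline{p}X\cong p'\mathrm{B}(\varepsilon_{1})+\overline{p'}\Big(\frac{p(1-r_{1})}{\overline{p'}}R'+\frac{\overline{p}}{\overline{p'}}X\Big).
\end{align*}
Degradation depends only on equivalence classes: by (\ref{05}) and transitivity, equivalent channels can be substituted on either side of $\preccurlyeq$. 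So the hypothesis becomes
\begin{align*}
p'\mathrm{B}(\varepsilon_{1})+\overline{p'}W'\preccurlyeq p'\mathrm{B}(\varepsilon_{1})+\overline{p'}Q',
\end{align*}
where $W'=p''R'+\overline{p''}W$, $Q'=p''R'+\overline{p''}Q$ and $p''=p(1-r_{1})/\overline{p'}$. The channels $W'$ and $Q'$ are symmetric, because RSCs of symmetric channels are symmetric by (\ref{07}). Lemma \ref{lem02} therefore yields $p''R'+\overline{p''}W\preccurlyeq p''R'+\overline{p''}Q$.

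Next we check that $p''\in(0,1)$. We have $p''>0$ because $r_{1}<1$. We have $p''<1$ because $\overline{p'}-p(1-r_{1})=1-p=\overline{p}>0$. Since $R'$ has $s-1$ BSC components, the induction hypothesis applies and gives $W\preccurlyeq Q$.

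The only step needing care is this bookkeeping: rewriting mixtures via (\ref{08}) while keeping every weight strictly inside $(0,1)$, and justifying that $\preccurlyeq$ respects $\cong$. The substantive work (the limiting argument) is already contained in Lemma \ref{lem02}, so no real obstacle remains.
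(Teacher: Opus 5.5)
Your proof is correct and is exactly the induction the paper has in mind. The paper only says that the corollary follows from Lemma~\ref{lem02} by induction. Your argument peels one BSC component of $R$ off at a time, and it handles the weight bookkeeping ($p'=pr_1$, $p''\in(0,1)$) and the substitution of equivalent channels via (\ref{05}) correctly, implicitly assuming, as the paper does, that $R$ is an RSC of finitely many BSCs.
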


\section{Degradations with the Lowest Decoding Error Probability}
\label{sec04}
We say $W\in \mathbb{B}_{n}$ is a $(2m,2n)$\emph{-P-degradation} \emph{(or $2n$-P-degradation, or P-degradation)} of $Q\in \mathbb{B}_{m}$, and write $W\preccurlyeq_{\text{P}} Q$, if $W$ is a degradation of $Q$ with the lowest decoding error probability:
\begin{align}\label{23}
P_{\epsilon}(W)=\min\{P_{\epsilon}(W'): W'\in \mathbb{B}_{n}, W'\preccurlyeq Q\}.
\end{align}
\begin{lemma}\label{lem03}
Any symmetric BIDMC has a unique 2-P-degradation. Indeed, the unique 2-P-degradation of $\sum_{i\in[m]}q_{i}\mathrm{B}(\sigma_{i})\in \mathbb{B}_{m}$ with $\sigma_{1},\ldots,\sigma_{m}\in[0,1/2]$ is the BSC $\mathrm{B}(\varepsilon)$ with $\varepsilon=\sum_{i\in[m]}q_{i}\sigma_{i}$, which is the weighted average of the
cross probabilities $\sigma_{1},\ldots,\sigma_{m}$.
\end{lemma}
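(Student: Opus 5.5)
The plan is to reduce everything to the remark following Theorem \ref{the01}: for $\varepsilon,\sigma_{1},\ldots,\sigma_{m}\in[0,1/2]$, the BSC $\mathrm{B}(\varepsilon)$ is a degradation of $Q=\sum_{i\in[m]}q_{i}\mathrm{B}(\sigma_{i})$ if and only if $\varepsilon\geq\sum_{i\in[m]}q_{i}\sigma_{i}$. Once this is in place, the statement becomes a one-variable minimization of $P_{\epsilon}$ over $\mathbb{B}_{1}$.

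First, I would identify the elements of $\mathbb{B}_{1}$. By definition, any $W\in\mathbb{B}_{1}$ is equivalent to a single BSC $\mathrm{B}(\varepsilon)$. Since $\mathrm{B}(\varepsilon)\cong\mathrm{B}(\overline{\varepsilon})$ (their LR-profiles coincide), we may take $\varepsilon\in[0,1/2]$, and this representative is unique. By (\ref{15}), $P_{\epsilon}(\mathrm{B}(\varepsilon))=\varepsilon$, so $P_{\epsilon}$ is strictly increasing in $\varepsilon$ on $[0,1/2]$.

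Second, I would apply Theorem \ref{the01} with $n=1$. The only 1-matrix of pattern $(q_{1},\ldots,q_{m};1)$ is the column $k_{i,1}=q_{i}$, so condition (\ref{19}) reads $\varepsilon\geq\sum_{i}q_{i}\sigma_{i}=:\varepsilon^{\ast}$. Hence the feasible set in (\ref{23}) is exactly $\{\mathrm{B}(\varepsilon):\varepsilon^{\ast}\leq\varepsilon\leq1/2\}$. This set is nonempty, because $\varepsilon^{\ast}\leq1/2$ as a convex combination of numbers in $[0,1/2]$.

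Third, I would conclude. Minimizing $P_{\epsilon}(\mathrm{B}(\varepsilon))=\varepsilon$ over this interval gives the unique minimizer $\varepsilon=\varepsilon^{\ast}$. Existence follows because the minimum is attained, and uniqueness follows from strict monotonicity together with the uniqueness of the representative in $[0,1/2]$ up to $\cong$.

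The only genuine content lies in Theorem \ref{the01}, which we may assume, so I expect no real obstacle. The one point needing care is that uniqueness is meant up to channel equivalence; this is where the normalization $\varepsilon\in[0,1/2]$ matters.

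If one wanted an argument independent of the theorem for the "if" direction, I would take $Q$ with $\sigma_{i}\leq 1/2$ and merge all outputs having LR $\geq1$ into a single symbol $0$ and all outputs having LR $<1$ into a single symbol $1$. The result is $\mathrm{B}(\varepsilon^{\ast})$, and a further BSC cascade then reaches any larger $\varepsilon$. The "only if" direction would follow from (\ref{12}) with $f_{e}$: $\varepsilon=P_{\epsilon}(\mathrm{B}(\varepsilon))\geq P_{\epsilon}(Q)=\varepsilon^{\ast}$.
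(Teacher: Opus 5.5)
Your proof is correct and is essentially the paper's argument: the paper reads off from (\ref{44}) that any BSC degradation $\mathrm{B}(\varepsilon)$ of $Q$ has the form $\varepsilon=\sum_{i}q_{i}(\sigma_{i}\overline{e_{i}}+\overline{\sigma_{i}}e_{i})\geq\sum_{i}q_{i}\sigma_{i}$, with equality attained at $e_{i}=0$, which is the same content as the $n=1$ case of Theorem~\ref{the01} that you invoke. One small caveat affects only your optional aside, not your main argument: merging \emph{all} outputs with LR $\geq 1$ into the symbol $0$ does not produce a BSC when some $\sigma_{i}=1/2$, because those LR-$1$ outputs must be split evenly between $0$ and $1$ (e.g.\ by sending each component's two outputs to $0$ and $1$ respectively).
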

\begin{proof}
Let $\varepsilon$ be a number in $[0,1/2]$ such that $\mathrm{B}(\varepsilon)$ is a degradation of $Q=\sum_{i\in[m]}q_{i}\mathrm{B}(\sigma_{i})\in \mathbb{B}_{m}^{\ast}$. According to (\ref{44}) there are numbers $e_{1},\ldots,e_{m}$ in $[0,1]$ such that
\begin{align*}
\varepsilon=\sum_{i\in[m]}q_{i}(\sigma_{i}\overline{e_{i}}+\overline{\sigma_{i}}e_{i}).
\end{align*}
Hence $P_{\epsilon}(\mathrm{B}(\varepsilon))=\varepsilon$ attains the minimum if and only if the numbers $e_{1},\ldots,e_{m}$ are equal to 0 and consequently $\varepsilon=\sum_{i\in[m]}q_{i}\sigma_{i}$.
\end{proof}

We say $p\mathrm{B}(\varepsilon)$ is \emph{compounded} from the particles $k_{1}\mathrm{B}(\sigma_{1}),\ldots,k_{m}\mathrm{B}(\sigma_{m})$ if $p=\sum_{i\in[m]}k_{i}$, $\{\sigma_{1},\ldots,\sigma_{m}\}\subset[0, 1/2]$ and $p\varepsilon=\sum_{i\in[m]}k_{i}\sigma_{i}$, i.e., $\mathrm{B}(\varepsilon)$ is the 2-P-degradation of $\sum_{i\in[m]}\frac{k_{i}}{p}\mathrm{B}(\sigma_{i})\in \mathbb{B}_{m}$ provided $p>0$.

\begin{theorem}\label{the02}
Assume $W=\sum_{j\in[n]}p_{j}\mathrm{B}(\varepsilon_{j})\in \mathbb{B}_{n}^{\ast}$ is a degradation of $Q=\sum_{i\in[m]}q_{i}\mathrm{B}(\sigma_{i})\in \mathbb{B}_{m}$, where $\varepsilon_{1},\ldots,\varepsilon_{n},\sigma_{1},\ldots,\sigma_{m}\in[0,1/2]$. Then, $W$ is a $2n$-P-degradation of $Q$ if and only if there is a 1-matrix $(k_{i,j})_{m\times n}$ of pattern $(q_{1},\ldots,q_{m};p_{1},\ldots,p_{n})$ such that
\begin{align}\label{24}
\sum_{i\in[m]}k_{i, j}\sigma_{i}=p_{j}\varepsilon_{j},\ j\in[n],
\end{align}
that is, for each $i\in[m]$ the particle $q_{i}\mathrm{B}(\sigma_{i})$ can be split into particles $k_{i, 1}\mathrm{B}(\sigma_{i}),\ldots,k_{i, n}\mathrm{B}(\sigma_{i})$ such that for each $j\in[n]$ the particle $p_{j}\mathrm{B}(\varepsilon_{j})$ is compounded from the particles $k_{1, j}\mathrm{B}(\sigma_{1}),\ldots,k_{m, j}\mathrm{B}(\sigma_{m})$.
\end{theorem}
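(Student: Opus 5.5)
By Theorem \ref{the01}, every degradation $W'=\sum_{j\in[n]}p'_{j}\mathrm{B}(\varepsilon'_{j})\in\mathbb{B}_{n}$ of $Q$ comes with a 1-matrix $(k'_{i,j})$ of pattern $(q_{1},\ldots,q_{m};p'_{1},\ldots,p'_{n})$ satisfying (\ref{19}). Summing (\ref{19}) over $j$ and using (\ref{15}) gives
\begin{align*}
P_{\epsilon}(W')=\sum_{j\in[n]}p'_{j}\varepsilon'_{j}\geq\sum_{j\in[n]}\sum_{i\in[m]}k'_{i,j}\sigma_{i}=\sum_{i\in[m]}q_{i}\sigma_{i}=P_{\epsilon}(Q).
\end{align*}
So $P_{\epsilon}(Q)$ is a lower bound for the minimum in (\ref{23}). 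The whole proof rests on showing that this bound is attained inside $\mathbb{B}_{n}$.

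\emph{Attainment.} I would take $W''=1\cdot\mathrm{B}(\sum_{i}q_{i}\sigma_{i})\in\mathbb{B}_{1}\subseteq\mathbb{B}_{n}$. This is the 2-P-degradation of $Q$ from Lemma \ref{lem03}, with error probability $\sum_{i}q_{i}\sigma_{i}$. Hence the minimum in (\ref{23}) equals $P_{\epsilon}(Q)$ exactly, and $W$ is a $2n$-P-degradation if and only if $P_{\epsilon}(W)=\sum_{i}q_{i}\sigma_{i}$. Here I use that $\mathbb{B}_{1}\subseteq\mathbb{B}_{n}$, which holds by allowing zero weights or repeated crossover probabilities in the RSC representation.

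\emph{Sufficiency.} Suppose a 1-matrix satisfies (\ref{24}). Summing over $j$ gives $P_{\epsilon}(W)=\sum_{j}p_{j}\varepsilon_{j}=\sum_{i}q_{i}\sigma_{i}$, so $W$ attains the minimum. Also, (\ref{24}) implies (\ref{19}), so Theorem \ref{the01} confirms $W\preccurlyeq Q$, although this is already assumed.

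\emph{Necessity.} Since $W\preccurlyeq Q$, Theorem \ref{the01} supplies a 1-matrix $(k_{i,j})$ of pattern $(q;p)$ satisfying (\ref{19}). If $W$ is a P-degradation, the attainment step gives $\sum_{j}p_{j}\varepsilon_{j}=\sum_{i}q_{i}\sigma_{i}=\sum_{j}\sum_{i}k_{i,j}\sigma_{i}$. A sum of nonnegative slacks $p_{j}\varepsilon_{j}-\sum_{i}k_{i,j}\sigma_{i}$ equals zero, so each slack vanishes, which is exactly (\ref{24}).

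The particle interpretation then reads off directly from the definition of ``compounded''. The only delicate point is the attainment step. One must make sure the comparison class $\mathbb{B}_{n}$ in (\ref{23}) contains a channel reaching the bound. One should also note that the $\varepsilon'_{j}$ of a competitor may be taken in $[0,1/2]$, since every symmetric channel has such a BSC representation, so Theorem \ref{the01} applies to competitors as well.
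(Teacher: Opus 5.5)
Your proof is correct. It differs from the paper's in the necessity direction; the sufficiency direction is the same as the paper's (sum (\ref{24}) over $j$ to get $P_{\epsilon}(W)=P_{\epsilon}(Q)$, then use the lower bound).

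\textbf{The paper's necessity argument} works block by block. It takes the 1-matrix from Theorem~\ref{the01} and asserts that each $\mathrm{B}(\varepsilon_j)$ must be the 2-P-degradation of $\sum_{i}\frac{k_{i,j}}{p_j}\mathrm{B}(\sigma_i)$. The implicit reason is an exchange: otherwise, replacing $\varepsilon_j$ by $\sum_i k_{i,j}\sigma_i/p_j$ would give a channel in $\mathbb{B}_n$ that is still a degradation of $Q$ but has smaller error probability. It then applies the uniqueness part of Lemma~\ref{lem03} column by column to obtain (\ref{24}).

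\textbf{Your necessity argument} determines the minimum in (\ref{23}) globally. The bound $P_{\epsilon}(W')\geq P_{\epsilon}(Q)$ is attained by $\mathrm{B}(\sum_i q_i\sigma_i)\in\mathbb{B}_1\subseteq\mathbb{B}_n$. Hence the P-degradations are exactly the degradations with $P_{\epsilon}(W)=P_{\epsilon}(Q)$. A vanishing sum of the nonnegative slacks in (\ref{19}) then gives (\ref{24}).

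\textbf{What each route buys.} Your route makes explicit what the paper leaves implicit: the optimal value is $P_{\epsilon}(Q)$, independently of $n$. It also needs neither the unstated exchange step nor the uniqueness in Lemma~\ref{lem03}, whose proof relies on the representation (\ref{44}) from Appendix A. The paper's local route, in turn, never has to exhibit a global minimizer or use the nesting $\mathbb{B}_1\subseteq\mathbb{B}_n$.

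Two minor simplifications are available to you:
\begin{itemize}
\item The lower bound for competitors follows directly from (\ref{12}), so you need not re-invoke Theorem~\ref{the01}. This also removes any concern about representing competitors with crossover probabilities in $[0,1/2]$.
\item You can avoid the nesting remark by using, as the attaining channel, the compounded channel $\sum_j p_j\mathrm{B}(\sum_i k_{i,j}\sigma_i/p_j)$ built from $W$'s own 1-matrix. It lies in $\mathbb{B}_n$ trivially and is a degradation of $Q$ by Theorem~\ref{the01}.
\end{itemize}
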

\begin{proof}
According to Theorem \ref{the01}, there is a 1-matrix $(k_{i,j})_{m\times n}$ of pattern $(q_{1},\ldots,q_{m};p_{1},\ldots,p_{n})$ such that $\mathrm{B}(\varepsilon_{j})\preccurlyeq\sum_{i\in[m]}\frac{k_{i,j}}{p_{j}}\mathrm{B}(\sigma_{i})$ for each $j\in[n]$. Clearly, we have $P_{\epsilon}(W)=\sum_{j\in[n]}p_{j}\varepsilon_{j}\geq \sum_{i\in[m]}q_{i}\sigma_{i}=P_{\epsilon}(Q)$.

If for $j\in[n]$ the particle $p_{j}\mathrm{B}(\varepsilon_{j})$ is compounded from the particles $k_{1, j}\mathrm{B}(\sigma_{1}),\ldots,k_{m, j}\mathrm{B}(\sigma_{m})$, according to Lemma \ref{lem03} we have (\ref{24}), and then from
\begin{align*}
P_{\epsilon}(W)=\sum_{j\in[n]}p_{j}\varepsilon_{j}=\sum_{j\in[n]}\sum_{i\in[m]}k_{i,j}\sigma_{i}=\sum_{i\in[m]}q_{i}\sigma_{i}=P_{\epsilon}(Q)
\end{align*}
we see that $W$ is a $2n$-P-degradation of $Q$.

Now we assume $W\preccurlyeq_{\text{P}} Q$. Clearly, for each $j\in[n]$, the BSC $\mathrm{B}(\varepsilon_{j})$ is a 2-P-degradation of $\sum_{i\in[m]}\frac{k_{i,j}}{p_{j}}\mathrm{B}(\sigma_{i})$. Hence, according to Lemma \ref{lem03} we see (\ref{24}) and thus the particle $p_{j}\mathrm{B}(\varepsilon_{j})$ is compounded from the particles $k_{1, j}\mathrm{B}(\sigma_{1}),\ldots,k_{m, j}\mathrm{B}(\sigma_{m})$.
\end{proof}

Although Theorem \ref{the01} (or Theorem \ref{the02}) enables the explicit construction of all symmetric degradations (or P-degradations, respectively) for any symmetric BIDMC, verifying the relation $W\preccurlyeq Q$ or $W\preccurlyeq_{\text{P}} Q$ for given symmetric BIDMCs $W$ and $Q$ is far from straightforward. The following lemma gives some necessary conditions for such verifications.

\begin{lemma}\label{lem04}
Assume $W=\sum_{j\in[n]}p_{j}\mathrm{B}(\varepsilon_{j})\in \mathbb{B}_{n}^{\ast}$ is a degradation of $Q=\sum_{i\in[m]}q_{i}\mathrm{B}(\sigma_{i})\in \mathbb{B}_{m}^{\ast}$, where $0\leq \varepsilon_{1}<\cdots<\varepsilon_{n}\leq1/2$ and $0\leq \sigma_{1}<\cdots<\sigma_{m}\leq1/2$. Then, we have $\varepsilon_{1}\geq\sigma_{1}$. Furthermore, if $W$ is a $2n$-P-degradation of $Q$, then we have $\varepsilon_{n}\leq\sigma_{m}$.
\end{lemma}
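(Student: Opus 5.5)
The plan is to apply Theorem \ref{the01} for the first claim and Theorem \ref{the02} for the second, and in each case read off the inequality from a weighted average over a single column of the 1-matrix.

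\emph{First claim.} Since $W\preccurlyeq Q$, Theorem \ref{the01} gives a 1-matrix $(k_{i,j})_{m\times n}$ of pattern $(q_{1},\ldots,q_{m};p_{1},\ldots,p_{n})$ satisfying (\ref{19}). Take the column $j=1$. Because $p_{1}>0$ (as $W\in\mathbb{B}^{\ast}_{n}$) and $\sum_{i}k_{i,1}=p_{1}$, the right-hand side of (\ref{19}) is $p_{1}$ times a convex combination of $\sigma_{1},\ldots,\sigma_{m}$. Since $\sigma_{i}\geq\sigma_{1}$ for every $i$, we get
\begin{align*}
p_{1}\varepsilon_{1}\geq\sum_{i\in[m]}k_{i,1}\sigma_{i}\geq p_{1}\sigma_{1}.
\end{align*}
Dividing by $p_{1}$ yields $\varepsilon_{1}\geq\sigma_{1}$. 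In fact the same argument gives $\varepsilon_{j}\geq\sigma_{1}$ for every $j$, so nothing special about $j=1$ is used.

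\emph{Second claim.} If $W\preccurlyeq_{\text{P}}Q$, Theorem \ref{the02} provides a 1-matrix with equality (\ref{24}) for every $j$. Again take a single column, now $j=n$, where $p_{n}>0$. Since $\sigma_{i}\leq\sigma_{m}$ for every $i$,
\begin{align*}
p_{n}\varepsilon_{n}=\sum_{i\in[m]}k_{i,n}\sigma_{i}\leq p_{n}\sigma_{m},
\end{align*}
which gives $\varepsilon_{n}\leq\sigma_{m}$. Here the equality in (\ref{24}) is essential: the inequality (\ref{19}) alone only bounds $\varepsilon_{n}$ from below. This matches the fact that a general degradation may push crossover probabilities up toward $1/2$.

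There is no real obstacle. The only points to check are the applicability hypotheses of Theorems \ref{the01} and \ref{the02}, namely $W\in\mathbb{B}^{\ast}_{n}$ and all crossover probabilities lying in $[0,1/2]$, both of which are given, and the positivity of $p_{1}$ and $p_{n}$, which holds by the definition of $\mathbb{B}^{\ast}_{n}$.
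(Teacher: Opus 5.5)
Your proof is correct and follows essentially the same route as the paper: the paper also takes the 1-matrix from Theorem \ref{the01}, bounds $\varepsilon_{1}$ from below by the column-$1$ weighted average of the $\sigma_{i}$, and for the second claim uses that each column gives a 2-P-degradation (equivalently, the equality (\ref{24}) of Theorem \ref{the02}) to bound $\varepsilon_{n}$ by the column-$n$ average. The difference is only cosmetic: the paper phrases these steps through $P_{\epsilon}$ and Lemma \ref{lem03}, whereas you read them directly off (\ref{19}) and (\ref{24}).
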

\begin{proof}
According to Theorem \ref{the01}, there is a 1-matrix $(k_{i, j})_{m\times n}$ of pattern $(q_{1},\ldots,q_{m};p_{1},\ldots,p_{n})$ such that $\mathrm{B}(\varepsilon_{j})\preccurlyeq Q_{j}=\sum_{i\in[m]}\frac{k_{i,j}}{p_{j}}\mathrm{B}(\sigma_{i})$ for $j\in[n]$. Hence, $\varepsilon_{1}=P_{\epsilon}(\mathrm{B}(\varepsilon_{1}))\geq P_{\epsilon}(Q_{j})=\sum_{i\in[m]}\frac{k_{i,1}}{p_{1}}\sigma_{i}\geq \sigma_{1}$. Furthermore, if $W$ is a $2n$-P-degradation of $Q$, then $\mathrm{B}(\varepsilon_{n})\preccurlyeq \sum_{i\in[m]}\frac{k_{i,n}}{p_{n}}\mathrm{B}(\sigma_{i})$ is a 2-P-degradation and thus $\varepsilon_{n}=P_{\epsilon}(\mathrm{B}(\varepsilon_{n}))=P_{\epsilon}(Q_{n})=\sum_{i\in[m]}\frac{k_{i,n}}{p_{n}}\sigma_{i}\leq \sigma_{m}$.
\end{proof}

We conclude this section by characterizing all the $(4,4)$-P-degradations of BIDMCs.

\begin{lemma}\label{lem05}
For $0\leq \sigma_{1}<\sigma_{2}\leq1/2, 0\leq \varepsilon_{1}<\varepsilon_{2}\leq1/2$ and $Q=\overline{q}\mathrm{B}(\sigma_{1})+q\mathrm{B}(\sigma_{2})\in \mathbb{B}_{2}^{\ast}$, there is a number $p\in(0,1)$ such that $W=\overline{p}\mathrm{B}(\varepsilon_{1})+p\mathrm{B}(\varepsilon_{2})$ is a (4,4)-P-degradation of $Q$ if and only if
\begin{align}\label{25}
\sigma_{1}\leq \varepsilon_{1}<\overline{q}\sigma_{1}+q\sigma_{2}<\varepsilon_{2}\leq\sigma_{2}.
\end{align}
Notice that $p$ is the unique number determined by
\begin{align}\label{26}
\overline{p}\varepsilon_{1}+p\varepsilon_{2}=\overline{q}\sigma_{1}+q\sigma_{2}.
\end{align}
\end{lemma}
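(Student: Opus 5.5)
The plan is to reduce everything to Theorem \ref{the02}. In the case $m=n=2$, $W=\overline{p}\mathrm{B}(\varepsilon_{1})+p\mathrm{B}(\varepsilon_{2})$ is a (4,4)-P-degradation of $Q$ if and only if there is a nonnegative $2\times2$ 1-matrix $(k_{i,j})$ of pattern $(\overline{q},q;\overline{p},p)$ with
\begin{align*}
k_{1,1}\sigma_{1}+k_{2,1}\sigma_{2}=\overline{p}\varepsilon_{1},\qquad k_{1,2}\sigma_{1}+k_{2,2}\sigma_{2}=p\varepsilon_{2}.
\end{align*}
A $2\times 2$ 1-matrix of fixed pattern has one free parameter. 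We take $t=k_{2,1}\in[\max\{0,q-p\},\min\{q,\overline{p}\}]$, so that $k_{1,1}=\overline{p}-t$, $k_{2,2}=q-t$ and $k_{1,2}=p-q+t$. Summing the two equations gives (\ref{26}), which is therefore necessary. Given (\ref{26}), the second equation follows from the first. So the condition becomes: there exists an admissible $t$ with $\overline{p}\sigma_{1}+t(\sigma_{2}-\sigma_{1})=\overline{p}\varepsilon_{1}$.

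\emph{Necessity.} Suppose such $p\in(0,1)$ and $t$ exist. The first equation says that $\varepsilon_{1}$ is a convex combination of $\sigma_{1},\sigma_{2}$ with weights $k_{1,1}/\overline{p}$ and $k_{2,1}/\overline{p}$. Similarly, $\varepsilon_{2}$ is a convex combination of $\sigma_1,\sigma_2$ with weights $k_{1,2}/p$ and $k_{2,2}/p$. Hence $\sigma_{1}\le\varepsilon_{1}$ and $\varepsilon_{2}\le\sigma_{2}$, which also follows from Lemma \ref{lem04}. Next, (\ref{26}) writes $\overline{q}\sigma_{1}+q\sigma_{2}$ as a strict convex combination of $\varepsilon_{1}<\varepsilon_{2}$, because $p\in(0,1)$. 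This yields the two strict inequalities in (\ref{25}).

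\emph{Sufficiency.} Assume (\ref{25}) and set $\mu=\overline{q}\sigma_{1}+q\sigma_{2}$. Since $\varepsilon_{1}<\mu<\varepsilon_{2}$, the number $p=(\mu-\varepsilon_{1})/(\varepsilon_{2}-\varepsilon_{1})$ lies in $(0,1)$ and satisfies (\ref{26}); this also shows $p$ is unique once $\varepsilon_1,\varepsilon_2$ are fixed. Put $t=\overline{p}(\varepsilon_{1}-\sigma_{1})/(\sigma_{2}-\sigma_{1})$. It remains to check the four nonnegativity constraints.
\begin{itemize}
\item $t\ge0$ follows from $\varepsilon_{1}\ge\sigma_{1}$.
\item $t\le\overline{p}$ follows from $\varepsilon_{1}\le\sigma_{2}$.
\item $t\le q$: this is equivalent to $\overline{p}\varepsilon_{1}\le\overline{p}\sigma_{1}+q(\sigma_{2}-\sigma_{1})$. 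Using (\ref{26}), the right side equals $\mu-p\sigma_{1}$ and the left side equals $\mu-p\varepsilon_{2}$, so the inequality reduces to $\varepsilon_{2}\ge\sigma_{1}$, which holds.
\item $t\ge q-p$: equivalently $k_{1,2}\ge0$, i.e. $p\sigma_{1}+(q-t)(\sigma_2-\sigma_1)\le p\varepsilon_2$. By the same substitution this reduces to $\varepsilon_{2}\le\sigma_{2}$.
\end{itemize}
Theorem \ref{the02} then gives $W\preccurlyeq_{\text{P}}Q$.

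The main obstacle is this bookkeeping of the four nonnegativity constraints in the sufficiency direction. The key step is using (\ref{26}) to convert the constraints on $k_{2,2}$ and $k_{1,2}$ into the bounds $\sigma_1\le\varepsilon_2\le\sigma_2$. Everything else is immediate from Theorem \ref{the02}.
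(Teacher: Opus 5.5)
Your proof is correct and is essentially the paper's argument. Both reduce to Theorem~\ref{the02} and obtain necessity from (\ref{26}) together with the bounds $\sigma_1\le\varepsilon_1$, $\varepsilon_2\le\sigma_2$ of Lemma~\ref{lem04}. For sufficiency both use the same 1-matrix: your $t=\overline{p}(\varepsilon_1-\sigma_1)/(\sigma_2-\sigma_1)$ is the paper's $k_{2,1}=\overline{p}a_1$, and then $q-t=pa_2$ and $p-q+t=p\overline{a_2}$, which makes your last two nonnegativity checks immediate.

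One small slip in your last bullet: $k_{1,2}\ge 0$ is equivalent to $p\sigma_1+(q-t)(\sigma_2-\sigma_1)\le p\sigma_2$, not $\le p\varepsilon_2$. With your choice of $t$ the version you wrote holds with equality, so it carries no content. Since the left side equals $p\varepsilon_2$, the corrected inequality does reduce to $\varepsilon_2\le\sigma_2$ as you claim.
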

\begin{proof}
If $W=\overline{p}\mathrm{B}(\varepsilon_{1})+p\mathrm{B}(\varepsilon_{2})$ is a P-degradation of $Q$ for some $p\in(0,1)$, we have
\begin{equation*}
\overline{p}\varepsilon_{1}+p\varepsilon_{2}=P_{\epsilon}(W)=P_{\epsilon}(Q)=\overline{q}\sigma_{1}+q\sigma_{2}
\end{equation*}
and then according to Lemma \ref{lem04} we see (\ref{25}).

Now we assume (\ref{25}) and $p$ is the number determined by (\ref{26}). Clearly, we have $p\in(0,1)$. Let $a_{j}=(\varepsilon_{j}-\sigma_{1})/(\sigma_{2}-\sigma_{1}),\ j=1, 2$. Clearly, we have $0\leq a_{j}\leq1$ and $\varepsilon_{j}=\overline{a_{j}}\sigma_{1}+a_{j}\sigma_{2}, j=1, 2$. Let $k_{1,1}=\overline{p}\,\overline{a_{1}}, k_{1,2}=p\overline{a_{2}}, k_{2,1}=\overline{p}a_{1}$ and $k_{2,2}=pa_{2}$. Then, according to (\ref{26}) we have
\[
k_{1,1}+k_{2,1}=\overline{p}\,\overline{a_{1}}+\overline{p}a_{1}=\overline{p},\ k_{1,2}+k_{2,2}=p\overline{a_{2}}+pa_{2}=p,
\]
\[
k_{2,1}+k_{2,2}=\overline{p}a_{1}+pa_{2}=\frac{\overline{p}(\varepsilon_{1}-\sigma_{1})}{\sigma_{2}-\sigma_{1}}+\frac{p(\varepsilon_{2}-\sigma_{1})}{\sigma_{2}-\sigma_{1}}
=\frac{\overline{q}\sigma_{1}+q\sigma_{2}-\sigma_{1}}{\sigma_{2}-\sigma_{1}}=q,
\]
\[
k_{1,1}+k_{1,2}=(k_{1,1}+k_{2,1})+(k_{1,2}+k_{2,2})-(k_{2,1}+k_{2,2})=\overline{p}+p-q=\overline{q}.
\]
Therefore, according to Theorem \ref{the02} and
\begin{equation*}
\varepsilon_{j}=\overline{a_{j}}\sigma_{1}+a_{j}\sigma_{2}=(k_{1,j}\sigma_{1}+k_{2,j}\sigma_{2})/(k_{1,j}+k_{2,j}),\ j=1,2
\end{equation*}
we see $W=\overline{p}\mathrm{B}(\varepsilon_{1})+p\mathrm{B}(\varepsilon_{2})\preccurlyeq_{\text{P}} Q=\overline{q}\mathrm{B}(\sigma_{1})+q\mathrm{B}(\sigma_{2})$.
\end{proof}
\section{Optimization of the P-degradations}
\label{sec05}
For any given symmetric BIDMC, according to Theorem \ref{the02} there are infinite $2n$-P-degradations, each of them minimizes the probability of error decoding. To optimize the $2n$-degradations, in this section we consider to derive conditions for that one $2n$-P-degradation may be degradation of another $2n$-P-degradation.
\begin{lemma}\label{lem06}
Assume that $W=\sum_{j\in[n]}p_{j}\mathrm{B}(\varepsilon_{j})\in \mathbb{B}_{n}$ is a P-degradation of $Q=\sum_{i\in[m]}q_{i}\mathrm{B}(\sigma_{i})\in \mathbb{B}_{m}$, where $0\leq \sigma_{1}<\cdots<\sigma_{m}\leq1/2$ and $\sigma_{1}\leq \varepsilon_{1}\leq\varepsilon_{2}\leq\cdots\leq\varepsilon_{n}\leq\sigma_{m}$. Let $(k_{i,j})_{m\times n}$ be the 1-matrix of pattern $(q_{1},\ldots,q_{m};p_{1},\ldots,p_{n})$ with (\ref{24}).

\noindent
1.
If $k_{i', j'}\neq0$ for some $i', j'$ with $\sigma_{i'}\geq \varepsilon_{j'+1}$, let
\begin{gather*}
k'_{i,j}=
\begin{cases}
	0, & \text{ if }(i,j)=(i',j'),\\
    k_{i,j}+k_{i', j'}, & \text{ if }(i,j)=(i',j'+1),\\
	k_{i,j}, & \text{ otherwise },
\end{cases}
\end{gather*}
\begin{equation*}
p'_{j}=\sum_{i\in[m]}k'_{i,j},\  \varepsilon'_{j}=\frac{1}{p'_{j}}\sum_{i\in[m]}k'_{i,j}\sigma_{i}.
\end{equation*}

Then, $W\preccurlyeq_{\text{P}} W'=\sum_{j\in[n]}p'_{j}\mathrm{B}(\varepsilon'_{j})\preccurlyeq_{\text{P}} Q$.

\noindent
2.
If $k_{i'', j''}\neq0$ for some $i'', j''$ with $\sigma_{i''}\leq \varepsilon_{j''-1}$, let
\begin{gather*}
k''_{i,j}=
\begin{cases}
	0, & \text{ if }(i,j)=(i'',j''),\\
    k_{i,j}+k_{i'', j''}, & \text{ if }(i,j)=(i'',j''-1),\\
	k_{i,j}, & \text{ otherwise },
\end{cases}
\end{gather*}
\begin{equation*}
p''_{j}=\sum_{i\in[m]}k''_{i,j},\  \varepsilon''_{j}=\frac{1}{p''_{j}}\sum_{i\in[m]}k''_{i,j}\sigma_{i}.
\end{equation*}

Then, $W\preccurlyeq_{\text{P}} W''=\sum_{j\in[n]}p''_{j}\mathrm{B}(\varepsilon''_{j})\preccurlyeq_{\text{P}} Q$.

\noindent
3.
If $k_{i_{\ast}, j^{\ast}}k_{i^{\ast}, j_{\ast}}\neq0$ for some $i_{\ast}, i^{\ast}, j_{\ast}, j^{\ast}$ with $\varepsilon_{j_{\ast}}\leq \sigma_{i_{\ast}}<\sigma_{i^{\ast}}\leq \varepsilon_{j^{\ast}}$, let
\begin{gather*}
k_{i,j}^{\ast}=
\begin{cases}
	k_{i,j}-\delta, & \text{ if }(i,j)=(i_{\ast},j^{\ast})\text{ or }(i^{\ast},j_{\ast}),\\
    k_{i,j}+\delta, & \text{ if }(i,j)=(i_{\ast},j_{\ast})\text{ or }(i^{\ast},j^{\ast}),\\
	k_{i,j}, & \text{ otherwise },
\end{cases}
\end{gather*}
\begin{equation*}
p_{j}^{\ast}=\sum_{i\in[m]}k_{i,j}^{\ast},\  \varepsilon_{j}^{\ast}=\frac{1}{p_{j}^{\ast}}\sum_{i\in[m]}k_{i,j}^{\ast}\sigma_{i},
\end{equation*}
where $\delta=\min\{k_{i_{\ast},j^{\ast}}, k_{i^{\ast},j_{\ast}}\}$. Then, $W\preccurlyeq_{\text{P}} W^{\ast}=\sum_{j\in[n]}p_{j}^{\ast}\mathrm{B}(\varepsilon_{j}^{\ast})\preccurlyeq_{\text{P}} Q$.
\end{lemma}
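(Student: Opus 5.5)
For each of the three parts, the claim $W'\preccurlyeq_{\text{P}} Q$ (resp. $W''$, $W^{\ast}$) is immediate from Theorem \ref{the02}. In every case the modified matrix $(k'_{i,j})$ is nonnegative. It has the same row sums $q_i$ as $(k_{i,j})$, since mass is only moved within row $i'$ (or within rows $i_\ast,i^\ast$ with the $\pm\delta$ cancelling). Its column sums are by definition the new weights $p'_j$, and the new $\varepsilon'_j$ are by definition the compounded averages, so (\ref{24}) holds. The only points to check are that $\varepsilon'_j\in[0,1/2]$, which holds because it is a convex combination of the $\sigma_i$, and that empty columns ($p'_j=0$) may be dropped. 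So the real content is $W\preccurlyeq_{\text{P}} W'$. Since $P_\epsilon(W)=P_\epsilon(Q)=P_\epsilon(W')$, it suffices to show $W\preccurlyeq W'$; P-optimality then follows automatically, and by Theorem \ref{the02} applied to the pair $(W,W')$ it is equivalent to exhibiting a 1-matrix of pattern $(p'_1,\ldots,p'_n;p_1,\ldots,p_n)$ that compounds each $p_j\mathrm{B}(\varepsilon_j)$ exactly from particles of $W'$.

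The plan is to localize using Corollary \ref{cor02}, or more simply to build the 1-matrix directly. In part 1 only columns $j'$ and $j'+1$ change, so all other particles $p_j\mathrm{B}(\varepsilon_j)$ are matched identically to $p'_j\mathrm{B}(\varepsilon'_j)=p_j\mathrm{B}(\varepsilon_j)$. It then remains to show that $p_{j'}\mathrm{B}(\varepsilon_{j'})+p_{j'+1}\mathrm{B}(\varepsilon_{j'+1})$ is a P-degradation of $p'_{j'}\mathrm{B}(\varepsilon'_{j'})+p'_{j'+1}\mathrm{B}(\varepsilon'_{j'+1})$, which is a $(4,4)$ situation handled by Lemma \ref{lem05}, after normalizing by the total weight $p_{j'}+p_{j'+1}=p'_{j'}+p'_{j'+1}$. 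Write $k=k_{i',j'}$. Removing the mass $k$ at $\sigma_{i'}\ge\varepsilon_{j'+1}\ge\varepsilon_{j'}$ from column $j'$ lowers its average, so $\varepsilon'_{j'}\le\varepsilon_{j'}$; if $p'_{j'}=0$ the statement degenerates to Lemma \ref{lem03}. Adding mass at $\sigma_{i'}\ge\varepsilon_{j'+1}$ to column $j'+1$ raises its average, so $\varepsilon'_{j'+1}\ge\varepsilon_{j'+1}$. The total error mass $p_{j'}\varepsilon_{j'}+p_{j'+1}\varepsilon_{j'+1}$ is preserved, and this gives exactly the interlacing $\varepsilon'_{j'}\le\varepsilon_{j'}\le$ (average) $\le\varepsilon_{j'+1}\le\varepsilon'_{j'+1}$ required in (\ref{25}). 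The boundary cases, where $\varepsilon_{j'}=\varepsilon_{j'+1}$ or some of the inequalities are equalities, are handled by the trivial identity matching or by Lemma \ref{lem03}. Part 2 is the mirror image of part 1, with the mass moved down to column $j''-1$.

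Part 3 is the main obstacle, because all four entries change and the two affected columns are $j_\ast<j^\ast$, which need not be adjacent, although only these two columns change. Column $j^\ast$ loses $\delta$ at $\sigma_{i_\ast}\le\varepsilon_{j^\ast}$ and gains $\delta$ at $\sigma_{i^\ast}$, so its weight is kept and $\varepsilon^\ast_{j^\ast}=\varepsilon_{j^\ast}+\delta(\sigma_{i^\ast}-\sigma_{i_\ast})/p_{j^\ast}\ge\varepsilon_{j^\ast}$. Symmetrically, $\varepsilon^\ast_{j_\ast}=\varepsilon_{j_\ast}-\delta(\sigma_{i^\ast}-\sigma_{i_\ast})/p_{j_\ast}\le\varepsilon_{j_\ast}$. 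The weights are unchanged, so the two-particle subsystem $p_{j_\ast}\mathrm{B}(\varepsilon_{j_\ast})+p_{j^\ast}\mathrm{B}(\varepsilon_{j^\ast})$ is compared with $p_{j_\ast}\mathrm{B}(\varepsilon^\ast_{j_\ast})+p_{j^\ast}\mathrm{B}(\varepsilon^\ast_{j^\ast})$ at the same weights and the same average. Again $\varepsilon^\ast_{j_\ast}\le\varepsilon_{j_\ast}\le\varepsilon_{j^\ast}\le\varepsilon^\ast_{j^\ast}$, so Lemma \ref{lem05} applies, with the $p$ of (\ref{26}) forced to equal the common weight ratio because averages are preserved. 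Gluing the untouched particles back, via Corollary \ref{cor01} or simply the identity rows of the 1-matrix, yields $W\preccurlyeq W^\ast$, and equality of error probabilities then gives $W\preccurlyeq_{\text{P}}W^\ast$. The step I expect to need the most care is the degenerate equalities, where Lemma \ref{lem05} requires strict inequalities. There I would either perturb the argument or write the $2\times2$ 1-matrix explicitly as in the proof of Lemma \ref{lem05}, using $a_j=(\varepsilon_j-\varepsilon^\ast_{j_\ast})/(\varepsilon^\ast_{j^\ast}-\varepsilon^\ast_{j_\ast})$, which remains valid whenever $\varepsilon^\ast_{j_\ast}<\varepsilon^\ast_{j^\ast}$. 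If instead $\varepsilon^\ast_{j_\ast}=\varepsilon^\ast_{j^\ast}$, the sandwich forces nothing to have moved.
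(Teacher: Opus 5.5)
Your proposal is correct and takes essentially the same route as the paper's proof in Appendix B. You get $W'\preccurlyeq_{\text{P}} Q$ from Theorem \ref{the02}, isolate the two changed columns, show the new crossover probabilities straddle the old ones with total weight and error mass preserved, invoke Lemma \ref{lem05}, and glue the untouched particles back; your block 1-matrix gluing is exactly the addition direction needed, where the paper cites Corollary \ref{cor02}. Two refinements: your handling of the case $\varepsilon_{j'}=\varepsilon_{j'+1}$ via Lemma \ref{lem03} covers a case the paper's appeal to Lemma \ref{lem05} skips, and in part 3 the hypothesis already forces $\varepsilon_{j_\ast}<\varepsilon_{j^\ast}$, so Lemma \ref{lem05} applies directly and your final degenerate case $\varepsilon^\ast_{j_\ast}=\varepsilon^\ast_{j^\ast}$ cannot occur.
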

 \begin{figure}[t]
\begin{center}
\includegraphics[width=0.6\linewidth]{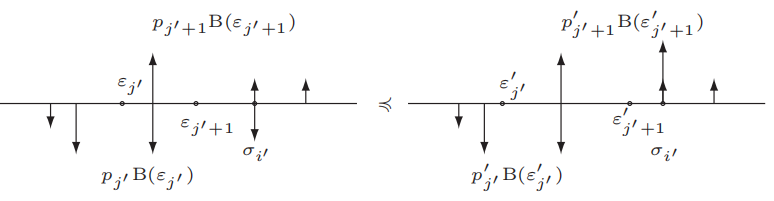}
\caption{{ \small $k_{i',j'}\neq 0$ with $\sigma_{i'}\geq \varepsilon_{j'+1}$}.}
 \label{fig01}
 \end{center}
\end{figure}

 \begin{figure}[t]
\begin{center}
\includegraphics[width=0.6\linewidth]{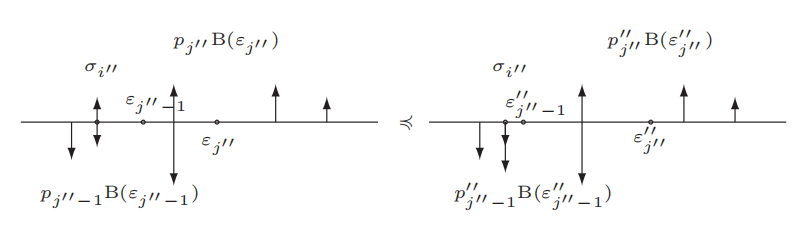}
\caption{{ \small $k_{i'',j''}\neq 0$ with $\sigma_{i''}\leq \varepsilon_{j''-1}$}.}
 \label{fig02}
 \end{center}
\end{figure}

\begin{proof}
The proof is given in Appendix B.
\end{proof}
\begin{rmk}\label{rem01}
The first conclusion in Lemma \ref{lem06} implies: if $k_{i',j'}\neq0$ for some $i', j'$ with $\sigma_{i'}\geq \varepsilon_{j'+1}$, then a better P-degradation may be obtained from $W$ by replacing $p_{j'}\mathrm{B}(\varepsilon_{j'})$ with the particle compounded from those in $\{k_{i,j'}\mathrm{B}(\sigma_{i}): i\in[m]\setminus\{i'\}\}$, and replacing $p_{j'+1}\mathrm{B}(\varepsilon_{j'+1})$ with the particle compounded from those in $\{(k_{i',j'}+k_{i',j'+1})\mathrm{B}(\sigma_{i'})\}\cup\{k_{i,j'+1}\mathrm{B}(\sigma_{i}): i\in[m]\setminus\{i'\}\}$, respectively,
as depicted in Figure \ref{fig01}.

The second conclusion in Lemma \ref{lem06} implies: if $k_{i'',j''}\neq0$ for some $i'', j''$ with $\sigma_{i''}\leq \varepsilon_{j''-1}$, then a better P-degradation may be obtained from $W$ by replacing $p_{j''}\mathrm{B}(\varepsilon_{j''})$ with the particle compounded from those in $\{k_{i,j''}\mathrm{B}(\sigma_{i}): i\in[m]\setminus\{i''\}\}$, and replacing $p_{j''-1}\mathrm{B}(\varepsilon_{j''-1})$ with the particle compounded from those in $\{(k_{i'',j''}+k_{i'',j''-1})\mathrm{B}(\sigma_{i''})\}\cup\{k_{i,j''-1}\mathrm{B}(\sigma_{i}): i\in[m]\setminus\{i''\}\}$, respectively, as depicted in Figure \ref{fig02}.

The third conclusion in Lemma \ref{lem06} implies: if $k_{i_{\ast},j^{\ast}}k_{i^{\ast},j_{\ast}}\neq0$ for some $i_{\ast}, i^{\ast}, j_{\ast}, j^{\ast}$ with $\varepsilon_{j_{\ast}}\leq\sigma_{i_{\ast}}<\sigma_{i^{\ast}}\leq\varepsilon_{j^{\ast}}$, then a better P-degradation may be obtained from $W$ by replacing $p_{j_{\ast}}\mathrm{B}(\varepsilon_{j_{\ast}})$ with the particle compounded from those in
\begin{equation*}
\{(k_{i_{\ast},j_{\ast}}+\delta)\mathrm{B}(\sigma_{i_{\ast}}), (k_{i^{\ast},j_{\ast}}-\delta)\mathrm{B}(\sigma_{i^{\ast}})\}\cup\{k_{i,j_{\ast}}\mathrm{B}(\sigma_{i}): i\in[m]\setminus\{i_{\ast}, i^{\ast}\}\},
\end{equation*}
and replacing $p_{j^{\ast}}\mathrm{B}(\varepsilon_{j^{\ast}})$ with the particle compounded from those in
\begin{equation*}
\{(k_{i_{\ast},j^{\ast}}-\delta)\mathrm{B}(\sigma_{i_{\ast}}), (k_{i^{\ast},j^{\ast}}+\delta)\mathrm{B}(\sigma_{i^{\ast}})\}\cup\{k_{i,j^{\ast}}\mathrm{B}(\sigma_{i}): i\in[m]\setminus\{i_{\ast}, i^{\ast}\}\},
\end{equation*}
respectively, as depicted in Figure \ref{fig03}, where $\delta=\min\{k_{i_{\ast},j^{\ast}},k_{i^{\ast},j_{\ast}}\}$.
\end{rmk}

 \begin{figure}[t]
\begin{center}
\includegraphics[width=0.6\linewidth]{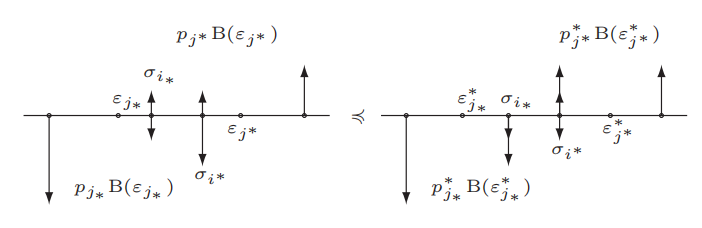}
\caption{{ \small $k_{i_{\ast},j^{\ast}}k_{i^{\ast},j_{\ast}}\neq 0$ with $\varepsilon_{j_{\ast}}\leq \sigma_{i_{\ast}}<\sigma_{i^{\ast}}\leq \varepsilon_{j^{\ast}}$}.}
 \label{fig03}
 \end{center}
\end{figure}

Let $q_{0}=q_{m+1}=0$, $Q=\sum_{i\in[m]}q_{i}\mathrm{B}(\sigma_{i})\in \mathbb{B}_{m}^{\ast}$, where $0\leq\sigma_{1}<\cdots<\sigma_{m}\leq1/2$. For $0\leq j<j'\leq m+1$, $0\leq t\leq q_{j}$ and $0\leq t'\leq q_{j'}$, $S=\sum_{j\leq i\leq j'}r_{i}\mathrm{B}(\sigma_{i})$ is referred to as the $(j,j';t,t')$-\emph{segment} of $Q$ if $r_{j}=t$, $r_{j'}=q_{j'}-t'$ and $r_{i}=q_{i}$ for $j<i<j'$. Note that, the $(j,j';q_{j},t')$- and $(j-1,j';0,t')$-segments coincide, and so do the $(j,j';t,0)$- and $(j,j'+1;0,q_{j'+1})$-segments. For $1\leq i_{2}<\cdots<i_{n}\leq m$ and $0\leq s_{j}\leq q_{i_{j}}$, $j=2,3,\ldots,n$, let $D_{Q}(i_{2},\ldots,i_{n};s_{2},\ldots,s_{n})=\sum_{j\in[n]}p_{j}\mathrm{B}(\varepsilon_{j})$ be the BIDMC in $\mathbb{B}_{n}^{\ast}$ such that for each $j\in[n]$ the particle $p_{j}\mathrm{B}(\varepsilon_{j})$ is \emph{compounded} from the $(i_{j},i_{j+1};s_{i_{j}},s_{i_{j+1}})$-segment of $Q$, i.e., $p_{j}\mathrm{B}(\varepsilon_{j})$ is compounded from the particles in
\begin{align}\label{27}
\{s_{j}\mathrm{B}(\sigma_{i_{j}}),(q_{i_{j+1}}-s_{j+1})\mathrm{B}(\sigma_{i_{j+1}})\}\cup\{q_{i}\mathrm{B}(\sigma_{i}): i_{j}<i<i_{j+1}\},
\end{align}
where $i_{1}=s_{1}=0$, $i_{n+1}=m+1$ and $s_{n+1}=0$. The pairs $(i_{2},s_{2}),\ldots,(i_{n},s_{n})$ are referred to as the \emph{splitting patterns} of $D_{Q}(i_{2},\ldots,i_{n};s_{2},\ldots,s_{n})$. Clearly, any splitting pattern $(i_{l},s_{l})$ with $s_{l}=q_{i_{l}}$ can be equivalently replaced by $(i_{l}-1,0)$ if $i_{l-1}<i_{l}-1$, and any splitting pattern $(i_{l},s_{l})$ with $s_{l}=0$ can be equivalently replaced by $(i_{l}+1,q_{i_{l}+1})$ if $i_{l+1}>i_{l}+1$.

According to Lemma \ref{lem06}, one can deduce the following theorem easily.

\begin{theorem}\label{the03}
Let $W$ be a $2n$-degradation of $Q=\sum_{i\in[m]}q_{i}\mathrm{B}(\sigma_{i})\in \mathbb{B}_{m}^{\ast}$, where $m>n\geq2$ and $0\leq \sigma_{1}<\cdots <\sigma_{m}\leq1/2$. There exsist $1\leq i_{2}<\cdots <i_{n}\leq m$ and $0\leq s_{j}\leq q_{i_{j}}, j=2, 3, \ldots, n$, such that
\begin{equation}\label{28}
W\preccurlyeq_{\text{P}} D_{Q}(i_{2},\ldots, i_{n}; s_{2},\ldots, s_{n})=\sum_{j\in[n]}p_{j}\mathrm{B}(\varepsilon_{j})\preccurlyeq_{\text{P}} Q,
\end{equation}
where $0\leq \varepsilon_{1}<\cdots <\varepsilon_{n}\leq1/2$. Furthermore, if $i_{j+1}=i_{j}+1$ for some $j\in[n]$, then we have\\

1. $s_{j+1}=0$ and $p_{j}\mathrm{B}(\varepsilon_{j})=q_{i_{j+1}}\mathrm{B}(\sigma_{i_{j+1}})$ when $s_{j}=0$,\\

2. $s_{j}=q_{i_{j}}$ and $p_{j}\mathrm{B}(\varepsilon_{j})=q_{i_{j}}\mathrm{B}(\sigma_{i_{j}})$ when $s_{j+1}=q_{i_{j+1}}$.\\

\noindent
Therefore, we have $p_{j}\geq q_{i}$ if $\varepsilon_{j}=\sigma_{i}$. In particular, we have $p_{1}\geq q_{1}$ and $p_{n}\geq q_{m}$.

\end{theorem}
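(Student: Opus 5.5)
The plan is to start from the 1-matrix supplied by Theorem \ref{the02}, push it to an extremal matrix using the three moves of Lemma \ref{lem06}, and then read off the segment structure from the fact that no move applies any more. I read the hypothesis as $W\preccurlyeq_{\text{P}} Q$, since otherwise $W\preccurlyeq_{\text{P}} D_Q$ cannot hold: $P_{\epsilon}$ is preserved along a P-chain.

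\textbf{Choosing an extremal matrix.} By Theorem \ref{the02}, $W=\sum_j p_j\mathrm{B}(\varepsilon_j)$ comes with a 1-matrix $(k_{i,j})$ satisfying (\ref{24}). Consider the set $\mathcal{K}$ of all 1-matrices $(k'_{i,j})$ of pattern $(q_1,\ldots,q_m;\cdot)$ with $n$ columns, sorted by their compounded crossover values, whose compounded channel $W'$ satisfies $W\preccurlyeq W'$. For such $W'$, Theorem \ref{the02} gives $W'\preccurlyeq_{\text{P}} Q$. Because $P_\epsilon(W)=P_\epsilon(Q)$, we then also get $W\preccurlyeq_{\text{P}} W'$.

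The set $\mathcal{K}$ is nonempty and compact. Compactness holds because the degradation relation is closed under limits: the criterion (\ref{19}) is a closed condition on the 1-matrices. On $\mathcal{K}$, maximize the potential
\[
\Phi(k')=\sum_{i,j} i\,j\,k'_{i,j}-\sum_{i,j}[k'_{i,j}>0].
\]
If the counting term spoils upper semicontinuity, I will instead do a lexicographic choice: first maximize $\sum ij\,k'_{i,j}$, then minimize the support size among the maximizers.

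Move 3 of Lemma \ref{lem06} changes $\sum ij\,k_{i,j}$ by $\delta(i^{\ast}-i_{\ast})(j^{\ast}-j_{\ast})>0$, because $\varepsilon_{j_\ast}<\varepsilon_{j^\ast}$ forces $j_\ast<j^\ast$. Moves 1 and 2 shift mass of row $i'$ to column $j'+1$, respectively mass of row $i''$ to column $j''-1$. Move 1 raises $\sum ij\,k$, and move 2 strictly reduces the support. The main obstacle is exactly here: checking that each move keeps the columns correctly ordered after re-sorting, and strictly improves the chosen potential. If re-sorting breaks the monotonicity, I will replace $j$ by the rank of $\varepsilon_j$ and argue directly with the quantities $\varepsilon'_j$ of Lemma \ref{lem06}.

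\textbf{Structure of the extremal matrix.} At the maximizer, none of the three moves applies, and transitivity gives $W\preccurlyeq_{\text{P}} W^{\circ}\preccurlyeq_{\text{P}} Q$. Non-applicability translates into the following.

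\begin{itemize}
\item Since moves 1 and 2 fail, every $\sigma_i$ used in column $j$ satisfies $\varepsilon_{j-1}<\sigma_i<\varepsilon_{j+1}$.
\item Since move 3 fails, for $j_\ast<j^\ast$ no row used in column $j^\ast$ lies strictly below a row used in column $j_\ast$, within the window between $\varepsilon_{j_\ast}$ and $\varepsilon_{j^\ast}$.
\end{itemize}

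Together with $\varepsilon_{j-1}<\sigma_i<\varepsilon_{j+1}$, these give the following picture. The support of each column is an interval of rows $[i_j,i_{j+1}]$. Consecutive columns share at most the boundary row $i_{j+1}$. Only the boundary rows are split, with $s_{j+1}$ of row $i_{j+1}$ kept in column $j+1$. This is precisely $W^\circ=D_Q(i_2,\ldots,i_n;s_2,\ldots,s_n)$. Equal crossover values among the columns would contradict $W^{\circ}\in\mathbb{B}_n^{\ast}$ or can be merged, so we may take $\varepsilon_1<\cdots<\varepsilon_n$.

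\textbf{The supplementary claims.} Suppose $i_{j+1}=i_j+1$ and $s_j=0$. Then column $j$ contains only $\sigma_{i_{j+1}}$, so $\varepsilon_j=\sigma_{i_{j+1}}$. If $s_{j+1}>0$, column $j+1$ would contain $\sigma_{i_{j+1}}\le\varepsilon_j$, and move 2 would apply, contradicting extremality. Hence $s_{j+1}=0$ and $p_j\mathrm{B}(\varepsilon_j)=q_{i_{j+1}}\mathrm{B}(\sigma_{i_{j+1}})$. Claim 2 follows symmetrically using move 1.

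Finally, suppose $\varepsilon_j=\sigma_i$. By the strict window condition, row $i$ cannot appear in columns $j\pm1$, and by the interval structure it cannot appear in any other column either. Hence all of $q_i$ lies in column $j$, giving $p_j\ge q_i$. Lemma \ref{lem04} with extremality yields $\varepsilon_1=\sigma_1$ type situations at the ends: row $1$ must sit entirely in column $1$ and row $m$ entirely in column $n$. This gives $p_1\ge q_1$ and $p_n\ge q_m$.
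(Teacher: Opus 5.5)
Your extremal strategy differs from the paper's proof. After reducing to $W\preccurlyeq_{\text{P}} Q$ and merging columns with equal compounded values, the paper iterates the moves of Lemma \ref{lem06}. It applies moves 1 and 2 until every row of $K$ has at most two nonzero entries, then move 3 to order the split rows, and finally splits columns if fewer than $n$ remain; termination is argued by counting zero entries. An extremal formulation would avoid that bookkeeping, but as written your key step fails: your potential is not improved by all three moves.

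Move 2 transfers $k_{i'',j''}$ from column $j''$ to column $j''-1$. It therefore changes $\sum_{i,j} ij\,k_{i,j}$ by $-i''k_{i'',j''}<0$. It also need not shrink the support: if $k_{i'',j''-1}=0$ beforehand, one entry dies and another is born. So neither $\Phi$ nor your lexicographic rule rules out move 2 at the maximizer. As a result, the window condition $\sigma_i>\varepsilon_{j-1}$ is not established, and you need it both for the interval structure and for supplementary claim 1.

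Separately, $\sum ij\,k_{i,j}$ depends on column labels, and the moves can change the order of the compounded values. Move 1 lowers $\varepsilon_{j'}$ and raises $\varepsilon_{j'+1}$; move 3 lowers $\varepsilon_{j_\ast}$ and raises $\varepsilon_{j^\ast}$. After re-sorting, the relabelling can offset the gain from a move. You flagged this obstacle but left it unresolved.

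The repair is to use a label-free, strictly convex functional of the compounded channel, for example
\[
\Psi(K)=\sum_{j}\frac{\bigl(\sum_i k_{i,j}\sigma_i\bigr)^2}{\sum_i k_{i,j}}=\sum_j p_j\varepsilon_j^2,
\]
with empty columns contributing $0$, or $I(W')$ itself. $\Psi$ is continuous on the compact set of 1-matrices. Every move strictly increases $\Psi$ in the non-degenerate case:
\begin{itemize}
\item Move 3 leaves $p_{j_\ast},p_{j^\ast}$ unchanged and changes $\Psi$ by $2\Delta(\varepsilon_{j^\ast}-\varepsilon_{j_\ast})+\Delta^2(1/p_{j_\ast}+1/p_{j^\ast})>0$, where $\Delta=\delta(\sigma_{i^\ast}-\sigma_{i_\ast})$.
\item For moves 1 and 2, use the $k$-means identity. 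Moving weight $w$ at $x$ from a column of weight $P_A$ and mean $a$ to one of weight $P_B$ and mean $b$ changes $\sum_i q_i\sigma_i^2-\Psi$ by $\frac{wP_B}{P_B+w}(x-b)^2-\frac{wP_A}{P_A-w}(x-a)^2$. This is negative whenever $|x-b|\le|x-a|$ and $x\ne a$, which is exactly the situation of moves 1 and 2 once column values are distinct.
\end{itemize}

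Because $\Psi$ ignores labels, you can drop the sorting constraint from $\mathcal{K}$ entirely. Maximality of $\Psi$ also forces exactly $n$ distinct columns of positive weight. If fewer remained, you could split a column containing two distinct $\sigma_i$; this gives an upgrade still in $\mathcal{K}$ with strictly larger $\Psi$. Columns each carrying a single $\sigma$ value cannot cover all $m>n$ rows with at most $n$ columns. This replaces your vague merging step.

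With this potential, the rest of your argument goes through. The window and non-crossing conditions give the segment structure, with at most one split row between consecutive columns. Claims 1 and 2, $p_j\ge q_i$ when $\varepsilon_j=\sigma_i$, and $p_1\ge q_1$, $p_n\ge q_m$ then follow as you wrote them.
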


\begin{proof}
According to Theorems \ref{the01} and \ref{the02}, any $2n$-degradation of $Q$ is a degradation of some $2n$-P-degradation of $Q$. Therefore, without loss of generality, we assume further that $W$ is a P-degradation of $Q$. Let $K=K(W)=(k_{i,j})_{m\times n}$ be the 1-matrix given in Theorem \ref{the02} for the channel $W$.

If there are integers $j, j'$ such that
\begin{equation*}
\left(\sum_{i\in[m]}k_{i,j'}\right)\sum_{i\in[m]}k_{i,j}\sigma_{i}=\left(\sum_{i\in[m]}k_{i,j}\right)\sum_{i\in[m]}k_{i,j'}\sigma_{i},
\end{equation*}
we replace $K$ with a matrix that has one less column, which is obtained by substituting the $j$-th column and the $j'$-th column with their sum vector. Obviously, this operation is equivalent to replacing the channel $W$ with one of its equivalent channels.

If there are at least three nonzero entries in the $i$-th row of $K$ for some $i\in[m]$, then according to the first two conclusions of Lemma \ref{lem06} one can get a channel $W_{0}$ which is a P-degradation of $Q$ and an upgradation of $W$, and the corresponding matrix $K(W_{0})$ has one more zero entry.

Hence, by recursively applying these two kinds of operations on the resulting channels one can get a channel $W_{1}$, which is P-degradation of $Q$ and upgradation of $W$, such that the corresponding matrix $K(W_{1})=(k_{i,j}^{(1)})_{m\times t}$ has at most two nonzero entries in each row, the numbers $\varepsilon_{1},\ldots,\varepsilon_{t}$ determined by
\begin{equation*}
\varepsilon_{j}\sum_{i\in [m]}k_{i,j}^{(1)}=\sum_{i\in [m]}k_{i,j}^{(1)}\sigma_{i}
\end{equation*}
satisfy $0\leq \varepsilon_{1}<\cdots<\varepsilon_{t}\leq1/2$ and
\begin{align*}
	\varepsilon_{j-1}<\sigma_{i}<\varepsilon_{j+1} & \text{ if }k_{i,j}^{(1)}\neq0,\\
     k_{i,j-1}^{(1)}+k_{i,j}^{(1)}=q_{i} & \text{ if }\varepsilon_{j-1}<\sigma_{i}<\varepsilon_{j},\\
	 k_{i,j}^{(1)}=q_{i} & \text{ if }\varepsilon_{j}=\sigma_{i},
\end{align*}
where $t$ is a positive integer not greater than $n$.

Furthermore, according to the third conclusion of Lemma \ref{lem06} one can require further the channel $W_{1}$ satisfying that, for each $j$ with $2\leq j\leq t$, there is a $\delta\in(\varepsilon_{j-1}, \varepsilon_{j})$ such that $k_{i,j-1}^{(1)}=q_{i}$ for each $i\in[m]$ with $\varepsilon_{j-1}<\sigma_{i}<\delta$, and $k_{i',j}^{(1)}=q_{i'}$ for each $i'\in[m]$ with $\delta<\sigma_{i'}<\varepsilon_{j}$.

Clearly, the desired channel $D_{Q}(i_{2},\ldots, i_{n}; s_{2},\ldots, s_{n})$ can be obtained from $W_{1}$ simply, by splitting some groups (\ref{27}) of particles into subgroups properly if $t < n$.
\end{proof}

Any channel $D_{Q}(i_{2},\ldots, i_{n}; s_{2},\ldots, s_{n})$ that satisfies the conditions of Theorem \ref{the03} is referred to as a \emph{$2n$-$P^{\ast}$-degradation} of $Q$. Since every $2n$-P-degradation of $Q$ is a degradation of some $2n$-$\mathrm{P}^{\ast}$-degradation of $Q$, our discussion is restricted to the $\mathrm{P}^{\ast}$-degradations of $Q$. The following theorem demonstrates that, for any $2n$-$\mathrm{P}^{\ast}$-degradation $W$ of a given $Q\in \mathbb{B}_{m}^{\ast}$, at least three splitting patterns must be adjusted to derive another $2n$-$\mathrm{P}^{\ast}$-degradation $W'$ of $Q$ such that $W\preccurlyeq W'$.
\begin{theorem}\label{the04}
For $2\leq n<m$, from a $2n$-$P^{\ast}$-degradation $W$ of given $Q\in \mathbb{B}_{m}^{\ast}$, it is impossible to obtain a new $2n$-$P^{\ast}$-degradation $W'$ of $Q$ with $W\preccurlyeq W'$ by adjusting at most two splitting patterns of $W$.
\end{theorem}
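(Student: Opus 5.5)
The plan is to cancel the particles that $W$ and $W'$ share, which reduces the claim to a comparison of two or three changed particles on each side, and then show that such a comparison can only hold with equality.

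\emph{Reduction to a P-degradation.} Suppose $W=D_{Q}(i_{2},\ldots,i_{n};s_{2},\ldots,s_{n})$ and $W'$ is obtained by changing at most two splitting patterns, with $W\preccurlyeq W'$ and $W\not\cong W'$. Both are P-degradations of $Q$ by Theorem \ref{the03}, so $P_{\epsilon}(W)=P_{\epsilon}(Q)=P_{\epsilon}(W')$. Therefore $W$ is in fact a P-degradation of $W'$, and Theorem \ref{the02} applies to the pair $(W',W)$. Each particle of $W$ is then compounded from particles of $W'$.

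\emph{Cancelling common particles.} A splitting pattern $(i_{l},s_{l})$ affects only the two particles built from the segments adjacent to it. Changing at most two patterns therefore alters either two particles (one pattern), three consecutive particles (two adjacent patterns), or two disjoint pairs of consecutive particles (two non-adjacent patterns). All remaining particles of $W$ and $W'$ coincide. Writing $W=pR+\overline{p}W_{0}$ and $W'=pR+\overline{p}W'_{0}$ with $R$ the common part, Corollary \ref{cor02} gives $W_{0}\preccurlyeq W'_{0}$. Moreover $W_{0}$ and $W'_{0}$ are compounded from the same union of consecutive segments of $Q$, so they still have equal decoding error probability. Hence $W_{0}\preccurlyeq_{\text{P}}W'_{0}$ with at most four particles on each side.

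\emph{Single changed pattern.} Here $W_{0}$ and $W'_{0}$ each consist of two particles, obtained by cutting the same segment $S$ of $Q$ at two different points. Say the cut of $W'$ is to the right of that of $W$. The lower particle of $W'$ is the lower particle of $W$ plus mass at crossover probabilities at least as large as all those already present, so $\varepsilon'_{1}\geq\varepsilon_{1}$. Symmetrically, the upper particle loses its lowest mass, so $\varepsilon'_{2}\geq\varepsilon_{2}$. On the other hand, Lemma \ref{lem05} (equivalently Lemma \ref{lem04}) requires $\varepsilon'_{1}\leq\varepsilon_{1}$ and $\varepsilon_{2}\leq\varepsilon'_{2}$. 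Combining these forces $\varepsilon_{1}=\varepsilon'_{1}$. The extra mass then has crossover probability equal to $\varepsilon_{1}$. This is compatible with the ordering $\varepsilon_{1}<\varepsilon_{2}$ and the segment description only when $W_{0}\cong W'_{0}$, which is a contradiction.

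\emph{Two changed patterns: a Lorenz-curve criterion.} For the two-pattern cases I would use a more global tool. Encode a symmetric channel $\sum p_{j}\mathrm{B}(\varepsilon_{j})$ by the convex piecewise-linear curve through the points $(\sum_{j\le k}p_{j},\sum_{j\le k}p_{j}\varepsilon_{j})$. Using the functionals $I_{f}$ with $f(\varepsilon)=(t-\varepsilon)^{+}$ together with Theorem \ref{the01}, the relation $W\preccurlyeq W'$ with equal error probability becomes: the curve of $W$ lies on or above the curve of $W'$. For a P$^{\ast}$-degradation, the curve is a polygon inscribed in the curve of $Q$, with vertices at the cut points. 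Changing at most two patterns moves at most two vertices. For the curve of $W$ to lie above that of $W'$ between fixed common vertices, every moved vertex of $W'$ must lie on a chord of the polygon of $W$. This can happen only when the moved vertex lies on a linear piece of the curve of $Q$, in which case the compounded particles coincide and $W\cong W'$.

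\emph{Main obstacle.} The hardest case is two non-adjacent patterns shifted in opposite directions, where no single monotonicity argument applies. There I would check the chord condition separately on each of the two disjoint intervals between fixed vertices. The error-probability equality does not couple the two intervals, because each interval conserves its own total error. So each interval reduces to the single-pattern situation already handled.
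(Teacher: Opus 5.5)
Your proposal never uses what distinguishes a P$^{\ast}$-degradation from an arbitrary strictly ordered P-degradation of the form $D_{Q}(i_{2},\ldots,i_{n};s_{2},\ldots,s_{n})$. That distinguishing property is conditions 1--2 of Theorem~\ref{the03}, equivalently $p_{j}\geq q_{i}$ whenever $\varepsilon_{j}=\sigma_{i}$: no particle may be a proper fraction of a single atom $q_{i}\mathrm{B}(\sigma_{i})$. Without this property the statement is false, and your argument cannot tell the difference. Take $Q=\frac13\mathrm{B}(0.1)+\frac13\mathrm{B}(0.2)+\frac13\mathrm{B}(0.3)$, together with $W=\frac16\mathrm{B}(0.1)+\frac56\mathrm{B}(0.22)$ (splitting pattern $(1,\frac16)$) and $W'=\frac13\mathrm{B}(0.1)+\frac23\mathrm{B}(0.25)$ (pattern $(1,0)$). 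They differ in one splitting pattern and $W\not\cong W'$. Yet $W\preccurlyeq_{\text{P}}W'$, because $\frac56\mathrm{B}(0.22)$ is compounded from $\frac16\mathrm{B}(0.1)$ and $\frac23\mathrm{B}(0.25)$. Your single-pattern argument goes through verbatim on this pair up to $\varepsilon_{1}=\varepsilon'_{1}=\sigma_{1}$. The final sentence (``compatible with the ordering and the segment description only when $W_{0}\cong W'_{0}$'') is exactly what fails. The contradiction has to come from the fact that the lower particle of $W$ is then $t\mathrm{B}(\sigma_{i})$ with $t<q_{i}$, which Theorem~\ref{the03} forbids for a P$^{\ast}$-degradation. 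This is the content of Lemma~\ref{lem09} in the paper's proof.

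The Lorenz-curve step has the same hole, and its chord condition points the wrong way. Write $\Lambda_{V}$ for the curve of a channel $V$, and $u_{k}=\sum_{j\leq k}p_{j}$ for the vertices of $W$. Both $\Lambda_{W}$ and $\Lambda_{W'}$ are chord interpolations of the convex curve $\Lambda_{Q}$, hence lie above it. So vertices of $W'$ lie below $\Lambda_{W}$ automatically. The actual content of $\Lambda_{W}\geq\Lambda_{W'}$ is $\Lambda_{W'}(u_{k})=\Lambda_{Q}(u_{k})$ at the vertices of $W$. In the example, $u'_{1}=\frac13$ is not on a chord of $W$, yet $W\preccurlyeq W'$. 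Consequently, a vertex $u_{k}$ of $W$ that is not a vertex of $W'$ lies strictly inside an interval of $W'$ on which $\Lambda_{Q}$ is linear, i.e.\ a $W'$-particle with a single crossover probability $\sigma_{i}$. This does \emph{not} make the particles coincide. What closes the argument is P$^{\ast}$ on both sides:
\begin{itemize}
\item For $W'$, that particle must be the whole atom, occupying $[a_{i},b_{i}]$ with $a_{i}=\sum_{r<i}q_{r}$ and $b_{i}=a_{i}+q_{i}$.
\item For $W$, one must have $u_{k-1}<a_{i}$ and $u_{k+1}>b_{i}$. Hence $a_{i}$ and $b_{i}$ are vertices of $W'$ but not of $W$.
\end{itemize}
Both channels have $n-1$ interior vertices and at most two of them differ. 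So $W$ has exactly two unmatched vertices, and the same reasoning puts the second one into the same range $(a_{i},b_{i})$. That is impossible, since $u_{k-1}<a_{i}$ and $u_{k+1}>b_{i}$ leave no other vertex of $W$ there. With this repair your route becomes a genuine alternative to the paper's: one uniform argument in place of Lemma~\ref{lem09}, Corollary~\ref{cor02} and the three-case 1-matrix computation. As written, however, the indispensable ingredient is missing.
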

\begin{proof}
The proof is given in Appendix C.
\end{proof}
\section{Degradations with the Highest Symmetric Capacity}
\label{sec06}
For any given symmetric BIDMC, it has been shown in the last section that every P-degradation is a degradation of some $\mathrm{P}^{\ast}$-degradation. In this section we consider to determine the degradations with the highest symmetric
capacity from the $\mathrm{P}^{\ast}$-degradations.

For $\sigma\in[0,1/2]$, $\varepsilon\in(0,1/2)$ and $p\in(0,1)$, let
\begin{equation*}
b_{\sigma,\varepsilon}=\min\{1,\varepsilon/\sigma,(1-2\varepsilon)/(1-2\sigma)\}.
\end{equation*}
Then, for any $x\in(-pb_{\sigma,\varepsilon},1-p)$, we have $0<p+x<1$ and
\begin{equation*}
e_{\sigma,\varepsilon,p}(x)=(p\varepsilon+x\sigma)/(p+x)\in(0,1/2).
\end{equation*}
For $x\in(-pb_{\sigma,\varepsilon},1-p)$, let
\begin{equation}\label{29}
f_{\sigma,\varepsilon,p}(x)=(p+x)f_{c}(e_{\sigma,\varepsilon,p}(x)),
\end{equation}
where $f_{c}(p)=1+p\log p+\overline{p}\log \overline{p}$. Then, from $f'_{c}(p)=\log (p/\overline{p})$ we see
\begin{align}
f'_{\sigma,\varepsilon,p}(x)
=&f_{c}(e_{\sigma,\varepsilon,p}(x))+\log\frac{e_{\sigma,\varepsilon,p}(x)}{\overline{e_{\sigma,\varepsilon,p}(x)}}\frac{\sigma(p+x)-(p\varepsilon+x\sigma)}{p+x}\nonumber\\
=&1+\frac{p\varepsilon+x\sigma}{p+x}\log \frac{p\varepsilon+x\sigma}{p+x}+\frac{p\overline{\varepsilon}+x\overline{\sigma}}{p+x}\log \frac{p\overline{\varepsilon}+x\overline{\sigma}}{p+x}\nonumber\\
&+\frac{p(\sigma-\varepsilon)}{p+x}\log \frac{p\varepsilon+x\sigma}{p\overline{\varepsilon}+x\overline{\sigma}}\nonumber\\
=&1+\sigma\log (p\varepsilon+x\sigma)+\overline{\sigma}\log (p\overline{\varepsilon}+x\overline{\sigma})-\log (p+x),\label{30}\\
f''_{\sigma,\varepsilon,p}(x)
=&\frac{\sigma^{2}}{(p\varepsilon+x\sigma)\ln 2}+\frac{\overline{\sigma}^{2}}{(p\overline{\varepsilon}+x\overline{\sigma})\ln 2}-\frac{1}{(p+x)\ln2}\nonumber\\
=&\frac{p^{2}(\sigma-\varepsilon)^{2}}{(p+x)(p\varepsilon+x\sigma)(p\overline{\varepsilon}+x\overline{\sigma})\ln 2}.\label{31}
\end{align}
For $0<\varepsilon_{1}<\sigma<\varepsilon_{2}<1/2$ and positive numbers $p_{1}, p_{2}$ with $p_{1}+p_{2}\leq1$, let
\begin{equation}\label{32}
F_{\sigma,\varepsilon_{1},p_{1},\varepsilon_{2},p_{2}}(x)=f_{\sigma,\varepsilon_{1},p_{1}}(-x)+f_{\sigma,\varepsilon_{2},p_{2}}(x).
\end{equation}
Then, from (\ref{30}) and (\ref{32}) we see
\begin{equation}\label{33}
F'_{\sigma,\varepsilon_{1},p_{1},\varepsilon_{2},p_{2}}(0)=\log \frac{\overline{\varepsilon_{2}}}{\overline{\varepsilon_{1}}}+\sigma\log \frac{\overline{\varepsilon_{1}}\varepsilon_{2}}{\overline{\varepsilon_{2}}\varepsilon_{1}}=(\sigma-\phi(\varepsilon_{1},\varepsilon_{2}))\log \frac{\overline{\varepsilon_{1}}\varepsilon_{2}}{\overline{\varepsilon_{2}}\varepsilon_{1}} ,
\end{equation}
where
\begin{equation}\label{34}
\phi(\varepsilon_{1},\varepsilon_{2})=\frac{\ln (\overline{\varepsilon_{1}}/\overline{\varepsilon_{2}})}{\ln ((\overline{\varepsilon_{1}}\varepsilon_{2})/(\overline{\varepsilon_{2}}\varepsilon_{1}))}.
\end{equation}

\begin{lemma}\label{lem07}
For $0<\varepsilon_{1}<\varepsilon_{2}<1/2$, we have
\begin{equation}\label{35}
\varepsilon_{1}<\phi(\varepsilon_{1},\varepsilon_{2})<\varepsilon_{2}.
\end{equation}
Furthermore, for positive numbers $\sigma, p_{1}, p_{2}$ with $\varepsilon_{1}\leq\sigma\leq\varepsilon_{2}$ and $p_{1}+p_{2}\leq1$,\\

1. For any $x$ in $(-p_{2}b_{\sigma,\varepsilon_{2}}, p_{1}b_{\sigma,\varepsilon_{1}})=(-p_{2}(1-2\varepsilon_{2})/(1-2\sigma), p_{1}\varepsilon_{1}/\sigma)$,
\begin{equation}\label{36}
0<e_{\sigma,\varepsilon_{1},p_{1}}(-x)<e_{\sigma,\varepsilon_{2},p_{2}}(x)<1/2.
\end{equation}

2. $F_{\sigma,\varepsilon_{1},p_{1},\varepsilon_{2},p_{2}}(x)$ is decreasing on $(-p_{2}b_{\sigma,\varepsilon_{2}},0]$ if $\sigma\leq\phi(\varepsilon_{1},\varepsilon_{2})$.\\

3. $F_{\sigma,\varepsilon_{1},p_{1},\varepsilon_{2},p_{2}}(x)$ is increasing on $[0, p_{1}b_{\sigma,\varepsilon_{1}})$ if $\sigma\geq\phi(\varepsilon_{1},\varepsilon_{2})$.
\end{lemma}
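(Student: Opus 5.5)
The plan is to reduce everything to two facts: the positivity of binary Kullback--Leibler divergences for (\ref{35}), and the convexity (\ref{31}) together with the derivative formula (\ref{33}) for conclusions 2 and 3. Conclusion 1 is elementary bookkeeping on weighted averages.

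For (\ref{35}), first note the denominator of (\ref{34}) is positive. Indeed, $0<\varepsilon_{1}<\varepsilon_{2}<1/2$ gives $\varepsilon_{2}/\overline{\varepsilon_{2}}>\varepsilon_{1}/\overline{\varepsilon_{1}}$, and also $\ln(\overline{\varepsilon_{1}}/\overline{\varepsilon_{2}})>0$. Cross-multiplying, $\varepsilon_{1}<\phi(\varepsilon_{1},\varepsilon_{2})$ is equivalent to
\begin{equation*}
\overline{\varepsilon_{1}}\ln(\overline{\varepsilon_{1}}/\overline{\varepsilon_{2}})+\varepsilon_{1}\ln(\varepsilon_{1}/\varepsilon_{2})>0,
\end{equation*}
which is the binary divergence $D(\varepsilon_{1}\|\varepsilon_{2})>0$. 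Likewise, $\phi(\varepsilon_{1},\varepsilon_{2})<\varepsilon_{2}$ rearranges to $D(\varepsilon_{2}\|\varepsilon_{1})>0$. Both are strict because $\varepsilon_{1}\neq\varepsilon_{2}$ (Gibbs' inequality, or strict convexity of $-\ln$).

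For conclusion 1, observe that $\sigma\geq\varepsilon_{1}$ gives $b_{\sigma,\varepsilon_{1}}=\varepsilon_{1}/\sigma\le 1$, and $\sigma\leq\varepsilon_{2}<1/2$ gives $b_{\sigma,\varepsilon_{2}}=(1-2\varepsilon_{2})/(1-2\sigma)\le1$. So on the stated interval we have $p_{1}-x>0$ and $p_{2}+x>0$. Since $p_{1}+p_{2}\leq1$, the arguments $-x$ and $x$ also lie in the domains $(-p_{1}b_{\sigma,\varepsilon_{1}},1-p_{1})$ and $(-p_{2}b_{\sigma,\varepsilon_{2}},1-p_{2})$, so $F$ is well defined there. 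The case analysis is as follows.
\begin{itemize}
\item For $x\geq0$, $e_{\sigma,\varepsilon_{2},p_{2}}(x)$ is a weighted average of $\varepsilon_{2}$ and $\sigma$, so it lies in $[\sigma,\varepsilon_{2}]\subset(0,1/2)$. Meanwhile $e_{\sigma,\varepsilon_{1},p_{1}}(-x)=(p_{1}\varepsilon_{1}-x\sigma)/(p_{1}-x)\leq\varepsilon_{1}$, and it is positive exactly when $x<p_{1}\varepsilon_{1}/\sigma$.
\item For $x<0$, the roles swap: $e_{\sigma,\varepsilon_{1},p_{1}}(-x)\in[\varepsilon_{1},\sigma]$. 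Also $e_{\sigma,\varepsilon_{2},p_{2}}(x)\geq\varepsilon_{2}$, and it is $<1/2$ exactly when $p_{2}(1-2\varepsilon_{2})>|x|(1-2\sigma)$, which is the left endpoint condition.
\end{itemize}
Strictness of the middle inequality in (\ref{36}) follows in each case from $\varepsilon_{1}<\varepsilon_{2}$. For example, when $x\ge0$ and $\sigma<\varepsilon_{2}$ with $x>0$, we get $e_{\sigma,\varepsilon_{2},p_{2}}(x)\ge\sigma\ge\varepsilon_{1}\ge e_{\sigma,\varepsilon_{1},p_{1}}(-x)$, and at least one of these inequalities is strict. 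The edge cases $x=0$, $\sigma=\varepsilon_{1}$ and $\sigma=\varepsilon_{2}$ should be checked separately, but they are immediate.

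For conclusions 2 and 3, differentiate (\ref{32}) twice to get $F''(x)=f''_{\sigma,\varepsilon_{1},p_{1}}(-x)+f''_{\sigma,\varepsilon_{2},p_{2}}(x)$. By (\ref{31}) each term is nonnegative, since the three factors in its denominator are positive on the domain by the computation above. Hence $F$ is convex and $F'$ is nondecreasing on the whole interval. By (\ref{33}), $F'(0)$ has the sign of $\sigma-\phi(\varepsilon_{1},\varepsilon_{2})$, because the logarithmic factor is positive. If $\sigma\le\phi(\varepsilon_{1},\varepsilon_{2})$, then $F'(x)\le F'(0)\le0$ for $x\le0$, so $F$ is decreasing on $(-p_{2}b_{\sigma,\varepsilon_{2}},0]$. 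Symmetrically, if $\sigma\geq\phi(\varepsilon_{1},\varepsilon_{2})$, then $F'(x)\geq0$ on $[0,p_{1}b_{\sigma,\varepsilon_{1}})$.

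The main obstacle is not conceptual but the domain bookkeeping in conclusion 1: one must confirm that the interval endpoints match the constraints $e\in(0,1/2)$ and positive denominators, so that (\ref{30}) and (\ref{31}) are valid throughout. I would handle this first, using the explicit simplifications of $b_{\sigma,\varepsilon_{1}}$ and $b_{\sigma,\varepsilon_{2}}$ given above.
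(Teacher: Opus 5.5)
Your proof is correct and follows the paper's route. For (\ref{35}), the paper's auxiliary function $g_a(t)=a\ln\frac{\overline{t}a}{\overline{a}t}-\ln\frac{\overline{t}}{\overline{a}}$ is exactly the binary divergence $D(a\|t)$, which the paper shows is positive by a derivative argument instead of citing Gibbs' inequality. Conclusion 1 is the same domain bookkeeping, done in the paper without your case split through $e_{\sigma,\varepsilon_{1},p_{1}}(-x)=\sigma-\frac{p_{1}(\sigma-\varepsilon_{1})}{p_{1}-x}$ and $e_{\sigma,\varepsilon_{2},p_{2}}(x)=\sigma+\frac{p_{2}(\varepsilon_{2}-\sigma)}{p_{2}+x}$. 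Conclusions 2 and 3 are, as in the paper, convexity of $F$ plus the sign of $F'(0)$.

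One thing to tighten: you only assert $F''\geq 0$, which gives weak monotonicity. Lemma~\ref{lem08} needs the strict inequality $I(D_{2})-I(D)>0$, notably when $\sigma=\phi(\varepsilon_{1},\varepsilon_{2})$. So state that $F''>0$, which holds because $(\sigma-\varepsilon_{1})^{2}$ and $(\sigma-\varepsilon_{2})^{2}$ cannot both vanish.
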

\begin{proof}
The proof is given in Appendix D.
\end{proof}

The \emph{capacity-loss rate} (CLR) of $\mathrm{P}^{\ast}$-degradation $W$ of symmetric BIDMC $Q$ is the ratio $(I(Q)-I(W))/I(Q)$. If $W$ is a $2n$-$\mathrm{P}^{\ast}$-degradation of $Q$, we say another $2n$-$\mathrm{P}^{\ast}$-degradation $W'$ of $Q$ is \emph{better} than $W$ if $W'$ has smaller CLR. A $2n$-$\mathrm{P}^{\ast}$-degradation of $Q$ is said to be \emph{C-optimal} if its CLR achieves the minimal.

\begin{lemma}\label{lem08}
For $Q=\sum_{i\in[m]}q_{i}\mathrm{B}(\sigma_{i})\in\mathbb{B}_{m}^{\ast}$ with $0\leq \sigma_{1}<\cdots<\sigma_{m}\leq1/2$ and $n$ with $2\leq n<m$, let $D=D_{Q}(i_{2},\ldots, i_{n}; s_{2},\ldots, s_{n})=\sum_{j\in[n]}p_{j}\mathrm{B}(\varepsilon_{j})$ be an
arbitrary $2n$-$P^{\ast}$-degradation of $Q$, where $0\leq \varepsilon_{1}<\cdots<\varepsilon_{n}\leq1/2$. For any integer $k$ with $2\leq k\leq n$,

\begin{enumerate}
    \item If $\varepsilon_{k-1} = 0$, then we have $k = 2$, $\sigma_1 = 0$, $i_2 = 2$, $s_2 = q_2$ and $p_1 = q_1$.
    \item If $\varepsilon_k = 1/2$, then we have $k = n$, $\sigma_m = 1/2$, $i_n = m - 1$, $s_n = 0$ and $p_n = q_m$.
    \item If $0 < \varepsilon_{k-1} < \varepsilon_k < 1/2$, then a better $P^*$-degradation of $Q$ may be obtained from $D$ by adjusting the splitting pattern $(i_k,s_k)$ in one of the following two ways
    \begin{enumerate}
        \item[a)] Replacing $s_k$ with $0$ if $s_k > 0$ and $\sigma_{i_k} \leq \phi(\varepsilon_{k-1},\varepsilon_k)$, or
        \item[b)] Replacing $s_k$ with $q_{i_k}$ if $s_k < q_{i_k}$ and $\sigma_{i_k} \geq \phi(\varepsilon_{k-1},\varepsilon_k)$.
    \end{enumerate}
\end{enumerate}
\end{lemma}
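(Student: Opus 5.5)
Parts 1 and 2 follow from the ordering of the $\varepsilon_j$ together with the properties of $\mathrm{P}^{\ast}$-degradations in Theorem \ref{the03}. For part 1, since $\varepsilon_{k-1}=0$ and $0\leq\varepsilon_1<\cdots<\varepsilon_n$, we must have $k-1=1$, so $k=2$. Lemma \ref{lem04} gives $\varepsilon_1\geq\sigma_1$, hence $\sigma_1=0$. The particle $p_1\mathrm{B}(\varepsilon_1)$ is compounded from the $(0,i_2;0,s_2)$-segment, and its average crossover probability is $0$. So every particle in that segment carries $\sigma_i=0$, which forces the segment to contain only $q_1\mathrm{B}(\sigma_1)$ (possibly up to zero weight from index $2$). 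Normalizing the splitting pattern by the equivalence remarks after (\ref{27}) gives $i_2=2$, $s_2=q_2$, and $p_1=q_1$; this agrees with the consequence $p_1\geq q_1$ of Theorem \ref{the03}. Part 2 is symmetric. Since $\varepsilon_k=1/2$ is the maximal possible value, $k=n$. Lemma \ref{lem04} gives $\varepsilon_n\leq\sigma_m\leq1/2$, so $\sigma_m=1/2$. The last segment has average $1/2$, so it contains only $\sigma_m$-particles, and after normalization $i_n=m-1$, $s_n=0$, $p_n=q_m$.

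For part 3, I would reduce everything to the two-component function $F$ of (\ref{32}). Write $\sigma=\sigma_{i_k}$. Adjusting $s_k$ moves weight of the $\sigma$-particle between the $(k-1)$-th and $k$-th groups. Replacing $s_k$ by $s_k-y$ (with $y\in[0,s_k]$) gives $p_{k-1}\to p_{k-1}-y$ and $p_k\to p_k+y$, with crossover probabilities $e_{\sigma,\varepsilon_{k-1},p_{k-1}}(-y)$ and $e_{\sigma,\varepsilon_k,p_k}(y)$. All other particles are unchanged. By (\ref{14}), the change in capacity is therefore $F_{\sigma,\varepsilon_{k-1},p_{k-1},\varepsilon_k,p_k}(y)-F(0)$. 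The P-degradation property is preserved, because the new channel is again of the form $D_Q$ with compounded segments, and its error probability still equals $P_\epsilon(Q)$ by Theorem \ref{the02}.

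In case a) the relevant direction is $x=-s_k<0$, taking $x=-y$ in Lemma \ref{lem07}(2) after relabelling signs. Since $\sigma\leq\phi(\varepsilon_{k-1},\varepsilon_k)$, $F$ is decreasing on $(-p_kb,0]$, so moving all of $s_k$ into the $k$-th group increases $F$, and hence the capacity does not decrease. Case b) uses Lemma \ref{lem07}(3) in the same way: $F$ is increasing on $[0,p_{k-1}b)$, and $x=q_{i_k}-s_k$. I must fix the sign convention carefully, i.e. which group gains weight as $x$ increases in (\ref{32}), so that the lemma's $\varepsilon_1,p_1$ correspond to group $k-1$.

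The main obstacle is checking that the endpoint $x=\pm s_k$ (respectively $q_{i_k}-s_k$) lies within the interval where Lemma \ref{lem07} applies, i.e. $s_k<p_kb_{\sigma,\varepsilon_k}$ or the analogous bound. When $b$ equals $1$ this is immediate, since $s_k\leq p_{k-1}$ or the corresponding weight. Otherwise I would argue that $e(x)$ stays in $(0,1/2)$ because removing the $\sigma$-particle from a group leaves a compound of the remaining particles of the segment, whose average lies in $[\sigma_1,\sigma_m]$. The boundary value can then be handled by continuity of $F$ at the endpoint. Strictness ("better") comes from (\ref{31}): $F''>0$ whenever $\sigma\neq\varepsilon$, so $F$ is strictly monotone. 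The degenerate case $\sigma=\phi$ gives only non-strict improvement, consistent with the word ``may'' in the statement.
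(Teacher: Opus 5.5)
Your route is the paper's route. Parts 1--2 come from Theorem~\ref{the03} together with Lemma~\ref{lem04}. Part 3 writes the capacity change as $F_{\sigma_{i_k},\varepsilon_{k-1},p_{k-1},\varepsilon_k,p_k}(x)-F_{\sigma_{i_k},\varepsilon_{k-1},p_{k-1},\varepsilon_k,p_k}(0)$, with $x=-s_k$ in case a) and $x=q_{i_k}-s_k$ in case b), and then invokes Lemma~\ref{lem07}.

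Your domain check is the paper's bound in other words: the leftover particles of the group have average in $[\sigma_1,\sigma_m]\subset[0,1/2]$, which gives $s_k\le p_kb_{\sigma_{i_k},\varepsilon_k}$, resp. $q_{i_k}-s_k\le p_{k-1}\varepsilon_{k-1}/\sigma_{i_k}$. Your continuity treatment of the endpoint is actually more careful than the paper, which obtains only $\le$ and then applies Lemma~\ref{lem07} on an open interval. However, one claim is wrong, and it leaves part of the statement unproved.

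\textbf{The non-strictness claim.} Your final sentence, that $\sigma_{i_k}=\phi(\varepsilon_{k-1},\varepsilon_k)$ gives only a non-strict improvement, is false. By (\ref{31}), $F''$ could vanish only if $\sigma_{i_k}=\varepsilon_{k-1}$ and $\sigma_{i_k}=\varepsilon_k$ simultaneously. That is impossible since $\varepsilon_{k-1}<\varepsilon_k$, so $F$ is strictly convex on the whole interval. If $\sigma_{i_k}=\phi$, then (\ref{33}) gives $F'(0)=0$, and strict convexity yields $F(x)>F(0)$ for every admissible $x\neq 0$. The hypotheses $s_k>0$ in a) and $s_k<q_{i_k}$ in b) guarantee $x\neq0$.

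So the improvement is strict in that case too. This is exactly Remark~\ref{rem02}: the CLR of $D$ is then at a local maximum. ``Better'' means strictly smaller CLR, and the word ``may'' does not license a non-strict conclusion. As written, your argument fails to establish a) and b) precisely when $\sigma_{i_k}=\phi$. The same issue affects your phrase ``the capacity does not decrease''.

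\textbf{Smaller points to fix.}

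\emph{Direction of transfer.} Group $k$ holds weight $s_k$ of $\sigma_{i_k}$ and group $k-1$ holds $q_{i_k}-s_k$. Replacing $s_k$ by $s_k-y$ therefore gives $p_{k-1}\to p_{k-1}+y$ and $p_k\to p_k-y$, which corresponds to $F(-y)$ in the convention of (\ref{32}). In case a) the weight moves into group $k-1$, not into group $k$. Your final choices of $x$ are right, but your setup contradicts them.

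\emph{The case $b=1$ never occurs.} By (\ref{35}), $\sigma_{i_k}<\varepsilon_k$ in a) and $\sigma_{i_k}>\varepsilon_{k-1}$ in b). Hence $b=(1-2\varepsilon_k)/(1-2\sigma_{i_k})<1$, resp. $b=\varepsilon_{k-1}/\sigma_{i_k}<1$.

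\emph{Positive leftover weight.} The same strict inequalities give $s_k<p_k$, resp. $q_{i_k}-s_k<p_{k-1}$. You should state this, since it is what makes the ``leftover particles'' you average over nonempty.

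\emph{Hypothesis of Lemma~\ref{lem07}.} The condition $\varepsilon_{k-1}\le\sigma_{i_k}\le\varepsilon_k$ holds by the segment structure, and should be noted.

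\emph{Part 1.} A zero average only shows that the first group contains nothing but $\sigma_1$-particles. It does not show that the group carries all of $q_1$ rather than $q_1-s_2$ with $i_2=1$, $s_2>0$. Excluding that configuration is exactly conclusion 1 of Theorem~\ref{the03} with $j=1$, equivalently $p_1\ge q_1$. That theorem is the operative input here, not a consistency check.
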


\begin{proof}
The first two conclusions follow from Theorem \ref{the03} directly.

To prove the third conclusion of this lemma for the case $a)$, we assume $0<\varepsilon_{k-1}<\varepsilon_{k}<1/2, s_{k}>0$ and $\sigma_{i_{k}}\leq\phi(\varepsilon_{k-1},\varepsilon_{k})$. Then, by using (\ref{35}) we see easily
\begin{equation*}
0<\varepsilon_{k-1}\leq\sigma_{i_{k}}\leq\phi(\varepsilon_{k-1},\varepsilon_{k})<\varepsilon_{k}=(s_{k}\sigma_{i_{k}}+(p_{k}-s_{k})\varepsilon)/p_{k}
\end{equation*}
for some $\varepsilon\in(\varepsilon_{k},1/2]$, and thus we have
\begin{equation*}
0<s_{k}=p_{k}(\varepsilon-\varepsilon_{k})/(\varepsilon-\sigma_{i_{k}})\leq p_{k}(1-2\varepsilon_{k})/(1-2\sigma_{i_{k}})=p_{k}b_{\sigma_{i_{k}},\varepsilon_{k}}.
\end{equation*}
Therefore, for the channel $D_{2}$ obtained from $D$ by replacing the splitting pattern $(i_{k},s_{k})$ with $(i_{k},0)$, according to Lemma \ref{lem07} and $\sigma_{i_{k}}\leq \phi(\varepsilon_{k-1},\varepsilon_{k})$ we see
\begin{equation*}
I(D_{2})-I(D)=F_{\sigma_{i_{k}},\varepsilon_{k-1},p_{k-1},\varepsilon_{k},p_{k}}(-s_{k})-F_{\sigma_{i_{k}},\varepsilon_{k-1},p_{k-1},\varepsilon_{k},p_{k}}(0)>0.
\end{equation*}
Hence, the third conclusion of this lemma is true for the case $a)$.

To prove the third conclusion of this lemma for the case $b)$, we assume $0<\varepsilon_{k-1}<\varepsilon_{k}<1/2, s_{k}<q_{i_{k}}$ and $\sigma_{i_{k}}\geq\phi(\varepsilon_{k-1},\varepsilon_{k})$. Then, by using (\ref{35}) we see easily
\begin{equation*}
\varepsilon_{k}\geq\sigma_{i_{k}}\geq\phi(\varepsilon_{k-1},\varepsilon_{k})>\varepsilon_{k-1}=((q_{i_{k}}-s_{k})\sigma_{i_{k}}+(p_{k-1}+s_{k}-q_{i_{k}})\varepsilon')/p_{k-1}>0
\end{equation*}
for some $\varepsilon'\in[0,\sigma_{i_{k}})$, and thus we have $0<q_{i_{k}}-s_{k}<p_{k-1}$. Therefore, for the channel $D'_{2}$ obtained from $D$ by replacing the splitting pattern $(i_{k},s_{k})$ with $(i_{k},q_{i_{k}})$, from Lemma \ref{lem07}, $\sigma_{i_{k}}\geq \phi(\varepsilon_{k-1},\varepsilon_{k})$ and
\begin{equation*}
0<q_{i_{k}}-s_{k}=(p_{k-1}\varepsilon_{k-1}-(p_{k-1}+s_{k}-q_{i_{k}})\varepsilon')/\sigma_{i_{k}}\leq p_{k-1}\varepsilon_{k-1}/\sigma_{i_{k}}=b_{\sigma_{i_{k}},\varepsilon_{k-1}}p_{k-1}
\end{equation*}
we see
\begin{equation*}
I(D'_{2})-I(D)=F_{\sigma_{i_{k}},\varepsilon_{k-1},p_{k-1},\varepsilon_{k},p_{k}}(q_{i_{k}}-s_{k})-F_{\sigma_{i_{k}},\varepsilon_{k-1},p_{k-1},\varepsilon_{k},p_{k}}(0)>0.
\end{equation*}
Hence, the third conclusion of this lemma is true for the case $b)$.
\end{proof}

\begin{rmk}\label{rem02}
The third conclusion of Lemma \ref{lem08} implies that, for the case $\sigma_{i_{k}}=\phi(\varepsilon_{k-1},\varepsilon_{k})$, one can obtain two better $P^{\ast}$-degradations from $D$ by replacing the splitting pattern $(i_{k},s_{k})$ with $(i_{k},0)$ and $(i_{k},q_{i_{k}})$ respectively. This also indicates that the CLR of $D$ reaches a local maximum.
\end{rmk}

For any symmetric BIDMC $Q=\sum_{i\in[m]}q_{i}\mathrm{B}(\sigma_{i})$ with $0\leq \sigma_{1}<\cdots<\sigma_{m}\leq1/2$, its $2n$-$\mathrm{P}^{\ast}$-degradation $D_{Q}(i_{2},\ldots,i_{n};s_{2},\ldots,s_{n})$ is referred to as a \emph{$2n$-$P^{+}$-degradation} if and only if $s_{j}\in\{0,q_{i_{j}}\}$ for all $j=2,3, \ldots, n$. Clearly,
for any $2n$-P$^{+}$-degradation $W=\sum_{j\in[n]}p_{j}\mathrm{B}(\varepsilon_{j})$ of $Q$, there exist integers $1<k_{1}<\cdots<k_{n-1}\leq m$ such that for each $j\in [n]$ the particle $p_{j}\mathrm{B}(\varepsilon_{j})$ is
compounded from the particles in $\{q_{i}\mathrm{B}(\sigma_{i}): k_{j-1}\leq i<k_{j}\}$, where $k_{0}=1$ and $k_{n}=m+1$. Such a $2n$-P$^{+}$-degradation $W$ is denoted by $D_{Q}^{+}(k_{1},\ldots,k_{n-1})$. From Lemma \ref{lem08}, the following corollary can be directly deduced.

\begin{corollary}\label{cor03}
Suppose $2\leq n<m$ and $Q=\sum_{i\in[m]}q_{i}\mathrm{B}(\sigma_{i})\in \mathbb{B}_{m}^{\ast}$, where $0\leq \sigma_{1}<\cdots<\sigma_{m}\leq1/2$.

\begin{enumerate}
    \item Any $C$-optimal $2n$-degradation of $Q$ must be a $2n$-$P^+$-degradation of $Q$.

    \item From a $2n$-$P^+$-degradation $W = D_Q^+(k_1, \dots, k_{n-1}) = \sum_{j \in [n]} p_j B(\varepsilon_j)$ of $Q$, where $k_0 = 1 < k_1 < \dots < k_n = m+1$ and $0 \leq \varepsilon_1 < \dots < \varepsilon_n \leq 1/2$, a better $2n$-$P^+$-degradation of $Q$ may be obtained by
    \begin{enumerate}
        \item[(a)] adding $1$ to $k_j$ if $\phi(\varepsilon_j, \varepsilon_{j+1}) \geq \sigma_{k_j}$, or
        \item[(b)] subtracting $1$ from $k_j$ if $\phi(\varepsilon_j, \varepsilon_{j+1}) \leq \sigma_{k_j-1}$ and $\varepsilon_j + \sigma_{k_j-1} > 0$.
    \end{enumerate}
\end{enumerate}
\end{corollary}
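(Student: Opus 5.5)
The plan is to reduce both parts of the corollary to Lemma \ref{lem08}, by reading a $\mathrm{P}^{+}$-degradation as a $\mathrm{P}^{\ast}$-degradation whose splitting patterns sit at the endpoints $s\in\{0,q_{i}\}$. One fact from the proof of Lemma \ref{lem08} is needed throughout. Whenever a nonzero weight is moved and the relevant inequality of Lemma \ref{lem07} holds, that proof gives $I(D_{2})-I(D)>0$ (resp. $I(D'_{2})-I(D)>0$) \emph{strictly}. So ``better'' always means a strictly smaller CLR.

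\emph{Part 1.} I would argue by contradiction. Let $D=D_{Q}(i_{2},\ldots,i_{n};s_{2},\ldots,s_{n})=\sum_{j\in[n]}p_{j}\mathrm{B}(\varepsilon_{j})$ be C-optimal, and suppose $0<s_{k}<q_{i_{k}}$ for some $k$. Consider the pair $\varepsilon_{k-1}<\varepsilon_{k}$.
\begin{itemize}
\item If $\varepsilon_{k-1}=0$, item 1 of Lemma \ref{lem08} forces $s_{k}=q_{2}$.
\item If $\varepsilon_{k}=1/2$, item 2 forces $s_{k}=0$.
\item Otherwise $0<\varepsilon_{k-1}<\varepsilon_{k}<1/2$. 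Since either $\sigma_{i_{k}}\le\phi(\varepsilon_{k-1},\varepsilon_{k})$ or $\sigma_{i_{k}}\ge\phi(\varepsilon_{k-1},\varepsilon_{k})$, and both $s_{k}>0$ and $s_{k}<q_{i_{k}}$ hold, case a) or case b) of item 3 applies. This produces a strictly better $\mathrm{P}^{\ast}$-degradation, contradicting optimality.
\end{itemize}
Hence every $s_{k}\in\{0,q_{i_{k}}\}$, i.e.\ $D$ is a $\mathrm{P}^{+}$-degradation. To cover arbitrary $2n$-degradations, I would first invoke Theorem \ref{the03}: every $2n$-degradation $W$ lies below some $\mathrm{P}^{\ast}$-degradation, so by (\ref{12}) $I(W)\le I(D)$, and C-optimality transfers.

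\emph{Part 2.} I would translate $D_{Q}^{+}(k_{1},\ldots,k_{n-1})$ into splitting patterns. Recall from (\ref{27}) that particle $j+1$ receives $s_{j+1}$ of $\sigma_{i_{j+1}}$ and particle $j$ receives the remaining $q_{i_{j+1}}-s_{j+1}$.

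For (a), the boundary between particles $j$ and $j+1$ is written as the pattern $(k_{j},q_{k_{j}})$: particle $j+1$ takes all of $\sigma_{k_{j}}$. Adding $1$ to $k_{j}$ is exactly replacing $s=q_{k_{j}}$ by $0$. Case a) of Lemma \ref{lem08} (with $k-1=j$) gives a strict improvement when $\sigma_{k_{j}}\le\phi(\varepsilon_{j},\varepsilon_{j+1})$, and the result is again a $\mathrm{P}^{+}$-degradation.

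For (b), the same boundary is written as $(k_{j}-1,0)$. Subtracting $1$ from $k_{j}$ is replacing $s=0$ by $q_{k_{j}-1}$, and case b) applies when $\sigma_{k_{j}-1}\ge\phi(\varepsilon_{j},\varepsilon_{j+1})$.

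The main obstacle is the degenerate boundary cases, where Lemma \ref{lem08} item 3 needs $0<\varepsilon_{j}<\varepsilon_{j+1}<1/2$ and the moves must not empty a particle. I would dispose of them as follows.
\begin{itemize}
\item Emptying a particle in (a) would require particle $j+1$ to be the single particle $q_{k_{j}}\mathrm{B}(\sigma_{k_{j}})$. Then $\varepsilon_{j+1}=\sigma_{k_{j}}$, and (\ref{35}) gives $\phi<\varepsilon_{j+1}=\sigma_{k_{j}}$, so hypothesis (a) fails.
\item Symmetrically, in (b) a singleton particle $j$ has $\varepsilon_{j}=\sigma_{k_{j}-1}<\phi$, so hypothesis (b) fails.
\item The condition $\varepsilon_{j}+\sigma_{k_{j}-1}>0$ in (b) excludes $\varepsilon_{j}=0$, where $\phi$ degenerates. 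In (a) with $\varepsilon_{j}=0$, item 1 of Lemma \ref{lem08} gives $\sigma_{1}=0$ and $p_{1}=q_{1}$. Interpreting $\phi(0,\cdot)=0$ as the limit, the hypothesis would need $\sigma_{k_{j}}\le 0$, which is impossible.
\item For $\varepsilon_{j+1}=1/2$, item 2 makes particle $n$ the singleton $\sigma_{m}$, which is again the excluded singleton case for (a). For (b), I would check directly via the monotonicity in Lemma \ref{lem07}, whose derivation (\ref{30})--(\ref{33}) remains valid at $\varepsilon_{2}=1/2$.
\end{itemize}
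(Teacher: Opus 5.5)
Your proposal is correct and follows the paper's route. The paper's proof only observes $D_{Q}^{+}(k_{1},\ldots,k_{n-1})\cong D_{Q}(k_{1}-1,\ldots,k_{n-1}-1;0,\ldots,0)\cong D_{Q}(k_{1},\ldots,k_{n-1};q_{k_{1}},\ldots,q_{k_{n-1}})$ and invokes Lemma~\ref{lem08}, which is exactly your reduction; your handling of the degenerate cases ($\varepsilon_j=0$, $\varepsilon_{j+1}=1/2$, and a move that would empty a particle) supplies details the paper leaves implicit.

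One small caveat concerns your extra step for an arbitrary $2n$-degradation $W\preccurlyeq D$. It is unnecessary, since the paper defines C-optimality among $\mathrm{P}^{\ast}$-degradations. If you do want it, ``C-optimality transfers'' only shows that $D$ is $\mathrm{P}^{+}$. Concluding $W\cong D$ from $I(W)=I(D)$ additionally needs the strict monotonicity and strict convexity of $f_c$, applied to the 1-matrix of Theorem~\ref{the01}.
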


\begin{proof}
It is obvious that, for any $1< k_{1}<\cdots<k_{n-1}\leq m$,
\begin{align*}
D_{Q}^{+}(k_{1},\ldots, k_{n-1})
\cong&D_{Q}(k_{1}-1,\ldots, k_{n-1}-1;0,\ldots, 0)\\
\cong&D_{Q}(k_{1},\ldots, k_{n-1};q_{k_{1}},\ldots, q_{k_{n-1}}).
\end{align*}
Hence, this corollary follows from Lemma \ref{lem08} simply.
\end{proof}

Therefore, we have the following theorem which gives some necessary conditions for the optimal $2n$-degradations of a symmetric BIDMC.

\begin{theorem}\label{the05}
Let $W$ be a C-optimal $2n$-degradation of $Q=\sum_{i\in[m]}q_{i}\mathrm{B}(\sigma_{i})\in \mathbb{B}_{m}^{\ast}$, where $2\leq n<m$ and $0\leq \sigma_{1}<\cdots<\sigma_{m}\leq1/2$. Then, there are integers $k_{0}=1<k_{1}<\cdots<k_{n}=m+1$ such that $W\cong D_{Q}^{+}(k_{1},\ldots,k_{n-1})=\sum_{j\in[n]}p_{j}\mathrm{B}(\varepsilon_{j})$, for each $j\in [n]$ the particle $p_{j}\mathrm{B}(\varepsilon_{j})$ is compounded from the particles in $\{q_{i}\mathrm{B}(\sigma_{i}): k_{j-1}\leq i<k_{j}\}$, and for each $j\in[n-1]$ with $\varepsilon_{j}>0$
\begin{equation}\label{37}
\sigma_{k_{j}-1}<\phi(\varepsilon_{j}, \varepsilon_{j+1})<\sigma_{k_{j}}.
\end{equation}
\end{theorem}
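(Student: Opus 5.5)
The plan is to obtain this theorem as a direct consequence of Corollary~\ref{cor03}, using C-optimality to rule out the two improvement moves described there.

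Let $W$ be C-optimal. By the first part of Corollary~\ref{cor03}, $W$ is (equivalent to) a $2n$-$P^{+}$-degradation. We may therefore write $W\cong D_{Q}^{+}(k_{1},\ldots,k_{n-1})=\sum_{j\in[n]}p_{j}\mathrm{B}(\varepsilon_{j})$ with $k_{0}=1<k_{1}<\cdots<k_{n}=m+1$. Each particle $p_{j}\mathrm{B}(\varepsilon_{j})$ is compounded from $\{q_{i}\mathrm{B}(\sigma_{i}): k_{j-1}\le i<k_{j}\}$.

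Because the blocks are consecutive and the $\sigma_i$ are strictly increasing, each $\varepsilon_{j}$ is a weighted average of a nonempty block. This forces $\sigma_{k_{j-1}}\le\varepsilon_j\le\sigma_{k_j-1}$, so the $\varepsilon_j$ are nondecreasing. If two consecutive $\varepsilon_j$ coincided, the two BSCs would merge and $W$ would lie in $\mathbb{B}_{n-1}$. I would dismiss that case by noting that splitting a block never lowers capacity, which is Lemma~\ref{lem06} or simply the definition of degradation. In that case one may take $0\le\varepsilon_1<\cdots<\varepsilon_n\le1/2$ as in the setup of Corollary~\ref{cor03}. This is the point requiring the most care.

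Now fix $j\in[n-1]$ with $\varepsilon_{j}>0$. Suppose $\phi(\varepsilon_{j},\varepsilon_{j+1})\ge\sigma_{k_{j}}$. Then part 2(a) of Corollary~\ref{cor03} gives a strictly better $P^{+}$-degradation, obtained by moving $k_j$ to $k_j+1$, which contradicts C-optimality. Note that $\phi$ is defined here because $0<\varepsilon_j<\varepsilon_{j+1}$, and we may take $\varepsilon_{j+1}<1/2$. If instead $\varepsilon_{j+1}=1/2$, then Lemma~\ref{lem08}(2) gives $j+1=n$ and a single top particle $q_m\mathrm{B}(1/2)$. Here I would interpret $\phi(\varepsilon_j,1/2)$ by its limiting value, or treat the case directly. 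Strictness comes from Lemma~\ref{lem07}: the strict inequality in (\ref{36}), the strict convexity in (\ref{31}), and the monotonicity of $F$ all give $I$ strictly increasing, as shown in the proof of Lemma~\ref{lem08}. Hence $\phi(\varepsilon_j,\varepsilon_{j+1})<\sigma_{k_j}$.

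Symmetrically, suppose $\phi(\varepsilon_{j},\varepsilon_{j+1})\le\sigma_{k_{j}-1}$. Since $\varepsilon_j>0$, we have $\varepsilon_j+\sigma_{k_j-1}>0$, and part 2(b) of Corollary~\ref{cor03} applies. It yields a strictly better degradation by moving $k_j$ to $k_j-1$, again contradicting optimality. Hence $\phi(\varepsilon_j,\varepsilon_{j+1})>\sigma_{k_j-1}$, which gives (\ref{37}).

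The main obstacle I expect is not this logic but the boundary issues. First, the adjusted indices must still form a valid $P^{+}$-degradation; if $k_j\pm1$ collides with $k_{j\pm1}$, Lemma~\ref{lem08} must still guarantee a strict improvement. I would handle this by relying on Lemma~\ref{lem08}(3), which adjusts $s_k$ between $0$ and $q_{i_k}$ and always produces a legitimate $P^{*}$-degradation. Any resulting empty block would then be absorbed by the remark that fewer distinct BSCs cannot be C-optimal. Second, the edge case $\varepsilon_{j+1}=1/2$ needs the separate treatment described above.
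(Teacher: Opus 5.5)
Your proposal is correct and follows the paper's route exactly. The paper also obtains Theorem~\ref{the05} directly from Corollary~\ref{cor03}: part~1 gives the P$^{+}$ form, and the contrapositive of parts~2(a) and~2(b) gives the two strict inequalities in (\ref{37}).

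Most of your boundary worries are moot. Since $\varepsilon_j\le\sigma_{k_j-1}<\sigma_{k_j}\le\varepsilon_{j+1}$, the $\varepsilon_j$ are already strictly increasing. Index collisions are excluded by (\ref{35}): for example, $k_j+1=k_{j+1}$ would force $\varepsilon_{j+1}=\sigma_{k_j}>\phi(\varepsilon_j,\varepsilon_{j+1})$, so 2(a) could never apply, and no absorption argument is needed.
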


\begin{proof}
This theorem follows simply from Corollary \ref{cor03}.
\end{proof}

The degradations $W$ of $Q\in \mathbb{B}_{m}$ fulfilling the necessary conditions shown in Theorem \ref{the05} are referred to as \emph{$2n$-C-degradations} of $Q$. Clearly, the number of $2n$-C-degradations of $Q$ is not greater than the number of $2n$-P$^{+}$-degradations of $Q$, which is given by $\binom{m-1}{n-1}$. Notice that the number of
$2n$-C-degradations of a given symmetric BIDMC exceeds 1 in general. For example, given $0<\sigma_{1}<\sigma_{2}<\sigma_{3}<1/2$, according to (\ref{35}) there exists a small number $\delta>0$ such that, for any positive numbers $q_{1}, q_{3}$ in the interval $(0,\delta)$, when we define $\varepsilon_{1}=(q_{2}\sigma_{2}+q_{3}\sigma_{3})/(q_{2}+q_{3})$ and $\varepsilon_{2}=(q_{1}\sigma_{1}+q_{2}\sigma_{2})/(q_{1}+q_{2})$ with $q_{2}=1-q_{1}-q_{3}$, the following inequalities hold:
\begin{equation*}
\sigma_{1}<\phi(\sigma_{1}, \varepsilon_{1})<\sigma_{2}<\phi(\varepsilon_{2},\sigma_{3})<\sigma_{3}.
\end{equation*}
Consequently, $W_{1}=q_{1}\mathrm{B}(\sigma_{1})+(q_{2}+q_{3})\mathrm{B}(\varepsilon_{1})$ and $W_{2}=(q_{1}+q_{2})\mathrm{B}(\varepsilon_{2})+q_{3}\mathrm{B}(\sigma_{3})$ are 4-C-degradations of $Q=\sum_{i=1}^{3}q_{i}\mathrm{B}(\sigma_{i})$.
\section{Some Simulation Results}
\label{sec07}
The following algorithm outputs all the $2n$-C-degradations of a given symmetric BIDMC in $\mathbb{B}_{m}$.\\

\noindent
\textbf{Algorithm 1}: Input $2\leq n<m$, $0\leq\sigma_{1}<\cdots<\sigma_{m}\leq 1/2$ and positive numbers $q_{1},\ldots,q_{m}$ with $\sum_{i\in[m]}q_{i}=1$.

\vspace{1em}

S0. Let $j_{0}=1$, $\varepsilon_{1}=\sigma_{1}$, $p_{1}=q_{1}$, $i_{1}=1$, $j_{1}=2$, $r_{1}=0$, $\delta_{1}=0$, and $k=1$.

\vspace{0.5em}
\hspace{1.4em} If $\sigma_1 > 0$, goto S1.

\vspace{0.5em}
\hspace{1.4em} Let $\varepsilon_2 = \sigma_2$, $p_2 = q_2$, $i_2 = 2$, $j_2 = 3$, $r_2 = 0$, $\delta_2 = 0$ and $k = 2$.

\vspace{1em}

S1. If $k = 0$, goto S5.

\vspace{0.5em}
\hspace{1.4em} If $j_k > m - n + k + 1$, let $k \leftarrow -1$, $j_k \leftarrow +1$ and goto S1.

\vspace{0.5em}
\hspace{1.4em} Let $r_k \leftarrow +q_{j_k}$, $\delta_k \leftarrow +q_{j_k}(\sigma_{j_k} - \delta_k)/r_k$ and $\phi_k = \phi(\varepsilon_k, \delta_k)$.

\vspace{0.5em}
\hspace{1.4em} If $\phi_k \leq \sigma_{i_k}$, let $j_k \leftarrow +1$ and goto S1.

\vspace{0.5em}
\hspace{1.4em} If $\phi_k \geq \sigma_{i_{k}+1}$, let $k \leftarrow -1$, $j_k \leftarrow +1$ and goto S1.

\vspace{0.5em}
\hspace{1.4em} Let $k \leftarrow +1$, $\varepsilon_k = \delta_{k-1}$, $p_k = r_{k-1}$, $i_k = j_{k-1}$, $j_k = i_k + 1$, $r_k = 0$ and

\vspace{0.5em}
\hspace{1.4em} $\delta_k = 0$.

\vspace{0.5em}
\hspace{1.4em} If $k < n-1$, goto S1.

\vspace{0.5em}
\hspace{1.4em} Let $j = j_{n-1}$, $r = 0$ and $\delta = 0$.

\vspace{1em}

S2. Let $r \leftarrow +q_j$ and $\delta \leftarrow +q_j(\sigma_j - \delta)/r$.

\vspace{0.5em}
\hspace{1.4em} If $j < m$, let $j \leftarrow +1$ and goto S2.

\vspace{0.5em}
\hspace{1.4em} If $\sigma_{i_{n-1}} < \phi(\varepsilon_{n-1}, \delta) < \sigma_{i_{n-1}+1}$, goto S4.

\hspace{1em}

S3. Let $k = n-2$, $j_k \leftarrow +1$ and goto S1.

\hspace{1em}

S4. Let $i_n = m$, $\varepsilon_n = \delta$ and $p_n = r$.

\vspace{0.5em}
\hspace{1.4em} Output $(i_1, \varepsilon_1, p_1), \dots, (i_n, \varepsilon_n, p_n)$ and goto S3.

\hspace{1em}

S5. If $j_0 > m - n + 1$, stop.

\vspace{0.5em}
\hspace{1.4em} Let $j = j_0$, $p_1 \leftarrow +q_j$, $\varepsilon_1 \leftarrow +q_j(\sigma_j - \varepsilon_1)/p_1$, $i_1 = j$, $j_1 = j + 1$, $r_1 = 0$,

\vspace{0.5em}
\hspace{1.4em} $\delta_1 = 0$, $k = 1$ and goto S1.\\

For $m = 8$ and $n = 4$, the number of $\mathrm{P}^+$-degradations of any channel in $\mathbb{B}_8$ is $\binom{8-1}{4-1} = 35$, the mean number of C-degradations is about $6.749$ and the mean CLR of C-degradations is about $0.0048$. The following tabular shows the mean CLR of the optimal $(2m,2n)$-degradations of channels in $\mathbb{B}_n$ for $(n,m)$ with $n = 4,5,6,7,8,9,10$ and $m = 16,32,64,128$.

\[
\begin{array}{|c|c|c|c|c|c|c|c|}
\hline
m\backslash n & 4      & 5      & 6      & 7      & 8      & 9      & 10     \\
\hline
128           & 0.0232 & 0.0145 & 0.0097 & 0.0069 & 0.0051 & 0.0040 & 0.0030 \\
\hline
64            & 0.0223 & 0.0137 & 0.0091 & 0.0064 & 0.0047 & 0.0035 & 0.0027 \\
\hline
32            & 0.0182 & 0.0105 & 0.0065 & 0.0043 & 0.0029 & 0.0021 & 0.0015 \\
\hline
16            & 0.0120 & 0.0061 & 0.0032 & 0.0018 & 0.0010 & 0.0006 & 0.0003 \\
\hline
\end{array}
\]

Now we consider the application of our method to Arikan transformations.

Let $(u_0, u_1)$ be a random vector uniformly distributed over $\mathcal{X}^2$. Assume that $u_0 + u_1$ is transmitted over $W_0: x_0 \in \mathcal{X} \mapsto y_0 \in \mathcal{Y}_0$, whereas $u_1$ is transmitted over $W_1: x_1 \in \mathcal{X} \mapsto y_1 \in \mathcal{Y}_1$, respectively, where the BIDMCs $W_0$ and $W_1$ are independent copies of $W$. The synthetic channel $A_0(W)$: $u_0 \mapsto (y_0, y_1)$ with transition probabilities
\[
\Pr(y_0, y_1|u_0) = \frac{1}{2} \sum_{u_1 \in \mathcal{X}} \Pr(y_0|x_0 = u_0 + u_1) \Pr(y_1|x_1 = u_1)
\]
and the synthetic channel $A_1(W)$: $u_1 \mapsto (y_0, y_1, u_0)$ with transition probabilities
\[
\Pr(y_0, y_1, u_0|u_1) = \frac{1}{2} \Pr(y_0|x_0 = u_0 + u_1) \Pr(y_1|x_1 = u_1)
\]
are called \textit{Arikan Transformations} of $W$, which were used iteratively to construct polar codes in \cite{Arikan09}. It is well-known that, for $W' \preccurlyeq W$, we have
\begin{align}\label{38}
A_0(W') \preccurlyeq A_0(W),\ A_1(W') \preccurlyeq A_1(W).
\end{align}

It was shown in \cite{JCT24} that, for $W \cong \sum_{j \in [n]} p_j \mathrm{B}(\varepsilon_j) \in \mathbb{B}_n$, we have
\begin{align}\label{39}
A_0(W) \cong \sum_{i,j \in [n]} p_i p_j \mathrm{B}(\varepsilon_i \star \varepsilon_j),
\end{align}
\begin{align}\label{40}
A_1(W) \cong \sum_{i,j \in [n]} p_i p_j \big( (\overline{\varepsilon_i} \star \varepsilon_j) \mathrm{B}(\varepsilon_i \diamond \varepsilon_j) + (\varepsilon_i \star \varepsilon_j) \mathrm{B}(\overline{\varepsilon_i} \diamond \varepsilon_j) \big),
\end{align}
where $a \star b = \overline{a}b + a\overline{b}$ and
\[
a \diamond b =
\begin{cases}
ab/(\overline{a} \star b), & \text{if } a,b \in (0,1), \\
0, & \text{if } \{a,b\} \cap \{0,1\} \neq \emptyset.
\end{cases}
\]

As pointed in \cite{JCT24}, from (\ref{39}) and (\ref{40}) we see easily that Arikan transformations of any channel in $\mathbb{B}_n$ can be expressed as RSCs of at most $n^2 + 1$ BSCs. For $n = 4,5,6,7,8,9,10$ and $m = n^2 + 1$, the mean CLRs of some $(2m,2n)$-degradations are exhibited in the following tabular, where Opt-deg, TV-deg and TV$^{\ast}$-deg denote the optimal degradation, the degradation given in \cite{Vardy13} and its optimization obtained according to the conclusion 2 of Corollary \ref{cor03}, respectively.

\[
\begin{array}{|c|c|c|c|c|c|c|c|}
\hline
\backslash n & 4      & 5      & 6      & 7      & 8      & 9      & 10     \\
\hline
\text{Opt-deg}  & 0.0127 & 0.0093 & 0.0072 & 0.0057 & 0.0047 & 0.0039 & 0.0031 \\
\hline
\text{TV-deg}   & 0.0134 & 0.0099 & 0.0077 & 0.0062 & 0.0051 & 0.0043 & 0.0034 \\
\hline
\text{TV$^{\ast}$-deg}  & 0.0130 & 0.0095 & 0.0074 & 0.0059 & 0.0049 & 0.0041 & 0.0032 \\
\hline
\end{array}
\]

For $n = 4,5,6,7,8,9,10$ and $m = n^2 + 1$, the mean number and the mean CLR of C-$(2m,2n)$-degradations are exhibited in the following tabular.

\[
\begin{array}{|c|c|c|c|c|c|c|c|}
\hline
\backslash n & 4       & 5       & 6      & 7       & 8       & 9       & 10      \\
\hline
\text{Number of C-deg} & 14.526  & 47.403  & 161.59 & 567.19  & 2033.4  & 7374.1  & 26332   \\
\hline
\text{CLR of C-deg}   & 0.0288  & 0.0208  & 0.0155 & 0.0119  & 0.0093  & 0.0075  & 0.0064  \\
\hline
\end{array}
\]

For $a \in \{0,1\}$, binary sequence $\alpha \in \{0,1\}^*$ and symmetric BIDMC $W \in \mathbb{B}_n$, let $A_{\alpha a}(W)$ and $D_{n,\alpha a}^{\text{opt}}(W)$ denote $A_a(A_\alpha(W))$ and the optimal $2n$-degradation of $A_a(D_{n,\alpha}^{\text{opt}}(W))$, respectively, where $A_\eta(W) = D_{n,\eta}^{\text{opt}}(W) = W$ for the empty sequence $\eta$. In the following tabular, for $\alpha \in \{0,1\}^*$, $\text{CLR}_{n,\alpha}$ denotes the mean value of $(I(A_\alpha(W)) - I(D_{n,\alpha}^{\text{opt}}(W)))/I(A_\alpha(W))$ for $W \in \mathbb{B}_n$.

\[
\begin{array}{|c|c|c|c|c|c|c|c|}
\hline
\backslash n & 4       & 5       & 6       & 7       & 8       & 9       & 10      \\
\hline
\text{CLR}_{n,0}      & 0.0065  & 0.0048  & 0.0038  & 0.0031  & 0.0026  & 0.0023  & 0.0020  \\
\hline
\text{CLR}_{n,1}      & 0.0094  & 0.0061  & 0.0044  & 0.0034  & 0.0027  & 0.0022  & 0.0011  \\
\hline
\text{CLR}_{n,00}     & 0.0318  & 0.0241  & 0.0192  & 0.0155  & 0.0136  & 0.0112  & 0.0101  \\
\hline
\text{CLR}_{n,01}     & 0.0253  & 0.0183  & 0.0138  & 0.0108  & 0.0090  & 0.0074  & 0.0063  \\
\hline
\text{CLR}_{n,10}     & 0.0249  & 0.0177  & 0.0133  & 0.0103  & 0.0085  & 0.0069  & 0.0057  \\
\hline
\text{CLR}_{n,11}     & 0.0169  & 0.0109  & 0.0077  & 0.0058  & 0.0046  & 0.0036  & 0.0030  \\
\hline
\text{CLR}_{n,000}    & 0.1026  & 0.0775  & 0.0637  & 0.0524  & 0.0473  & 0.0365  & 0.0329  \\
\hline
\text{CLR}_{n,001}    & 0.0563  & 0.0407  & 0.0329  & 0.0268  & 0.0234  & 0.0181  & 0.0163  \\
\hline
\text{CLR}_{n,010}    & 0.0666  & 0.0482  & 0.0383  & 0.0300  & 0.0260  & 0.0201  & 0.0190  \\
\hline
\text{CLR}_{n,011}    & 0.0419  & 0.0291  & 0.0224  & 0.0172  & 0.0144  & 0.0111  & 0.0106  \\
\hline
\text{CLR}_{n,100}    & 0.0620  & 0.0439  & 0.0344  & 0.0262  & 0.0227  & 0.0174  & 0.0163  \\
\hline
\text{CLR}_{n,101}    & 0.0370  & 0.0252  & 0.0192  & 0.0145  & 0.0121  & 0.0093  & 0.0087  \\
\hline
\text{CLR}_{n,110}    & 0.0348  & 0.0225  & 0.0167  & 0.0122  & 0.0099  & 0.0075  & 0.0061  \\
\hline
\text{CLR}_{n,111}    & 0.0184  & 0.0112  & 0.0080  & 0.0057  & 0.0046  & 0.0032  & 0.0026  \\
\hline
\end{array}
\]

Next, we will integrate the algorithms in \cite{Yagi14} and \cite{Ozawa14}, incorporate the conditions of Theorem \ref{the05} into them, and present an algorithm for finding C-optimal $2n$-degradation. First, to utilize the dynamic programming-based quantizer design algorithm proposed in \cite{Yagi14}, we need to define a state value $S_{\varepsilon_{j}}(\sigma_{i})$ of the algorithm, which denotes the maximum partial symmetric capacity when $\sum_{i'\in[i]}q_{i'}\mathrm{B}(\sigma_{i'})$ is degraded to $\sum_{j'\in[j]}p_{j'}\mathrm{B}(\varepsilon_{j'})$. This value can be obtained via recursive computation:\\
For $j\leq i\leq j+m-n$, $1<j\leq n$,
\begin{align}\label{0041}
S_{\varepsilon_{j}}(\sigma_{i})=\max_{\substack{j-1 \leq a < i}}\{S_{\varepsilon_{j-1}}(\sigma_{a})+\iota(a+1,i)\},
\end{align}
where $\iota(a+1,i)=1+\varepsilon \log\varepsilon+\overline{\varepsilon}\log\overline{\varepsilon}, \varepsilon=(\sum_{i'=a+1}^{i}q_{i'}\sigma_{i'})/\sum_{i'=a+1}^{i}q_{i'}.$ In particular, for $j=1$, each $S_{\varepsilon_{1}}(\sigma_{i})$, $1\leq i\leq1+m-n$ is initialized by $S_{\varepsilon_{1}}(\sigma_{i})=\iota(1,i)$.

Subsequently, to efficiently solve for $S_{\varepsilon_{j}}(\sigma_{i})$ using the SMAWK algorithm in \cite{Wilber87}, we need to define an $r\times c$ totally monotone matrix $M$. The SMAWK algorithm can find the row maxima for all $r$ rows of matrix $M$ in $O(r+c)$ time. We directly adopt the definition of matrix $M$ from \cite{Ozawa14}, which is presented as follows:\\
For an $(m-n+1)$-order matrix $M_{j}$, $j\in\{2,3,\ldots,n\}$, its lower triangular part is defined as
\begin{align}\label{0042}
(M_{j})_{(a,b)}\stackrel{\text{def}}{=}S_{\varepsilon_{j-1}}(\sigma_{j-1+b})+\iota(j+b,j+a) ,
\end{align}
for $a\geq b$, $a,b\in\{0,1,\ldots,m-n\}$, where $(M_{j})_{(a,b)}$ denote the real value at the row $a$ and column $b$ of matrix $M_{j}$. Its upper triangular part is defined as
\begin{align}\label{0043}
(M_{j})_{(a,b)}\stackrel{\text{def}}{=}-2(b-a)+1,
\end{align}
for $a< b$, $a\in\{0,1,\ldots,m-n-1\}$, $b\in\{1,2,\ldots,m-n\}$.

Now, combining the conditions of Theorem \ref{the05}, we can present the following algorithm for finding C-optimal $2n$-degradation.\\

\noindent
\textbf{Algorithm 2}: Input $2\leq n<m$, $0\leq\sigma_{1}<\cdots<\sigma_{m}\leq 1/2$ and positive numbers $q_{1},\ldots,q_{m}$ with $\sum_{i\in[m]}q_{i}=1$.

\begin{enumerate}
    \item \textbf{Precompute partial mutual information}:

         Compute $\iota(i,i')$ for each $i\in\{1,2,\ldots,m\}$ and $i'\in\{i,\ldots,t\}$, where $t=\min\{i+m-n,m\}$.

    \item \textbf{Initialize}:

         $S_{\varepsilon_{1}}(\sigma_{i})=\iota(1,i)$ for each $i\in\{1,2,\ldots,1+m-n\}$.

    \item \textbf{Incorporate the decision conditions of Theorem \ref{the05}}:

          For each $k_{1}\in\{2,3,\ldots,m-n+2\}$, find all $k_{2}\in\{k_{1}+1,\ldots,m-n+3\}$ that satisfy the condition in (\ref{37}).

    \item \textbf{Process matrix $M_{j}$}:

          $\bullet$ For $j=2$, Step 3 specifies that only the element $(M_{j})_{(k_{2}-3,k_{1}-2)}$ corresponding to $(k_{1}, k_{2})$ needs computation; all other elements are omitted.

          $\bullet$ For $3\leq j\leq n-1$, if there exists an $i$-th row in $M_{j-1}$ with no maximum value (i.e., all elements in this row are skipped and not computed), then all elements in the $i$-th column of $M_{j}$ also do not need to be computed.

     \item \textbf{Compute and store decision}:

          $\bullet$ For each $j\in\{2,3,\ldots,n-1\}$, compute $S_{\varepsilon_{j}}(\sigma_{i})$ and store
          \begin{align*}
           d_j(i+1) \stackrel{\text{def}}{=} \underset{j-1 \leq a < i}{\arg\max} \left\{ S_{\varepsilon_{j-1}}(\sigma_a) + \iota(a+1, i) \right\},
           \end{align*}
          for $i\in\{j,j+1,\ldots,j+m-n\}$, using the SMAWK algorithm for finding the row maxima of totally monotone matrices defined by (\ref{0042}).

          $\bullet$ For $j=n$, compute $S_{\varepsilon_{n}}(\sigma_{m})$ according to (\ref{0041}) and store $d_n(m+1)$.

     \item \textbf{Traceback}:
          Let $k_{n}=m+1$, and for each $j\in\{n-1,n-2,\ldots,1\}$:
       \begin{align*}
           k_{j}=d_{j+1}(k_{j+1})+1.
       \end{align*}
\end{enumerate}

When performing Step 3, optimization strategies such as memory preallocation, pre-extraction of common variables, elimination of redundant computation, and replacement of nested loops with vectorized operations can be adopted to greatly reduce the constant factor overhead of actual operation, thereby significantly improving the execution efficiency and engineering practicability of the algorithm in large-scale input scenarios. Specifically, this preprocessing step completely eliminates nested loops through vectorized operations and only requires a single linear traversal, so its time complexity is $O(m)$. The core of Algorithm 2 lies in Step 4. From the previous calculations, it follows that the number of $2n$-C-degradations satisfying the conditions of Theorem \ref{the05} is already much smaller than the total number of $2n$-P$^{+}$-degradations. Therefore, executing this step allows us to skip the computation of most elements in the matrix $M_{j}$. On this basis, when the SMAWK algorithm is called to search for the maximum element in each row of matrix $M_{j}$, all uncomputed elements can be skipped directly, thereby greatly narrowing the search range. Its time complexity can be approximately expressed as $O(n')$, where $n'$ represents the scale of elements within the valid search range and satisfies $n'<m-n+1$. Especially when the value of $m-n+1$ is large, $n'<<m-n+1$, which significantly reduces the actual computational overhead of the algorithm. Therefore, the overall complexity of Algorithm 2 is $O(nn')$, which is superior to the complexity $O(n(m-n))$ of the algorithm in \cite{Ozawa14}.

As shown in Figure \ref{fig04}, for channel $Q=\sum_{i\in[n^{2}+1]}q_{i}\mathrm{B}(\sigma_{i})$, the speed of Algorithm 2 in finding the C-optimal $2n$-degradation is significantly superior to that of the algorithm proposed in \cite{Ozawa14}. Moreover, as the value of $n$ increases, the speed advantage of Algorithm 2 becomes more pronounced.
\begin{figure}[t]
\begin{center}
\includegraphics[width=0.6\linewidth]{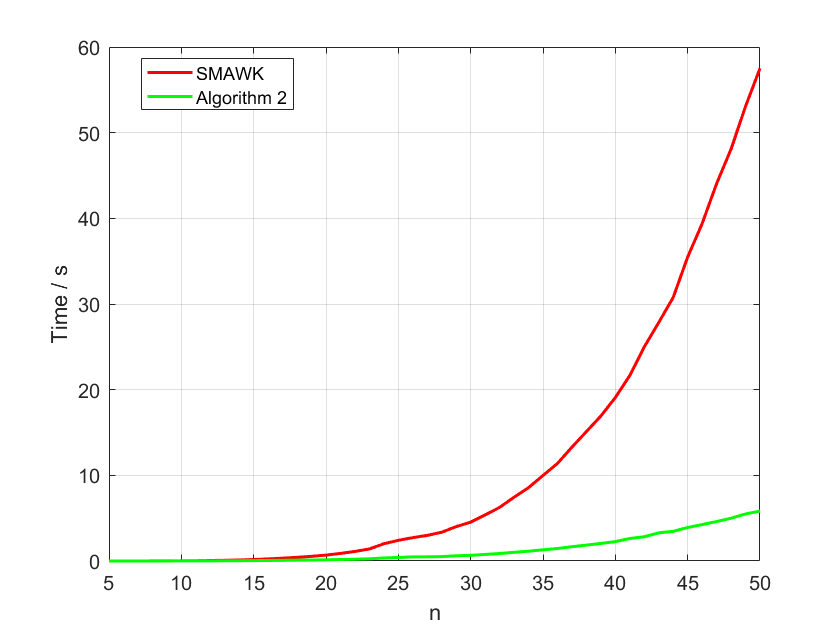}
\caption{{ \small Runtime comparison of Algorithm 2 and SMAWK$^{\cite{Ozawa14}}$}.}
 \label{fig04}
 \end{center}
\end{figure}

It should be noted that the processing of matrix $M_{j}$ in Step 4 neither skips the maximum value of each row nor alters the total monotonicity of the matrix. If $(M_{j})_{(a,b)}$ is not computed, it indicates that the corresponding $k_{1}$ and $k_{2}$ must violate the conditions in Theorem \ref{the05}. In this case, $(M_{j})_{(a,b)}$ can be regarded as the symmetric capacity corresponding to a certain degradation from $\sum_{i'\in[j+a]}q_{i'}\mathrm{B}(\sigma_{i'})$ is degraded to $\sum_{j'\in[j]}p_{j'}\mathrm{B}(\varepsilon_{j'})$. If this value were the maximum in row $a$, then $k_{1}, k_{2},\ldots, k_{j}$ would have to satisfy the conditions in Theorem \ref{the05}, which leads to a contradiction. Therefore, $(M_{j})_{(a,b)}$ cannot be the maximum value of row $a$. As for the total monotonicity of the matrix, since Step 4 only skips unnecessary computations without changing the structure of the matrix, the matrix processed by Step 4 still satisfies the definition of total monotonicity\cite[Definition 1]{Ozawa14}.

\section{Conclusions}
\label{sec08}
This paper presents a comprehensive study on the optimal degradations of symmetric binary-input discrete memoryless channels (BIDMCs), a key problem for polar code construction given the exponential growth of the output alphabet of synthetic channels. We fully characterize symmetric degradations of arbitrary symmetric BIDMCs and prove the invariance of the degradation relationship under the addition/removal of identical channel components. We identify all P-degradations that minimize decoding error probability, derive a criterion for (4,4)-P-degradations, and show every P-degradation is a degradation of a segment-structured P$^{\ast}$-degradation (requiring at least three splitting adjustments for upgradation). For C-optimal degradations that maximize symmetric capacity, we establish they must be P$^{+}$-degradations (special P$^{\ast}$-degradations) and derive a necessary condition via the function $\phi(\varepsilon_{1}, \varepsilon_{2})$, also proposing an algorithm to generate all C-optimal degradation candidates. Simulation results show that the number of C-optimal degradation candidates is far fewer than that of all possible P$^{+}$-degradations.

Finally, we propose an efficient algorithm for finding degradations with maximum symmetric capacity, and simulation results demonstrate that this method outperforms existing algorithms.

\section*{Acknowledgements}
This work was supported by the National Natural Science Foundation of China (No. 61977056). The authors would like to thank the anonymous reviewers and the editors for their valuable comments and constructive suggestions, which significantly improved the quality of this manuscript.

\section{Appendices}
\label{sec09}
\subsection{Appendix A: Proof of Theorem \ref{the01}}
\label{sec0901}
\begin{proof}
Assume that $W$ is a degradation of $Q$ with intermediate channel $P$. With respect to (\ref{04}), we see that, by ordering the outputs properly, the transition probability matrices of $W, Q, P$ can be written as
\[
W = \left( p_1 \begin{pmatrix} \overline{\varepsilon_1} & \varepsilon_1 \\ \varepsilon_1 & \overline{\varepsilon_1} \end{pmatrix} \; \cdots \; p_n \begin{pmatrix} \overline{\varepsilon_n} & \varepsilon_n \\ \varepsilon_n & \overline{\varepsilon_n} \end{pmatrix} \right),
\]
\[
Q = \left( q_1 \begin{pmatrix} \overline{\sigma_1} & \sigma_1 \\ \sigma_1 & \overline{\sigma_1} \end{pmatrix} \; \cdots \; q_m \begin{pmatrix} \overline{\sigma_m} & \sigma_m \\ \sigma_m & \overline{\sigma_m} \end{pmatrix} \right),
\]
\[
P =  \begin{pmatrix} \begin{pmatrix} a_{1,1} & b_{1,1} \\ c_{1,1} & d_{1,1} \end{pmatrix} & \cdots & \begin{pmatrix} a_{1,n} & b_{1,n} \\ c_{1,n} & d_{1,n} \end{pmatrix} \\ \vdots & \ddots & \vdots \\ \begin{pmatrix} a_{m,1} & b_{m,1} \\ c_{m,1} & d_{m,1} \end{pmatrix} & \cdots & \begin{pmatrix} a_{m,n} & b_{m,n} \\ c_{m,n} & d_{m,n} \end{pmatrix} \end{pmatrix} ,
\]
respectively, and satisfy the equation $QP = W$, that is, for each $j \in [n]$,
\[
p_j \begin{pmatrix} \overline{\varepsilon_j} & \varepsilon_j \\ \varepsilon_j & \overline{\varepsilon_j} \end{pmatrix} = \sum_{i \in [m]} q_i \begin{pmatrix} \overline{\sigma_i} & \sigma_i \\ \sigma_i & \overline{\sigma_i} \end{pmatrix} \begin{pmatrix} a_{i,j} & b_{i,j} \\ c_{i,j} & d_{i,j} \end{pmatrix},
\]
i.e.,
\begin{align}
\sum_{i \in [m]} q_i (\overline{\sigma_i} a_{i,j} + \sigma_i c_{i,j}) = \sum_{i \in [m]} q_i (\sigma_i b_{i,j} + \overline{\sigma_i} d_{i,j}) = p_j \overline{\varepsilon_j}, \label{41}\\
\sum_{i \in [m]} q_i (\overline{\sigma_i} b_{i,j} + \sigma_i d_{i,j}) = \sum_{i \in [m]} q_i (\sigma_i a_{i,j} + \overline{\sigma_i} c_{i,j}) = p_j \varepsilon_j. \label{42}
\end{align}
Notice that $a_{i,j}, b_{i,j}, c_{i,j}, d_{i,j}$ are nonnegative numbers satisfying
\[
\sum_{j \in [n]} (a_{i,j} + b_{i,j}) = \sum_{j \in [n]} (c_{i,j} + d_{i,j}) = 1, \; i \in [m].
\]

For $i \in [m]$ and $j \in [n]$, let $e_{i,j}$ be a number in $[0,1]$ with
\begin{align}\label{43}
2 p_{i,j} e_{i,j} = b_{i,j} + c_{i,j},
\end{align}
where $p_{i,j} = (a_{i,j} + b_{i,j} + c_{i,j} + d_{i,j})/2$. Clearly, $(p_{i,j})_{m \times n}$ is a conditional probability matrix. Furthermore, from (\ref{41}), (\ref{42}), (\ref{43}) and
\[
2 p_{i,j} \overline{e_{i,j}} = a_{i,j} + d_{i,j},
\]
for each $j \in [n]$, we have $\sum_{i \in [m]} q_i p_{i,j} = p_j$ and
\begin{align}
\sum_{i \in [m]} q_i p_{i,j} (\overline{\sigma_i}\ \overline{e_{i,j}} + \sigma_i e_{i,j}) = p_j \overline{\varepsilon_j},\nonumber\\
\sum_{i \in [m]} q_i p_{i,j} (\sigma_i \overline{e_{i,j}} + \overline{\sigma_i} e_{i,j}) = p_j \varepsilon_j. \label{44}
\end{align}
Therefore, we have
\[
p_j \begin{pmatrix} \overline{\varepsilon_j} & \varepsilon_j \\ \varepsilon_j & \overline{\varepsilon_j} \end{pmatrix} = \sum_{i \in [m]} q_i p_{i,j} \begin{pmatrix} \overline{\sigma_i} & \sigma_i \\ \sigma_i & \overline{\sigma_i} \end{pmatrix} \begin{pmatrix} \overline{e_{i,j}} & e_{i,j} \\ e_{i,j} & \overline{e_{i,j}} \end{pmatrix},
\]
and thus $\mathrm{B}(\varepsilon_j)$ is a degradation of $\sum_{i \in [m]} \frac{q_i p_{i,j}}{p_j} \mathrm{B}(\sigma_i)$ if $p_j > 0$. According to $\varepsilon_j, \sigma_1, \dots, \sigma_m \in [0, 1/2]$, we see that there are numbers $e_{1,j}, \dots, e_{m,j} \in [0,1]$ satisfying (\ref{44}) if and only if $p_j \varepsilon_j \geq \sum_{i \in [m]} q_i p_{i,j} \sigma_i$. Hence, we see this theorem is true by setting $k_{i,j} = q_i p_{i,j}$ for $i \in [m]$ and $j \in [n]$.
\end{proof}
\subsection{Appendix B: Proof of Lemma 6}
\begin{proof}
We give a proof only for the first conclusion. The other two conclusions can be proved similarly.

Since for any $i$ we have $\sum_{j \in [m]} k'_{i,j} = q_i$, according to Theorem \ref{the02} we see easily that $W'$ is a P-degradation of $Q$. If $p'_{j'} = 0$, then we must have $p_{j'} = k_{i',j'}, \varepsilon_{j'} = \sigma_{i'} = \varepsilon_{i'+1}$ and thus $W' \cong W$. Assume $p'_{j'} \neq 0$ now. Let
\[
R = \frac{p_{j'}}{p'} \mathrm{B}(\varepsilon_{j'}) + \frac{p_{j'+1}}{p'} \mathrm{B}(\varepsilon_{j'+1}), \quad R' = \frac{p'_{j'}}{p'} \mathrm{B}(\varepsilon'_{j'}) + \frac{p'_{j'+1}}{p'} \mathrm{B}(\varepsilon'_{j'+1}),
\]
where $p' = p_{j'} + p_{j'+1} = p'_{j'} + p'_{j'+1}$. From $k_{i',j'} \neq 0$ and $\sigma_{i'} \geq \varepsilon_{j'+1} \geq \varepsilon_{j'}$, we see
\[
\varepsilon'_{j'} = \frac{\sum_{i \in [m]} k'_{i,j'} \sigma_i}{p'_{j'}} = \frac{\sum_{i \in [m] \setminus \{i'\}} k_{i,j'} \sigma_i}{ \sum_{i \in [m] \setminus \{i'\}} k_{i,j'} } = \frac{p_{j'} \varepsilon_{j'} - k_{i',j'} \sigma_{i'}}{p_{j'} - k_{i',j'}} \leq \varepsilon_{j'},
\]
\[
\varepsilon'_{j'+1} = \frac{\sum_{i \in [m]} k'_{i,j'+1} \sigma_i}{p'_{j'+1}} = \frac{k_{i',j'} \sigma_{i'} + \sum_{i \in [m]} k_{i,j'+1} \sigma_i}{k_{i',j'} + \sum_{i \in [m]} k_{i,j'+1}} = \frac{k_{i',j'} \sigma_{i'} + p_{j'+1} \varepsilon_{j'+1}}{k_{i',j'} + p_{j'+1}} \geq \varepsilon_{j'+1}.
\]
Therefore, according to
\begin{align*}
\frac{p'_{j'}}{p'} \varepsilon'_{j'} + \frac{p'_{j'}}{p'} \varepsilon'_{j'+1} &= \frac{1}{p'} \sum_{i \in [m]} (k'_{i,j'} + k'_{i,j'+1}) \sigma_i\\ &= \frac{1}{p'} \sum_{i \in [m]} (k_{i,j'} + k_{i,j'+1}) \sigma_i = \frac{p_{j'}}{p'} \varepsilon_{j'} + \frac{p_{j'}}{p'} \varepsilon_{j'+1}
\end{align*}
and Lemma \ref{lem05}, we see $R \preccurlyeq R'$. Let $T = \sum_{j \in [n] \setminus \{j',j'+1\}} \frac{p_j}{1-p'} \mathrm{B}(\varepsilon_j)$. Then, according to Corollary \ref{cor02} we have
\[
W \cong p' R + (1-p') T \preccurlyeq p' R' + (1-p') T \cong W'.
\]
The proof is complete.
\end{proof}

\subsection{Appendix C: Proof of Theorem 4}
To prove Theorem \ref{the04}, we assume
\begin{align}
W = D_Q(i_2, \dots, i_n; s_2, \dots, s_n) = \sum_{j \in [n]} p_j \mathrm{B}(\varepsilon_j), \label{45}\\
W' = D_Q(i'_2, \dots, i'_n; s'_2, \dots, s'_n) = \sum_{j \in [n]} p'_j \mathrm{B}(\varepsilon'_j), \label{46}
\end{align}
are $2n$-P$^{\ast}$-degradations of $Q = \sum_{i \in [m]} q_i \mathrm{B}(\sigma_i) \in \mathbb{B}_m^*$, where $0 \leq \sigma_1 < \cdots < \sigma_m \leq 1/2$, $0 \leq \varepsilon_1 < \cdots < \varepsilon_n \leq 1/2$ and $0 \leq \varepsilon'_1 < \cdots < \varepsilon'_n \leq 1/2$. We prove the following lemma at first.
\begin{lemma}\label{lem09}
Assume that there are integers $k, j$ such that $(i_k, s_k) = (i'_j, s'_j)$. Then, we have $p_k \geq p'_j$ if $\varepsilon_k \geq \varepsilon'_j$, and $p_{k-1} \geq p'_{j-1}$ if $\varepsilon_{k-1} \leq \varepsilon'_{j-1}$.
\end{lemma}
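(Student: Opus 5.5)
The plan is to compare the two segments that start or end at the shared splitting pattern. The key fact is that the compound crossover probability of a segment is monotone in how far the segment extends.

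By definition of $D_Q$, the particle $p_k\mathrm{B}(\varepsilon_k)$ of $W$ is compounded from the $(i_k,i_{k+1};s_k,s_{k+1})$-segment of $Q$. The particle $p'_j\mathrm{B}(\varepsilon'_j)$ of $W'$ is compounded from the $(i'_j,i'_{j+1};s'_j,s'_{j+1})$-segment. Since $(i_k,s_k)=(i'_j,s'_j)$, both segments start at the same point: the particle $s_k\mathrm{B}(\sigma_{i_k})$ followed by the full particles $q_i\mathrm{B}(\sigma_i)$ for increasing $i$. They differ only in their right endpoints. So one segment is an initial piece of the other, in the order of increasing $\sigma$.

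\textbf{Step 1 (monotonicity).} Parametrize the right endpoint by cumulative weight. Let $t\mapsto S(t)$ be the segment that starts at $(i_k,s_k)$ and has total weight $t$, and let $e(t)$ be its compound crossover probability, i.e.\ the weighted average of the $\sigma_i$ it contains. Extending the segment adds particles $\mathrm{B}(\sigma_i)$ whose $\sigma_i$ is at least every $\sigma$ already present. Hence $e(t)$ is nondecreasing in $t$.

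It is strictly increasing once the segment contains two distinct $\sigma$ values. While the segment contains only $\sigma_{i_k}$, $e(t)=\sigma_{i_k}$ is constant. So $e(t_1)<e(t_2)$ forces $t_1<t_2$, and $e(t_1)\le e(t_2)$ forces either $t_1\le t_2$ or both $e$-values equal $\sigma_{i_k}$.

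\textbf{Step 2 (first claim).} Suppose $\varepsilon_k\ge\varepsilon'_j$, and write $\varepsilon_k=e(p_k)$, $\varepsilon'_j=e(p'_j)$. If $\varepsilon_k>\varepsilon'_j$, Step 1 gives $p_k>p'_j$. If $\varepsilon_k=\varepsilon'_j$, either $p_k=p'_j$, or the common value is $\sigma_{i_k}$ and both segments consist only of $\sigma_{i_k}$-mass. In that degenerate case, use the last part of Theorem \ref{the03}: since $W$ and $W'$ are P$^{\ast}$-degradations and $\varepsilon_k=\sigma_i$, we get $p_k\ge q_i$. A segment made only of $\sigma_{i_k}$-mass has weight at most $q_{i_k}$, so $p_k=q_{i_k}\ge p'_j$.

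\textbf{Step 3 (second claim).} This is symmetric, using the segments that end at the shared pattern. The particles $p_{k-1}\mathrm{B}(\varepsilon_{k-1})$ and $p'_{j-1}\mathrm{B}(\varepsilon'_{j-1})$ come from segments ending with the particle $(q_{i_k}-s_k)\mathrm{B}(\sigma_{i_k})$. Extending such a segment to the left adds smaller $\sigma$'s, so its compound crossover probability is nonincreasing in its weight. Therefore $\varepsilon_{k-1}\le\varepsilon'_{j-1}$ gives $p_{k-1}\ge p'_{j-1}$.

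The degenerate equality case is handled as in Step 2, via Theorem \ref{the03}. When the shared boundary has $s_k=0$ or $s_k=q_{i_k}$, one side may be empty of $\sigma_{i_k}$; the same argument then applies with the first nonzero particle as the anchor.

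\textbf{Main obstacle.} The main obstacle is these equality and degenerate cases, where the crossover probability is constant and so does not determine the weight. There I rely on the P$^{\ast}$ property $p_j\ge q_i$ whenever $\varepsilon_j=\sigma_i$, together with the remark that splitting patterns with $s_l\in\{0,q_{i_l}\}$ may be shifted to an equivalent form.
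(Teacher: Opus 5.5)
Your proposal is correct and follows essentially the same proof as the paper. The paper assumes $p_k<p'_j$ and observes that the extra piece of $W'$'s segment has average $\delta\ge\varepsilon_k$, which forces all the mass onto a single $\sigma_i$ and contradicts $p_k\ge q_i$ from Theorem \ref{the03}. Your monotonicity of $e(t)$, with the same appeal to Theorem \ref{the03} in the degenerate equal-value case, is that argument stated directly rather than by contradiction.
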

\begin{proof}
Assume $\varepsilon_k \geq \varepsilon'_j$ and $p_k < p'_j$. Without loss of generality, we assume further that $q_{i'_{j+1}} - s'_{j+1} > 0$. From $p_k < p'_j$ and $(i_k, s_k) = (i'_j, s'_j)$ we see $i_{k+1} \leq i'_{j+1}$. Since $\varepsilon_k$ is a weighted average of the numbers $\sigma_{i_k}, \dots, \sigma_{i_{k+1}}$ and $\delta = (p'_j \varepsilon'_j - p_k \varepsilon_k)/(p'_j - p_k)$ is a weighted average of the numbers $\sigma_{i_{k+1}}, \dots, \sigma_{i'_{j+1}}$, from $\varepsilon_k \geq \varepsilon'_j \geq \delta$ we see $s_k = 0$ and $\sigma_{i_{k}+1} = \cdots = \sigma_{i_{k+1}} = \cdots = \sigma_{i'_{j+1}}$. Hence, we have $i'_{j+1} = i_{k+1} = i_k + 1$, $\varepsilon_k = \varepsilon'_j = \sigma_{i_{k+1}}$ and thus $p_k < p'_j \leq q_{i_{k+1}}$, contradicts that $W$ is a $2n$-P$^{\ast}$-degradations of $Q$. Hence, we have $p_k \geq p'_j$ if $\varepsilon_k \geq \varepsilon'_j$.

Assume $\varepsilon_{k-1} \leq \varepsilon'_{j-1}$ and $p_{k-1} < p'_{j-1}$ now. Without loss of generality, we assume further that $s'_{j-1} > 0$. From $p_{k-1} < p'_{j-1}$ and $(i_k, s_k) = (i'_j, s'_j)$ we see $i_{k-1} \geq i'_{j-1}$. Since $\varepsilon_{k-1}$ is a weighted average of the numbers $\sigma_{i_{k-1}}, \dots, \sigma_{i_k}$ and $\delta' = (p'_{j-1} \varepsilon'_{j-1} - p_{k-1} \varepsilon_{k-1})/(p'_{j-1} - p_{k-1})$ is a weighted average of the numbers $\sigma_{i'_{j-1}}, \dots, \sigma_{i_{k-1}}$, from $\varepsilon_{k-1} \leq \varepsilon'_{j-1} \leq \delta'$ we see $s_k = q_{i_k}$ and $\sigma_{i'_{j-1}} = \cdots = \sigma_{i_{k-1}} = \cdots = \sigma_{i_{k-1}}$. Hence, we have $i'_{j-1} = i_{k-1} = i_k - 1$, $\varepsilon_{k-1} = \varepsilon'_{j-1} = \sigma_{i_{k-1}}$ and thus $p_{k-1} < p'_{j-1} \leq q_{i_{k-1}}$, contradicts that $W$ is a $2n$-P$^{\ast}$-degradations of $Q$. Hence, we have $p_{k-1} \geq p'_{j-1}$ if $\varepsilon_{k-1} \leq \varepsilon'_{j-1}$.
\end{proof}
Now we give a proof for Theorem \ref{the04}.
\begin{proof}
Assume, in contrast, that the channels (\ref{45}) and (\ref{46}) are distinct $2n$-P$^{\ast}$-degradations of $Q$ such that $W \preccurlyeq W'$ and their splitting patterns differ at less than three positions.

For the first case that the splitting patterns of $W$ and $W'$ differ at just one position, let $l$ be the unique number with $(i_j, s_j) \neq (i'_j, s'_j)$. According to Theorem \ref{the03} and Corollary \ref{cor02}, we have $p = p_{j-1} + p_j = p'_{j-1} + p'_j$ and $\frac{p_{j-1}}{p} \mathrm{B}(\varepsilon_{j-1}) + \frac{p_j}{p} \mathrm{B}(\varepsilon_j) \preccurlyeq_{\mathrm{P}} \frac{p'_{j-1}}{p} \mathrm{B}(\varepsilon'_{j-1}) + \frac{p'_j}{p} \mathrm{B}(\varepsilon'_j)$. Then, from Lemma \ref{lem04} we see $\varepsilon'_{j-1} \leq \varepsilon_{j-1}$ and $\varepsilon'_j \geq \varepsilon_j$. According to Lemma \ref{lem09}, from $(i_{j-1}, s_{j-1}) = (i'_{j-1}, s'_{j-1})$ and $\varepsilon'_{j-1} \leq \varepsilon_{j-1}$ we see $p'_{j-1} \leq p_{j-1}$, from $(i_{j+1}, s_{j+1}) = (i'_{j+1}, s'_{j+1})$ and $\varepsilon'_j \geq \varepsilon_j$ we see $p'_j \leq p_j$. Therefore, from $p_{j-1} + p_j = p'_{j-1} + p'_j$ we see $p_{j-1} = p'_{j-1}, p_j = p'_j$ and $W \cong W'$, contradicts our assumption.

For the second case that the splitting patterns of $W, W'$ differ at two consecutive positions, let $j$ be the unique number such that $(i_j, s_j) \neq (i'_j, s'_j)$ and $(i_{j+1}, s_{j+1}) \neq (i'_{j+1}, s'_{j+1})$. From Theorem \ref{the03} and Corollary \ref{cor02} we see $p = p_{j-1} + p_j + p_{j+1} = p'_{j-1} + p'_j + p'_{j+1}$ and
\begin{align}\label{47}
\sum_{l=1}^3 \frac{p_{l+j-2}}{p} \mathrm{B}(\varepsilon_{l+j-2}) \preccurlyeq_{\mathrm{P}} \sum_{l=1}^3 \frac{p'_{l+j-2}}{p} \mathrm{B}(\varepsilon'_{l+j-2}).
\end{align}
Hence, from Lemma \ref{lem04} we have $\varepsilon'_{j-1} \leq \varepsilon_{j-1}$ and $\varepsilon'_{j+1} \geq \varepsilon_{j+1}$. According to Lemma \ref{lem09}, one can show easily $x = p_{j-1} - p'_{j-1} > 0$ and $y = p_{j+1} - p'_{j+1} > 0$. Hence, we have $i'_j \leq i_j$ and $i_{j+1} \leq i'_{j+1}$. Let $\delta_1 = (p_{j-1} \varepsilon_{j-1} - p'_{j-1} \varepsilon'_{j-1})/x$ and $\delta_2 = (p_{j+1} \varepsilon_{j+1} - p'_{j+1} \varepsilon'_{j+1})/y$. Since $\delta_1$ is a weighted average of the numbers $\sigma_{i'_j}, \dots, \sigma_{i_j}$ and $\delta_2$ is a weighted average of the numbers $\sigma_{i_{j+1}}, \dots, \sigma_{i'_{j+1}}$, we have $0 \leq \varepsilon_{j-1} \leq \delta_1 \leq \varepsilon_j \leq \delta_2 \leq \varepsilon_{j+1} \leq \frac{1}{2}$ and $\delta_1 < \delta_2$. According to Theorem \ref{the02}, there is a 1-matrix $(r_{i,j})_{3 \times 3}$ such that
\begin{align}
p(r_{1,1} + r_{1,2} + r_{1,3}) &= p'_{j-1} = p_{j-1} - x, \label{48}\\
p(r_{2,1} + r_{2,2} + r_{2,3}) &= p'_j = p_j + x + y, \label{49}\\
p(r_{3,1} + r_{3,2} + r_{3,3}) &= p'_{j+1} = p_{j+1} - y, \label{50}\\
p(r_{1,l} + r_{2,l} + r_{3,l}) &= p_{j+l-2}, \; l = 1,2,3, \label{51}\\
p(r_{1,l} \varepsilon'_{j-1} + r_{2,l} \varepsilon'_j + r_{3,l} \varepsilon'_{j+1}) &= \varepsilon_{j+l-2} p_{j+l-2}, \; l = 1,2,3. \label{52}
\end{align}
Then from (\ref{48}), (\ref{49}) and (\ref{51}) we have
\begin{align*}
&p(r_{1,2} + r_{1,3} - r_{3,1} + r_{2,2} + r_{2,3})\\ =& (p_{j-1} - x - p r_{1,1}) - p r_{3,1} + (p_j + x + y - p r_{2,1}) = p_j + y
\end{align*}
and thus from
\begin{align*}
p'_j \varepsilon'_j &= p_{j-1} \varepsilon_{j-1} + p_j \varepsilon_j + p_{j+1} \varepsilon_{j+1} - p'_{j-1} \varepsilon'_{j-1} - p'_{j+1} \varepsilon'_{j+1}\\ &= p_j \varepsilon_j + x \delta_1 + y \delta_2,
\end{align*}
(\ref{50}) and (\ref{52}) we see
\begin{align*}
&p(r_{1,2} + r_{1,3})(\varepsilon'_{j-1} - \varepsilon'_j) + p r_{3,1}(\varepsilon'_j - \varepsilon'_{j+1})\\
=& p_j \varepsilon_j + p_{j+1} \varepsilon_{j+1} - p(r_{3,1} + r_{3,2} + r_{3,3}) \varepsilon'_{j+1} - p(r_{1,2} + r_{1,3} - r_{3,1} + r_{2,2} + r_{2,3}) \varepsilon'_j\\
=&p_{j}\varepsilon_{j}+p_{j+1}\varepsilon_{j+1}-p'_{j+1}\varepsilon'_{j+1}-(p_{j}+y)\varepsilon'_{j}\\
=& p_j \varepsilon_j + p_{j+1} \varepsilon_{j+1} - (p_{j+1}\varepsilon_{j+1} - y \delta_2) - \frac{(p_j + y)(p_j \varepsilon_j + x \delta_1 + y \delta_2)}{p_j + x + y}\\
=& \frac{xy(\delta_2 - \delta_1) + p_j x(\varepsilon_j - \delta_1)}{p_j + x + y} > 0,
\end{align*}
contradicts to $\varepsilon'_{j-1} < \varepsilon'_j < \varepsilon'_{j+1}$.

For the last case that the splitting patterns of $W, W'$ differ at two separated positions, assume that $j, k$ are the integers such that $2 \leq j < k \leq n-1$, $(i_j, s_j) \neq (i'_j, s'_j)$ and $(i_{k+1}, s_{k+1}) \neq (i'_{k+1}, s'_{k+1})$. According to Theorem \ref{the03} and Corollary \ref{cor02} we see
\[
p_{j-1} + p_j = p'_{j-1} + p'_j, \; p_{j-1} \varepsilon_{j-1} + p_j \varepsilon_j = p'_{j-1} \varepsilon'_{j-1} + p'_j \varepsilon'_j,
\]
\[
p_k + p_{k+1} = p'_k + p'_{k+1}, \; p_k \varepsilon_k + p_{k+1} \varepsilon_{k+1} = p'_k \varepsilon'_k + p'_{k+1} \varepsilon'_{k+1},
\]
and that
\[
\frac{p_{j-1}}{p} \mathrm{B}(\varepsilon_{j-1}) + \frac{p_j}{p} \mathrm{B}(\varepsilon_j) + \frac{p_k}{p} \mathrm{B}(\varepsilon_k) + \frac{p_{k+1}}{p} \mathrm{B}(\varepsilon_{k+1})
\]
is a P-degradation of
\[
\frac{p'_{j-1}}{p} \mathrm{B}(\varepsilon'_{j-1}) + \frac{p'_j}{p} \mathrm{B}(\varepsilon'_j) + \frac{p'_k}{p} \mathrm{B}(\varepsilon'_k) + \frac{p'_{k+1}}{p} \mathrm{B}(\varepsilon'_{k+1}).
\]
Then, from Theorem \ref{the02}, there is a 1-matrix $(r_{i,j})_{4 \times 4}$ such that
\[
r_{1,1} + r_{1,2} + r_{1,3} + r_{1,4} = p'_{j-1}/p,
\]
\[
r_{2,1} + r_{2,2} + r_{2,3} + r_{2,4} = p'_j/p,
\]
\[
r_{3,1} + r_{3,2} + r_{3,3} + r_{3,4} = p'_k/p,
\]
\[
r_{4,1} + r_{4,2} + r_{4,3} + r_{4,4} = p'_{k+1}/p,
\]
\[
r_{1,1} + r_{2,1} + r_{3,1} + r_{4,1} = p_{j-1}/p,
\]
\[
r_{1,2} + r_{2,2} + r_{3,2} + r_{4,2} = p_j/p,
\]
\[
r_{1,3} + r_{2,3} + r_{3,3} + r_{4,3} = p_k/p,
\]
\[
r_{1,4} + r_{2,4} + r_{3,4} + r_{4,4} = p_{k+1}/p,
\]
\[
r_{1,1} \varepsilon'_{j-1} + r_{2,1} \varepsilon'_j + r_{3,1} \varepsilon'_k + r_{4,1} \varepsilon'_{k+1} = p_{j-1} \varepsilon_{j-1}/p,
\]
\[
r_{1,2} \varepsilon'_{j-1} + r_{2,2} \varepsilon'_j + r_{3,2} \varepsilon'_k + r_{4,2} \varepsilon'_{k+1} = p_j \varepsilon_j/p,
\]
\[
r_{1,3} \varepsilon'_{j-1} + r_{2,3} \varepsilon'_j + r_{3,3} \varepsilon'_k + r_{4,3} \varepsilon'_{k+1} = p_k \varepsilon_k/p,
\]
\[
r_{1,4} \varepsilon'_{j-1} + r_{2,4} \varepsilon'_j + r_{3,4} \varepsilon'_k + r_{4,4} \varepsilon'_{k+1} = p_{k+1} \varepsilon_{k+1}/p.
\]
Then, from
\begin{align*}
&(r_{1,1} + r_{2,1} + r_{3,1} + r_{4,1}) + (r_{1,2} + r_{2,2} + r_{3,2} + r_{4,2})\\
=& p_{j-1}/p + p_j/p = p'_{j-1}/p + p'_j/p\\
=& (r_{1,1} + r_{1,2} + r_{1,3} + r_{1,4}) + (r_{2,1} + r_{2,2} + r_{2,3} + r_{2,4})
\end{align*}
we see
\begin{align}\label{53}
(r_{3,1} + r_{3,2}) + (r_{4,1} + r_{4,2}) = (r_{1,3} + r_{1,4}) + (r_{2,3} + r_{2,4}).
\end{align}
From
\begin{align*}
&(r_{1,1} + r_{1,2}) \varepsilon'_{j-1} + (r_{2,1} + r_{2,2}) \varepsilon'_j + (r_{3,1} + r_{3,2}) \varepsilon'_k + (r_{4,1} + r_{4,2}) \varepsilon'_{k+1}\\
=& (p_{j-1} \varepsilon_{j-1} + p_j \varepsilon_j)/p = (p'_{j-1} \varepsilon'_{j-1} + p'_j \varepsilon'_j)/p\\
=& (r_{1,1} + r_{1,2} + r_{1,3} + r_{1,4}) \varepsilon'_{j-1} + (r_{2,1} + r_{2,2} + r_{2,3} + r_{2,4}) \varepsilon'_j
\end{align*}
we see
\begin{align}\label{54}
(r_{3,1} + r_{3,2}) \varepsilon'_k + (r_{4,1} + r_{4,2}) \varepsilon'_{k+1} = (r_{1,3} + r_{1,4}) \varepsilon'_{j-1} + (r_{2,3} + r_{2,4}) \varepsilon'_j.
\end{align}
From (\ref{53}), (\ref{54}) and $\varepsilon'_{j-1} < \varepsilon'_j < \varepsilon'_k < \varepsilon'_{k+1}$, we see
\[
r_{3,1} + r_{3,2} = r_{4,1} + r_{4,2} = r_{1,3} + r_{1,4} = r_{2,3} + r_{2,4} = 0.
\]
Hence, we have
\[
p_{j-1} \mathrm{B}(\varepsilon_{j-1}) + p_j \mathrm{B}(\varepsilon_j) \preccurlyeq_{\mathrm{P}} p'_{j-1} \mathrm{B}(\varepsilon'_{j-1}) + p'_j \mathrm{B}(\varepsilon'_j),
\]
which implies that from $W$ one can obtain a new $2n$-P$^{\ast}$-degradation $W''$ of $Q$ by replacing the splitting pattern $(i_j, s_j)$ with $(i'_j, s'_j)$ such that $W \preccurlyeq W''$, this is impossible.
\end{proof}

\subsection{Appendix D: Proof of Lemma 7}
Clearly, we have the following lemma.
\begin{lemma}\label{lem10}
Let $f(x)$ be a function whose second derivative $f''(x)$ is positive on the interval $(a,b)$. Then for $a < c < b$ the function $f(x)$ is
\begin{enumerate}
\item decreasing on $(a,c]$ if $f'(c) \leq 0$,
\item increasing on $[c,b)$ if $f'(c) \geq 0$.
\end{enumerate}
\end{lemma}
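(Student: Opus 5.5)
This is the elementary convexity fact, so I will use the mean value theorem (equivalently, monotonicity of the derivative). Since $f''>0$ on $(a,b)$, the derivative $f'$ is strictly increasing on $(a,b)$. Everything else follows by comparing $f'$ with its value at $c$.

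For the first assertion, suppose $f'(c)\le 0$. For every $x\in(a,c)$ we then have $f'(x)<f'(c)\le 0$. Take any $a<x_1<x_2\le c$. The mean value theorem gives some $\xi\in(x_1,x_2)\subset(a,c)$ with
\[
f(x_2)-f(x_1)=f'(\xi)(x_2-x_1)<0 .
\]
So $f$ is (strictly) decreasing on $(a,c]$. Including the endpoint $c$ causes no difficulty, because $\xi$ always lies in the open interval $(x_1,x_2)$, so $\xi<c$.

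The second assertion is symmetric. If $f'(c)\ge 0$, then $f'(x)>f'(c)\ge 0$ for every $x\in(c,b)$. For $c\le x_1<x_2<b$, the mean value theorem produces $\xi\in(x_1,x_2)$ with $f'(\xi)>0$. Hence $f(x_2)>f(x_1)$, and $f$ is increasing on $[c,b)$.

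There is no real obstacle; the only point to watch is the endpoint $c$, which the argument above handles. For the intended use in Lemma~\ref{lem07}, this lemma will be applied to $F_{\sigma,\varepsilon_1,p_1,\varepsilon_2,p_2}$. There the second derivative is the sum of two expressions of the form (\ref{31}), hence positive whenever $\varepsilon_1<\sigma<\varepsilon_2$. The value $F'(0)$ is given by (\ref{33}), and its sign is determined by $\sigma-\phi(\varepsilon_1,\varepsilon_2)$, since the logarithm factor is positive.
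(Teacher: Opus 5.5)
Your mean value theorem argument is correct, including the care at the endpoint $c$, and it is exactly the standard justification the paper has in mind: the paper introduces this lemma with ``Clearly'' and gives no proof. One small correction to your closing aside: in the application to Lemma~\ref{lem07}, $F''$ is positive even when $\sigma=\varepsilon_1$ or $\sigma=\varepsilon_2$. At either endpoint only one of the two terms of the form (\ref{31}) vanishes, and the other stays positive because $\varepsilon_1<\varepsilon_2$.
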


Now we give a proof for Lemma \ref{lem07}.
\begin{proof}
For $a, t \in (0, 1/2)$, let
\[
g_a(t) = a \ln \frac{\overline{t}a}{\overline{a}t} - \ln \frac{\overline{t}}{\overline{a}}.
\]
Then, we have $g'_a(t) = (t - a)/(\overline{t}t)$ and thus $g_a(t)$ is decreasing on $(0,a]$ and increasing on $[a, 1/2)$. Hence, we have $g_a(t) > g_a(a) = 0$ for $t \in (0, 1/2) \setminus \{a\}$. Then, from $\ln((\overline{\varepsilon_1} \varepsilon_2)/(\overline{\varepsilon_2} \varepsilon_1)) > 0$ we see
\[
\phi(\varepsilon_1, \varepsilon_2) - \varepsilon_2 = \frac{\ln(\overline{\varepsilon_1}/\overline{\varepsilon_2}) - \varepsilon_2 \ln((\overline{\varepsilon_1} \varepsilon_2)/(\overline{\varepsilon_2} \varepsilon_1))}{\ln((\overline{\varepsilon_1} \varepsilon_2)/(\overline{\varepsilon_2} \varepsilon_1))} = \frac{-g_{\varepsilon_2}(\varepsilon_1)}{\ln((\overline{\varepsilon_1} \varepsilon_2)/(\overline{\varepsilon_2} \varepsilon_1))} < 0,
\]
\[
\phi(\varepsilon_1, \varepsilon_2) - \varepsilon_1 = \frac{\ln(\overline{\varepsilon_1}/\overline{\varepsilon_2}) - \varepsilon_1 \ln((\overline{\varepsilon_1} \varepsilon_2)/(\overline{\varepsilon_2} \varepsilon_1))}{\ln((\overline{\varepsilon_1} \varepsilon_2)/(\overline{\varepsilon_2} \varepsilon_1))} = \frac{g_{\varepsilon_1}(\varepsilon_2)}{\ln((\overline{\varepsilon_1} \varepsilon_2)/(\overline{\varepsilon_2} \varepsilon_1))} > 0,
\]
and thus (\ref{35}) follows.

According to $\varepsilon_1 \leq \sigma \leq \varepsilon_2$ and $p_1 + p_2 \leq 1$ we see the joint set of $(-p_2 b_{\sigma, \varepsilon_2}, 1 - p_2)$ and $(p_1 - 1, p_1 b_{\sigma, \varepsilon_1})$ is
\[
(-p_2(1 - 2 \varepsilon_2)/(1 - 2 \sigma), p_1 \varepsilon_1/\sigma) = (-p_2 b_{\sigma, \varepsilon_2}, p_1 b_{\sigma, \varepsilon_1})
\]
and thus (\ref{36}) follows from
\[
e_{\sigma, \varepsilon_1, p_1}(-x) + \frac{p_1(\sigma - \varepsilon_1)}{p_1 - x} = \sigma = e_{\sigma, \varepsilon_2, p_2}(x) - \frac{p_2(\varepsilon_2 - \sigma)}{p_2 + x}.
\]

From (\ref{31}) and (\ref{32}), we see that $F''_{\sigma, \varepsilon_1, p_1, \varepsilon_2, p_2}(x)$ is positive for each $x \in (-p_2 b_{\sigma, \varepsilon_2}, p_1 b_{\sigma, \varepsilon_1})$. If $\sigma \leq \phi(\varepsilon_1, \varepsilon_2)$, from (\ref{33}) we see $F'_{\sigma, \varepsilon_1, p_1, \varepsilon_2, p_2}(0) \leq 0$ and, according to Lemma \ref{lem10}, $F_{\sigma, \varepsilon_1, p_1, \varepsilon_2, p_2}(x)$ is decreasing on $(-p_2 b_{\sigma, \varepsilon_2}, 0]$. If $\sigma \geq \phi(\varepsilon_1, \varepsilon_2)$, from (\ref{33}) we see $F'_{\sigma, \varepsilon_1, p_1, \varepsilon_2, p_2}(0) \geq 0$ and, according to Lemma \ref{lem10}, $F_{\sigma, \varepsilon_1, p_1, \varepsilon_2, p_2}(x)$ is increasing on $[0, p_1 b_{\sigma, \varepsilon_1})$.
\end{proof}
\end{document}